\documentclass[11pt]{article}
\usepackage{amsmath}
\usepackage{amsfonts}
\usepackage{amssymb}
\usepackage{dsfont}
\usepackage{mathrsfs}                  
\usepackage{amsthm}
\usepackage{enumitem}
\usepackage{amscd}
\usepackage{float}
\usepackage{xcolor}
\usepackage[hyperfootnotes=false]{hyperref}
\usepackage{mathtools}
\usepackage{geometry}
\usepackage{bbm}
\usepackage{bm}
\usepackage{bbm}
\usepackage{graphicx}
\usepackage[numbers,sort]{natbib}

\usepackage{subfigure}

\geometry{
  includeheadfoot,
  margin=2.54cm
}
\numberwithin{equation}{section}
\setlength\parindent{0pt}

\theoremstyle{plain}
\newtheorem{theorem}{Theorem}[section]
\newtheorem{proposition}[theorem]{Proposition}

\newtheorem*{lemma*}{Lemma}
\newtheorem{corollary}[theorem]{Corollary}
\newtheorem{assumption}{Assumption}

\theoremstyle{definition}
\newtheorem{definition}[theorem]{Definition}

\theoremstyle{remark}
\newtheorem{remark}[theorem]{Remark}

\DeclareMathOperator*{\esssup}{ess\,sup}
\DeclareMathOperator*{\essinf}{ess\,inf}

\newcommand{\E}{\mathbb{E}}
\newcommand{\R}{\mathbb{R}}

\newcommand{\N}{\mathbb{N}}

\newcommand*\samethanks[1][\value{footnote}]{\footnotemark[#1]}

\renewcommand{\P}{\textbf{P}}

\newcommand{\Fc}{\mathcal{F}}

\newcommand{\Yc}{\mathcal{Y}}

\newcommand{\activF}{\sigma}

\def \bE {{\mathbb{E}}}
\def \bF {{\mathbb{F}}}
\def \bN {{\mathbb{N}}}
\def \bR {{\mathbb{R}}}

\def \cA {{\mathcal{A}}}
\def \cB {{\mathcal{B}}}

\def \cF {{\mathcal{F}}}

\def \cP {{\mathcal{P}}}

\def \cR {{\mathcal{R}}}

\def \cU {{\mathcal{U}}}
\def \cV {{\mathcal{V}}}

\def \cY {{\mathcal{Y}}}

\def \bP {{\textbf{P}}}

\begin{document}
	
\title{Neural network approximation for superhedging prices}

\author{Francesca Biagini\thanks{Workgroup Financial and Insurance Mathematics, Department of Mathematics, Ludwig-Maximilians Universit{\"a}t, Theresienstra{\ss}e 39, 80333 Munich, Germany. Emails: biagini@math.lmu.de, gonon@math.lmu.de, reitsam@math.lmu.de.} \and Lukas Gonon\samethanks[1] \and Thomas Reitsam\samethanks[1] \thanks{The financial support of the Verein zur Versicherungswissenschaft M\"unchen e.V. is gratefully acknowledged} }

\maketitle

\abstract{\textit{This article examines neural network-based approximations for the superhedging price process of a contingent claim  in a discrete time market model. First we prove that the $\alpha$-quantile hedging price converges to the superhedging price at time $0$ for $\alpha$ tending to $1$, and show that the $\alpha$-quantile hedging price can be approximated by a neural network-based price. This provides a neural network-based approximation for the superhedging price at time $0$ and also the superhedging strategy up to maturity. To obtain the superhedging price process for $t>0$, by using the Doob decomposition it is sufficient to determine the process of consumption. We show that it can be approximated by the essential supremum over a set of neural networks. Finally, we present numerical results.}}\bigskip

\noindent
\textbf{Keywords:}\begin{small} Deep learning; Superhedging; Quantile hedging\end{small}
\smallbreak\noindent
\textbf{Mathematics Subject Classification (2020):} 91G15, 91G20, 60H30
\smallbreak\noindent
\textbf{JEL Classification:} C45
\bigskip

\section{Introduction}
\label{sec:introduction}

In this paper we study neural network approximations for the superhedging price process for a contingent claim in discrete time. \\
Superhedging was first introduced in \cite{quenez} and then thoroughly studied in various settings and market models. It is impossible to cover the complete literature here, but we name just a few references. For instance, in continuous time, for general c\`adl\`ag processes we mention \cite{kramkovduality}, for robust superhedging \cite{nutz2015robust}, \cite{touzi2014martingale}, for pathwise superhedging on prediction sets \cite{bartl2020pathwise}, \cite{bartl2019duality}, or for superhedging under proportional transaction costs \cite{campischachermayer}, \cite{cvitanic}, \cite{kabanovlast}, \cite{schachermayersuperreplication}, \cite{soner1995there}. Also in discrete time there are various studies in the literature, like the standard case \cite{follmerschied}, robust superhedging \cite{carassus2019robust}, \cite{obloj2021robust}, superhedging under volatility uncertainty \cite{nutz2012superhedging}, or model-free superhedging \cite{burzoni2017model}. The superhedging price provides an opportunity to secure a claim, but it may be too high or reduce the probability to profit from the option. In order to solve this problem, quantile hedging was introduced in \cite{foellmer99}, where the authors propose to either fix the initial capital and maximize the probability of superhedging with this capital or fix a probability of superhedging and minimize the required capital. In this way a trader can determine the desired trade-off between costs and risk. \\
In certain situations it is possible to calculate explicitly or recursively the superhedging or quantile hedging price, see e.g.\ \cite{carassus2007class}, but in general incomplete markets it may be complicated to determine superhedging prices or quantile hedging prices. In this article we investigate neural network-based approximations for quantile- and superhedging prices. Neural network-based methods have been recently introduced in financial mathematics, for instance for hedging derivatives, see \cite{buehler2018}, determining stopping times, see \cite{becker2019deep}, or calibration of stochastic volatility models, see \cite{cuchiero2020generative}, and many more. For an overview of applications of machine learning to hedging and option pricing we refer to \cite{ruf2020neural} and the references therein. \\
This paper contributes to the literature on hedging in discrete time market models in several ways. First, we prove that the $\alpha$-quantile hedging price converges to the superhedging price for $\alpha$ tending to $1$. Further, we show that it is feasible to approximate the $\alpha$-quantile hedging and thus also the superhedging price for $t=0$ by neural networks. We extend our machine learning approach also to approximate the superhedging price process for $t>0$. By the first step we obtain an approximation for the superhedging strategy on the whole interval up to maturity. By using the uniform Doob decomposition, see \cite{follmerschied}, we then only need to approximate the process of consumption $B$ to generate the superhedging price process. We prove that $B$ can be obtained as the the essential supremum over a set of neural networks. Finally, we present and discuss numerical results for the proposed neural network methods. \\
The paper is organized as follows. In Section \ref{sec:prelim}, we present the discrete time market model of \cite{follmerschied} and recall essential definitions and results on superhedging. Section \ref{sec:price0} contains the study of the superhedging price for $t=0$. More specifically, in Section \ref{sec:quantilehedging} we prove in Theorem \ref{thm:convergencesuccessset} that the $\alpha$-quantile hedging price converges to the superhedging price as $\alpha$ tends to $1$. We also present a similar result in Corollary \ref{cor:successratioconvergence} in Section \ref{sec:successratio}, where $\alpha$-quantile hedging is given in terms of success ratios. In Section \ref{sec:NNprice0} we show in Theorem \ref{thm:universalapprox2} that the superhedging price can be approximated by neural networks. This concludes the approximation for $t=0$. Then, we consider the case for $t>0$ in Section \ref{sec:superhedgingt}. In Section \ref{sec:uniformDoob}, we explain how the uniform Doob decomposition can be used to approximate the superhedging price process. In that account, we prove an explicit representation of the process of consumption, see Proposition \ref{prop:optimizationB}. Proposition \ref{prop:NNconsumption} and Theorem \ref{thm:NNgeneralt} show that the process of consumption and thus the superhedging price process can be approximated by neural networks. The numerical results are presented in Section \ref{sec:numericalresults}. The section is divided in the case $t=0$, see Section \ref{sec:case0}, and $t>0$, see Section \ref{sec:caset}. We present details on the algorithm and the implementation. Appendix \ref{sec:appendixNN} contains a version of the universal approximation theorem, derived from \cite{hornik1991}.

\section{Preliminaries}
\label{sec:prelim}
In this section we introduce the discrete time financial market model from \cite{follmerschied} and recall some basic notions on superhedging. \\
Consider a finite time horizon $T>0$. Let $(\Omega,\cF,\bP)$ be a probability space endowed with a filtration $\bF:=(\cF_t)_{t=0,1,\dots,T}$. We assume $\cF_t=\sigma(Y_0,\dots,Y_t)$ for $t=0,\dots,T$ and for some $\bR^m$-valued process $Y=(Y_t)_{t=0,\dots,T}$ for some $m\in\bN$, and write $\cY_t=(Y_0,\dots,Y_t)$ for $t\geq 0$. Further, we suppose that $\cF=\cF_T$ and that $Y_0$ is constant $\bP$-a.s. Then $\cF_0=\{\emptyset,\Omega\}$.\\
In our market model on $(\Omega,\cF,\bF,\bP)$ the asset prices are modeled by a non-negative, adapted, stochastic process 
\[
\bar S=(S^0,S)=(S_t^0,S_t^1,\dots,S_t^d)_{t=0,1,\dots,T},
\]
with $d\geq 1$, $d\in\bN$. In particular, $m\geq d$. Further, we assume that 
\[
S_t^0>0 \quad\bP\text{-a.s. for all }t=0,1,\dots,T,
\]
and define $S^0=(S_t^0)_{t=0,1,\dots,T}$ to be the num\'eraire. The discounted price process $\bar X=(X^0,X)=(X_t^0,X_t^1,\dots,X_t^d)_{t=0,1,\dots,T}$ is given by
\[
X_t^i:=\frac{S_t^i}{S_t^0},\quad t=0,1,\dots,T,\ i=0,\dots,d.
\]
A probability measure $\bP^*$ is called an equivalent martingale measure if $\bP^*$ is equivalent to $\bP$ and $X$ is a $\bP^*$-martingale. We denote by $\cP$ the set of all equivalent martingale measures for $X$ and assume $\cP\neq \emptyset$. By Theorem 5.16 of \cite{follmerschied} this is equivalent to the market model being arbitrage-free.
\begin{definition}
	A trading strategy is a predictable $\bR^{d+1}$-valued process 
	\[
	\bar\xi=(\xi^0,\xi)=(\xi_t^0,\xi_t^1,\dots,\xi_t^d)_{t=1,\dots,T}.
	\]
	The (discounted) value process $V=(V_t)_{t=0,\dots,T}$ associated with a trading strategy $\bar\xi$ is given by
	\[
	V_0:=\bar\xi_1\cdot \bar X_0\quad\text{and}\quad V_t:=\bar\xi_t\cdot \bar X_t\quad \text{for }t=1,\dots,T.
	\]
	A trading strategy $\bar\xi$ is called self-financing if
	\[
	\bar\xi_t\cdot\bar S_t=\bar\xi_{t+1}\cdot\bar S_t\quad\text{for }t=1,\dots,T-1.
	\]
	A self-financing trading strategy is called an \emph{admissible strategy} if its value process satisfies $V_T\geq 0$.\\
	By $\cA$ we denote the set of all admissible strategies $\bar\xi$ and by $\cV$ the associated value processes, i.e.,
	\[
	\cV:=\left\lbrace V=(V_t)_{t=0,1,\dots,T}:\, V_t=\bar\xi_t\cdot\bar X_t\text{ for }t=0,\dots,T,\text{ and } \bar\xi\in\cA\right\rbrace
	\]
\end{definition}\noindent
By Proposition 5.7 of \cite{follmerschied}, a trading strategy $\bar\xi$ is self-financing if and only if
	\[
	V_t=V_0+\sum_{k=1}^t\xi_k\cdot(X_k-X_{k-1})\quad\text{for all }t=0,\dots,T,
	\]
	with $V_0:=\bar\xi_1\cdot \bar X_0$. In particular, given a $\bR^d$-valued predictable process $\xi$ and $V_0\in\bR$, the pair $(V_0,\xi)$ uniquely defines a self-financing strategy. 

\begin{remark}
	By Theorem 5.14 of \cite{follmerschied}, $V_T\geq 0$ $\bP$-a.s. implies that $V_t\geq 0$ $\bP$-a.s. for all $t=0,\dots,T$, where $V$ denotes the value process of a self-financing strategy. More precisely, Theorem 5.14 of \cite{follmerschied} guarantees that if $\bar\xi$ is a self-financing strategy and its value process $V$ satisfies $V_T\geq 0$, then $V$ is a $\bP^*$-martingale for any $\bP^*\in\cP$. In particular, in the proof the martingale property of $X$ and Proposition 5.7 of \cite{follmerschied} is used successively in the following way:
	\[
	\bE^*[V_T\mid\cF_t]=\bE^*[V_{T-1}+\xi_T\cdot(X_T-X_{T-1})\mid\cF_{T-1}]=V_{T-1}+\xi_T\cdot\bE^*[X_T-X_{T-1}\mid\cF_{T-1}]=V_{T-1}.
	\]
\end{remark}
A discounted European contingent claim is represented by a non-negative, $\cF_T$-measurable random variable $H$ such that
\[
\sup_{\bP^*\in\cP}\bE^*[H]<\infty.
\]

\begin{definition}
\label{def:superhedging}
	Let $H$ be a European contingent claim. A self-financing trading strategy $\bar\xi$ whose value process $V$ satisfies 
	\[
	V_T\geq H\quad\bP\text{-a.s.}
	\]
	is called a \emph{superhedging strategy} for $H$. In particular, any superhedging strategy is admissible since $H\geq 0$ by definition. 
\end{definition}\noindent
The upper Snell envelope for a discounted European claim $H$ is defined by
\[
U_t^\uparrow(H)=U_t^\uparrow:=\esssup_{\bP^*\in\cP}\bE^*[H\mid\cF_t],\quad\text{for }t=0,1,\dots,T.
\]
\begin{corollary}[Corollary 7.3, Theorem 7.5, Corollary 7.15, \cite{follmerschied}]
	\label{cor:resultsfollmerschied}
	The process
	\[
	\left(\esssup_{\bP^*\in\cP}\bE^*[H\mid\cF_t]\right)_{t=0,1,\dots,T},
	\]
	is the smallest $\cP$-supermartingale whose terminal value dominates $H$. Furthermore, there exists an adapted increasing process $B=(B_t)_{t=0,\dots,T}$ with $B_0=0$ and a $d$-dimensional predictable process $\xi=(\xi_t)_{t=1,\dots,T}$ such that
	\begin{equation}
		\label{eq:uniformdoob}
		\esssup_{\bP^*\in\cP}\bE^*[H\mid\cF_t]=\sup_{\bP^*\in\cP}\bE^*[H]+\sum_{k=1}^t\xi_k\cdot(X_k-X_{k-1})-B_t\quad\bP\text{-a.s. for all }t=0,\dots,T.
	\end{equation}
	Moreover, $\esssup_{\bP^*\in\cP}\bE^*[H\mid\cF_t]=\essinf\cU_t$ and
	\begin{equation}
		\label{eq:uniformdoobdominates}
		\esssup_{\bP^*\in\cP}\bE^*[H\mid\cF_t]+\sum_{k=t+1}^T\xi_k\cdot(X_k-X_{k-1})\geq H,\quad\text{for all }t=0,\dots,T.
	\end{equation}
\end{corollary}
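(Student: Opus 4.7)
I would split the statement into three independent claims and handle them in turn: (i) $U_t:=\esssup_{\bP^*\in\cP}\bE^*[H\mid\cF_t]$ is the smallest $\cP$-supermartingale dominating $H$, (ii) the uniform Doob decomposition \eqref{eq:uniformdoob} exists, and (iii) the essential-infimum characterisation and the dominance inequality \eqref{eq:uniformdoobdominates}.

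For (i), the first ingredient is to show that, for each $t$, the family $\{\bE^*[H\mid\cF_{t+1}]:\bP^*\in\cP\}$ is upward directed. The standard trick is to ``paste'' two measures $\bP_1^*,\bP_2^*\in\cP$ by taking one on the event where its conditional expectation is larger and the other on its complement; a Bayes-rule calculation together with the $X$-martingale property shows the pasted measure still lies in $\cP$. Consequently the essential supremum is a monotone limit of a sequence $\bE^*_{n}[H\mid\cF_{t+1}]$, and monotone convergence applied under any $\bP^*\in\cP$ gives $\bE^*[U_{t+1}\mid\cF_t]\le U_t$, i.e.\ $U$ is a $\cP$-supermartingale. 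Minimality is immediate: if $\tilde U$ is another $\cP$-supermartingale with $\tilde U_T\ge H$, then for every $\bP^*\in\cP$ the supermartingale property gives $\tilde U_t\ge \bE^*[\tilde U_T\mid \cF_t]\ge \bE^*[H\mid\cF_t]$, so taking the essential supremum over $\cP$ yields $\tilde U_t\ge U_t$.

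The main obstacle is (ii): realising the decrements $U_t-U_{t+1}$ simultaneously for \emph{all} $\bP^*\in\cP$ in the form $\Delta B_{t+1}-\xi_{t+1}\cdot (X_{t+1}-X_t)$ with $\xi_{t+1}$ predictable and $\Delta B_{t+1}\ge 0$. I would proceed step by step: on the conditional distribution of $(X_{t+1}-X_t)$ given $\cF_t$, look for an $\cF_t$-measurable vector $\xi_{t+1}$ such that $U_t+\xi_{t+1}\cdot(X_{t+1}-X_t)\ge U_{t+1}$ $\bP$-a.s.; the existence of such a $\xi_{t+1}$ rests on a conditional hyperplane-separation argument in $\bR^d$ applied to the convex cone of $\cF_t$-conditional essential support of $X_{t+1}-X_t$, combined with the fact that $\esssup_{\bP^*\in\cP}\bE^*[U_{t+1}-U_t\mid\cF_t]\le 0$ under no-arbitrage (Theorem 5.14 of \cite{follmerschied}); without the pasting/supermartingale property from (i) this dual-feasibility bound is unavailable, which is precisely why Step (i) precedes Step (ii). Setting $\Delta B_{t+1}:=U_t+\xi_{t+1}\cdot(X_{t+1}-X_t)-U_{t+1}\ge 0$ and iterating gives \eqref{eq:uniformdoob}.

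Finally (iii) follows by packaging the decomposition: the process $V_t:=U_0+\sum_{k=1}^t \xi_k\cdot (X_k-X_{k-1})=U_t+B_t$ is the value of a self-financing strategy which, at $T$, satisfies $V_T\ge U_T\ge H$, hence $V\in\cU$ and $\essinf\cU_t\le V_t$; conversely any $V\in\cU$ is by the remark above a $\cP$-supermartingale with $V_T\ge H$, so by the minimality in (i), $V_t\ge U_t$, which gives $\essinf\cU_t\ge U_t$. The dominance inequality \eqref{eq:uniformdoobdominates} then follows by writing the decomposition between times $t$ and $T$, using $B_T\ge B_t$, and recalling $U_T\ge H$.
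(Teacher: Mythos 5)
The paper does not prove this statement itself: it is quoted verbatim from Föllmer--Schied (Corollary 7.3, Theorem 7.5, Corollary 7.15 and Corollary 7.18), so there is no ``paper's own proof'' to compare against. Your sketch is a fair high-level reconstruction of the textbook route, and parts (i) and (ii) follow the same ideas (upward-directedness via pasting, then a conditional one-period superhedging/separation argument for the uniform Doob decomposition; you should cite the supermartingale inequality from step (i) rather than Theorem 5.14 as the source of $\esssup_{\bP^*}\bE^*[U_{t+1}-U_t\mid\cF_t]\le 0$, but the logical dependence you describe is right).

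There is, however, a concrete gap in step (iii). You show $\essinf\cU_t\le V_t$ where $V_t=U_0+\sum_{k=1}^t\xi_k\cdot(X_k-X_{k-1})=U_t+B_t$, but that only yields $\essinf\cU_t\le U_t+B_t$, which is too weak unless $B_t=0$. To obtain $\essinf\cU_t\le U_t$ you must exhibit $U_t$ itself as an element of $\cU_t$, i.e.\ you need \eqref{eq:uniformdoobdominates} \emph{before} the essential-infimum claim, not after. The fix is simply to reorder: from the decomposition between $t$ and $T$ (and $U_T=H$, since $H$ is $\cF_T$-measurable) one gets $U_t+\sum_{k=t+1}^T\xi_k\cdot(X_k-X_{k-1})=U_T+(B_T-B_t)\ge H$, which is exactly \eqref{eq:uniformdoobdominates} and shows $U_t\in\cU_t$; combined with your correct argument that any $\tilde U_t\in\cU_t$ satisfies $\tilde U_t\ge\bE^*[H\mid\cF_t]$ for every $\bP^*\in\cP$ (use the martingale property of the one-period portfolio starting at $\tilde U_t$, from Theorem 5.14), this gives $\essinf\cU_t=U_t$.
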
\noindent
The process $B$ in \eqref{eq:uniformdoob} is sometimes called process of consumption, see \cite{kramkovduality}. Equations \eqref{eq:uniformdoob} and \eqref{eq:uniformdoobdominates} yield
\begin{equation}
	\label{eq:optimization}
	\sup_{\bP^*\in\cP}\bE^*[H]+\sum_{k=1}^T\xi_k\cdot(X_k-X_{k-1})-H\geq B_t\geq B_{t-1}\geq 0\quad\text{for all }t=1,\dots,T.
\end{equation}
Set
\begin{equation}
	\label{eq:superhedgingprice}
	\cU_t:=\left\{\tilde U_t\in L^0(\Omega,\cF_t,\bP):\exists \tilde\xi\text{ pred. s.t. }\tilde U_t+\sum_{k=t+1}^T \tilde\xi_k\cdot(X_k-X_{k-1})\geq H\quad\bP\text{-a.s.}\right\}.
\end{equation}

\begin{corollary}[Corollary 7.18, \cite{follmerschied}]
Suppose $H$ is a discounted European claim with
\[
\sup_{\bP^*\in\cP}\bE^*[H]<\infty.
\]
Then 
\[
\esssup_{\bP^*\in\cP}\bE^*[H\mid\cF_t]=\essinf\cU_t(H).
\]
\end{corollary}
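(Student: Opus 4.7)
The plan is to establish the two inclusions $\essinf \cU_t(H) \leq U_t^\uparrow(H)$ and $\essinf \cU_t(H) \geq U_t^\uparrow(H)$ separately, abbreviating $U_t^\uparrow(H) := \esssup_{\bP^*\in\cP}\bE^*[H \mid \cF_t]$. The first inclusion is immediate from Corollary \ref{cor:resultsfollmerschied}: the predictable process $\xi$ produced by the uniform Doob decomposition \eqref{eq:uniformdoob} satisfies, via \eqref{eq:uniformdoobdominates}, that $U_t^\uparrow(H) + \sum_{k=t+1}^T \xi_k \cdot (X_k - X_{k-1}) \geq H$ $\bP$-a.s. Hence $U_t^\uparrow(H)$ itself lies in $\cU_t(H)$ in the sense of \eqref{eq:superhedgingprice}, and the essential infimum is no larger.

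For the reverse inclusion, we would fix an arbitrary $\tilde U_t \in \cU_t(H)$ together with a predictable witness $\tilde\xi$, and study the shifted value process $V_s := \tilde U_t + \sum_{k=t+1}^s \tilde\xi_k \cdot (X_k - X_{k-1})$ for $s = t, \ldots, T$. This is self-financing on the interval $\{t, \ldots, T\}$ with $V_T \geq H \geq 0$. Running the backward-induction argument recalled in the remark after Theorem 5.14 of \cite{follmerschied} --- which uses only predictability of $\tilde\xi$ and the martingale property of $X$ under each $\bP^* \in \cP$ --- yields that $V$ is a $\bP^*$-martingale on $\{t, \ldots, T\}$ for every $\bP^* \in \cP$. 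Consequently $\tilde U_t = V_t = \bE^*[V_T \mid \cF_t] \geq \bE^*[H \mid \cF_t]$ $\bP$-a.s. Taking the essential supremum over $\bP^* \in \cP$ gives $\tilde U_t \geq U_t^\uparrow(H)$ a.s., and passing to the essential infimum over $\tilde U_t \in \cU_t(H)$ concludes.

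The main technical point to handle carefully is the martingale property of $V$ when $\tilde U_t$ is only $\cF_t$-measurable and not necessarily $\bP^*$-integrable, so that the usual tower-property bookkeeping does not apply directly. We would address this via generalized conditional expectations: the nonnegativity $V_T \geq 0$ makes $\bE^*[V_T \mid \cF_s]$ well-defined with values in $[0,\infty]$, and the stepwise identity $\bE^*[V_s \mid \cF_{s-1}] = V_{s-1}$, coming from predictability of $\tilde\xi_s$ together with $\bE^*[X_s - X_{s-1} \mid \cF_{s-1}] = 0$, propagates finiteness backward from $V_T$ to every $V_s$. No ingredients beyond those already assembled in Corollary \ref{cor:resultsfollmerschied} are required.
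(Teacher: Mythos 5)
Your proof plan is correct, and — since the paper cites Corollary~7.18 from F\"ollmer--Schied without reproducing the argument — it is essentially the standard one: the uniform Doob decomposition together with the monotonicity of $B$ gives $U_t^\uparrow \in \cU_t(H)$ via \eqref{eq:uniformdoobdominates}, and the reverse bound follows by showing that $V_s := \tilde U_t + \sum_{k=t+1}^s \tilde\xi_k \cdot (X_k - X_{k-1})$ is a generalized $\bP^*$-martingale on $\{t,\ldots,T\}$ and hence $\tilde U_t \geq \bE^*[H \mid \cF_t]$ for every $\bP^* \in \cP$.

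One small correction to the phrasing of the technical point: the backward induction does not ``propagate finiteness'' — each $V_s$ is automatically a.s.\ finite as a finite sum of a.s.\ finite random variables. What the localization (on $\{|\tilde\xi_s| \leq n\}\in\cF_{s-1}$) propagates backward from $V_T\ge H\ge 0$ is \emph{nonnegativity} of $V_s$ and, simultaneously, the identity $\bE^*[V_T\mid\cF_s] = V_s$ in the generalized sense: one uses that on $\{|\tilde\xi_s|\le n\}$ the increment $\tilde\xi_s\cdot(X_s-X_{s-1})$ is $\bP^*$-integrable with vanishing conditional expectation, concludes $\mathds{1}_{\{|\tilde\xi_s|\le n\}}\bE^*[V_s\mid\cF_{s-1}] = V_{s-1}\mathds{1}_{\{|\tilde\xi_s|\le n\}}$, and lets $n\to\infty$. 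You are also right to flag that Theorem~5.14 of F\"ollmer--Schied is stated for value processes started at time $0$ from a constant $V_0$, whereas here $V_t = \tilde U_t$ is merely $\cF_t$-measurable; this is precisely why the localization argument must be carried out directly rather than invoked as a black box, and your plan handles it appropriately.
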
\noindent
Corollary 7.18 of \cite{follmerschied} and \eqref{eq:uniformdoobdominates} guarantee that $U_t^\uparrow$ is the minimal amount needed at time $t$ to start a superhedging strategy and thus there exists a predictable process $\xi$ such that
\[
\esssup_{\bP^*\in\cP}\bE^*[H\mid\cF_t]+\sum_{k=t+1}^T\xi_k\cdot (X_k-X_{k-1})\geq H.
\]
Further, $U_0^\uparrow$ is called the \emph{superhedging price} at time $t=0$ of $H$ and coincides with the upper bound of the set of arbitrage-free prices.

\section{Superhedging price for $t=0$}
\label{sec:price0}
In this section we approximate the superhedging price for $t=0$ in two steps. In the first part, we introduce the theory of quantile hedging, see \cite{foellmer99}. In Theorem \ref{thm:convergencesuccessset} we prove that the quantile hedging price for $\alpha\in(0,1)$ converges to the superhedging price as $\alpha$ tends to $1$. Analogously, in Corollary \ref{cor:successratioconvergence} we prove that for $\alpha$ tending to $1$ also the success ratios for $\alpha\in (0,1)$ converge to the superhedging price.\\
In the second part, we prove in Theorem \ref{thm:universalapprox2} that the superhedging price and the associated strategies can be approximated by neural networks. 

\subsection{Quantile hedging}
\label{sec:quantilehedging}
\subsubsection{Success sets}
\label{sec:successsets}

In incomplete markets perfect replication of a contingent claim may not be possible. Superhedging offers an alternative hedging method but it presents two main disadvantages. From one hand the superhedging strategy not only reduces the risk but also the possibility to profit. On the other hand, the superhedging price may result to be too high.\\
Quantile hedging was proposed for the first time in \cite{foellmer99} to address these problems. Fix $\alpha\in (0,1)$. Given probability of success $\alpha\in (0,1)$ we consider the minimization problem
\begin{align}
	\nonumber
	\inf \cU_0^\alpha:=\inf\{u\in\bR:\ &\exists \xi=(\xi_t)_{t=1,\dots,T}\text{ predictable process with values in }\bR^d\text{ such that }\\
	\label{eq:quantilehedging}
	&(u,\xi)\text{ is admissible and }\bP\left(u+\sum_{k=1}^T\xi_k\cdot(X_k-X_{k-1})\geq H\right)\geq \alpha\}.
\end{align}
Here $1-\alpha$ is called the shortfall probability. Quantile hedging may be considered as a dynamic version of the value at risk concept.\\
For an admissible strategy $(u,\xi)$ with associated value process $V,$ we call
\[
\{V_T\geq H\}
\]
the \emph{success set}.
\begin{remark}
\label{rem:admissiblestrategies}
Note that in \eqref{eq:quantilehedging} we need to require that $(u,\xi)$ is admissible since this is not automatically implied by the definition of quantile hedging as in the case of superhedging strategies in Definition \ref{def:superhedging}.
\end{remark}\noindent
Proposition \ref{prop:alternativepropblem} below provides an equivalent formulation of the quantile hedging \eqref{eq:quantilehedging}, see also \cite{foellmer99}.

\begin{proposition}
\label{prop:alternativepropblem}
	Fix $\alpha\in (0,1)$. Then
	\[
	\inf\cU_0^\alpha=\inf\left\{\sup_{\bP^*\in\cP}\bE^*[H\mathds{1}_{A}]:\ A\in\cF_T, \ \bP(A)\geq \alpha\right\}.
	\]
\end{proposition}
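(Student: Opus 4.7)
The plan is to prove both inequalities between $\inf\cU_0^\alpha$ and $\inf\{\sup_{\bP^*\in\cP}\bE^*[H\mathds{1}_A]:A\in\cF_T,\ \bP(A)\geq\alpha\}$ using the superhedging duality for the modified claim $H\mathds{1}_A$ and the fact that admissible value processes are martingales under every $\bP^*\in\cP$ (as recalled in the remark after Definition \ref{def:superhedging}).

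For the inequality $\inf\cU_0^\alpha\leq\inf\{\sup_{\bP^*\in\cP}\bE^*[H\mathds{1}_A]:A\in\cF_T,\ \bP(A)\geq\alpha\}$, I fix $A\in\cF_T$ with $\bP(A)\geq\alpha$ and consider the non-negative contingent claim $H\mathds{1}_A$. Since $\sup_{\bP^*\in\cP}\bE^*[H\mathds{1}_A]\leq \sup_{\bP^*\in\cP}\bE^*[H]<\infty$, the superhedging duality (Corollary 7.18 of \cite{follmerschied}, applied at $t=0$ together with \eqref{eq:uniformdoobdominates}) yields a predictable $\xi$ such that $u_A:=\sup_{\bP^*\in\cP}\bE^*[H\mathds{1}_A]$ satisfies $u_A+\sum_{k=1}^T\xi_k\cdot(X_k-X_{k-1})\geq H\mathds{1}_A\geq 0$ $\bP$-a.s. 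Hence $(u_A,\xi)$ is admissible, and on $A$ the terminal value is bounded below by $H$, so $\bP(u_A+\sum_{k=1}^T\xi_k\cdot(X_k-X_{k-1})\geq H)\geq\bP(A)\geq\alpha$. Thus $u_A\in\cU_0^\alpha$, which gives the desired inequality after taking the infimum over such $A$.

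For the reverse inequality, I take any $u\in\cU_0^\alpha$ with associated admissible $(u,\xi)$ and value process $V_T=u+\sum_{k=1}^T\xi_k\cdot(X_k-X_{k-1})$, and set $A:=\{V_T\geq H\}\in\cF_T$, so that $\bP(A)\geq\alpha$ by definition of $\cU_0^\alpha$. On $A$ we have $V_T\geq H\geq H\mathds{1}_A$, while on $A^c$ the admissibility gives $V_T\geq 0=H\mathds{1}_A$; hence $V_T\geq H\mathds{1}_A$ $\bP$-a.s. Since $(u,\xi)$ is admissible, the remark following Definition \ref{def:superhedging} shows that $V$ is a $\bP^*$-martingale for every $\bP^*\in\cP$, so
\[
u=V_0=\bE^*[V_T]\geq\bE^*[H\mathds{1}_A]\qquad\text{for all }\bP^*\in\cP.
\]
Taking the supremum over $\bP^*\in\cP$ gives $u\geq\sup_{\bP^*\in\cP}\bE^*[H\mathds{1}_A]\geq\inf\{\sup_{\bP^*\in\cP}\bE^*[H\mathds{1}_{A'}]:A'\in\cF_T,\ \bP(A')\geq\alpha\}$, and taking the infimum over $u\in\cU_0^\alpha$ completes the proof.

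There is no serious obstacle; the two ingredients, namely the superhedging duality from Corollary 7.18 of \cite{follmerschied} and the martingale property of admissible value processes, are already available in the text. The only point to be slightly careful about is that in the $\leq$ direction one must verify admissibility of the candidate strategy, which is immediate because $H\mathds{1}_A\geq 0$, and in the $\geq$ direction one must ensure $V_T\geq H\mathds{1}_A$ globally (not just on $A$), which is where admissibility of $(u,\xi)$ is used.
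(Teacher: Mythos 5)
Your proof is correct and follows essentially the same route as the paper: both directions translate between admissible strategies and success sets, and the superhedging duality for the knock-out claim $H\mathds{1}_A$ supplies the price bound. The only slight variation is in the ``$\geq$'' direction, where you deduce $u\geq\sup_{\bP^*\in\cP}\bE^*[H\mathds{1}_A]$ directly from the $\bP^*$-martingale property of the admissible value process, while the paper observes $u\in\cU_0(H\mathds{1}_{\tilde A})$ and re-invokes Theorem 7.13 of F\"ollmer--Schied; these are equivalent one-line arguments.
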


\begin{proof}
	$``\leq"$: Take $A\in\cF_T$ such that $\bP(A)\geq \alpha$. We prove that 
	\begin{equation}
	\label{eq:alternativproblemproof1}
		\sup_{\bP^*\in\cP}\bE^*[H\mathds{1}_A]\in \left\{u \in\bR:\exists \xi\text{ adm. s.t. }\bP\left(u+\sum_{k=1}^T\xi_k\cdot(X_k-X_{k-1})\geq H\right)\geq \alpha\right\}.
	\end{equation}
	By the well-known superhedging duality, see Theorem 7.13 of \cite{follmerschied}, we have that 
	\[
		\sup_{\bP^*\in\cP}\bE^*[H\mathds{1}_A]=\inf \left\{u \in\bR:\exists \xi\text{ pred. s.t. }u+\sum_{k=1}^T\xi_k\cdot(X_k-X_{k-1})\geq H\mathds{1}_A\ \bP\text{-a.s.}\right\},
	\]
	and that there exists a superhedging strategy $\hat\xi$ for $H\mathds{1}_A$ with initial value $\sup_{\bP^*\in\cP}\bE^*[H\mathds{1}_A]$, i.e.,
	\begin{equation}
	\label{eq:superhedgingonA}
	\sup_{\bP^*\in\cP}\bE^*[H\mathds{1}_A]+\sum_{k=1}^T\hat\xi_k\cdot(X_k-X_{k-1})\geq H\mathds{1}_A\geq 0\ \bP\text{-a.s.}
	\end{equation}
	In particular, by \eqref{eq:superhedgingonA} we get for $\hat\xi$ that
	\[
	\bP\left(\sup_{\bP^*\in\cP}\bE^*[H\mathds{1}_A]+\sum_{k=1}^T\hat\xi_k\cdot(X_k-X_{k-1})\geq H\right)\geq \bP(A)	\geq \alpha.
	\]
	This implies \eqref{eq:alternativproblemproof1} and hence 
	\[
		\inf\cU_0^\alpha\leq \inf\left\{\sup_{\bP^*\in\cP}\bE^*[H\mathds{1}_{A}]:\ A\in\cF_T, \ \bP(A)\geq \alpha\right\}.
	\]
	$``\geq"$: Take $\tilde u\in \cU_0^\alpha$ and denote by $\tilde\xi=(\tilde\xi_k)_{k=1}^T$ the corresponding strategy such that 
	\[
	\bP\left(\tilde u+\sum_{k=1}^T\tilde\xi_k\cdot (X_k-X_{k-1})\geq H\right)\geq \alpha.
	\]
	Define the set $\tilde A$ by
	\[
		\tilde A:=\left\{\omega\in\Omega:\tilde u+\sum_{k=1}^T\tilde\xi_k(\omega)\cdot(X_k(\omega)-X_{k-1}(\omega))\geq H(\omega)\right\}.
	\]
	Clearly $\tilde A\in\cF_T$ and $\bP(\tilde A)\geq \alpha$. By construction we have that 
	\[
	\left(\tilde u+\sum_{k=1}^T\tilde\xi_k\cdot(X_k-X_{k-1})\right)\mathds{1}_{\tilde A}\geq H\mathds{1}_{\tilde{A}}\quad \bP\text{-a.s.}
	\]
	and because $\tilde\xi$ is assumed to be admissible, we have
	\[
	\left(\tilde u+\sum_{k=1}^T\tilde\xi_k\cdot(X_k-X_{k-1})\right)\mathds{1}_{\tilde A^c}\geq 0\quad \bP\text{-a.s.}
	\]
	In particular, $\tilde u\in \cU_0(H\mathds{1}_{\tilde A})$ and by Theorem 7.13 of \cite{follmerschied} we obtain
	\begin{equation}
	\label{eq:tildeaconstruction}
		\tilde u\geq \sup_{\bP^*\in\cP}\bE^*[H\mathds{1}_{\tilde A}]\in \left\{\sup_{\bP^*\in\cP}\bE^*[H\mathds{1}_{A}]:\ A\in\cF_T, \ \bP(A)\geq \alpha\right\}.
	\end{equation}
	That is, for an arbitrary $\tilde u\in \cU_0^\alpha$ we have constructed a set $\tilde A$ such that \eqref{eq:tildeaconstruction} holds. Therefore,
	\[
		\inf\cU_0^\alpha\geq \inf\left\{\sup_{\bP^*\in\cP}\bE^*[H\mathds{1}_{A}]:\ A\in\cF_T, \ \bP(A)\geq \alpha\right\}.
	\]
\end{proof}\noindent
Corollary 7.15 of \cite{follmerschied} guarantees that there exists a superhedging strategy with initial value $\inf\cU_0$. In contrast, there might be no explicit solution to the quantile hedging approach \eqref{eq:quantilehedging}. If a solution to the quantile hedging approach exists, then Proposition \ref{prop:alternativepropblem} states that it is given by the solution of the classical hedging formulation for the knockout option $H\mathds{1}_A$ for some suitable $A\in\cF_T$. However, such a set $A\in\cF_T$ does not always exist. In particular, quantile hedging does not always admit an explicit solution in general. The Neyman-Pearson lemma suggests to consider so-called success ratios instead of success sets. We will briefly discuss success ratios below. For further information we refer the interested reader to \cite{foellmer99}.\\
We now show that the superhedging price $\inf\cU_0$, can be approximated by the quantile hedging price $\inf\cU_0^\alpha$ for $\alpha$ tending to $1$. 
\begin{definition}
	For $\alpha\in (0,1)$ we define 
	\[
	\cF^\alpha:=\left\{A\in\cF_T:\bP(A)\geq \alpha\right\}.
	\]
\end{definition}

\begin{theorem}
	\label{thm:convergencesuccessset}
	The $\alpha$-quantile hedging price converges to the superhedging price as $\alpha$ tends to $1$, i.e.,
	\[
	\inf\cU_0^\alpha\xrightarrow{\alpha\uparrow 1}\inf\cU_0.
	\]
\end{theorem}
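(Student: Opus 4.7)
The plan is to use the reformulation from Proposition \ref{prop:alternativepropblem}:
\[
\inf\cU_0^\alpha = \inf\bigl\{\sup_{\bP^*\in\cP}\bE^*[H\mathds{1}_A] : A\in\cF^\alpha\bigr\},
\]
together with the superhedging duality $\inf\cU_0 = \sup_{\bP^*\in\cP}\bE^*[H]$ (which is Theorem 7.13 of \cite{follmerschied}). One direction is immediate: since $H\mathds{1}_A \leq H$ pointwise, taking $\bE^*$ and then the supremum over $\bP^*\in\cP$ gives $\sup_{\bP^*}\bE^*[H\mathds{1}_A]\leq \sup_{\bP^*}\bE^*[H]$ for every $A\in\cF^\alpha$, so $\inf\cU_0^\alpha \leq \inf\cU_0$ for each $\alpha$. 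Also, $\cF^\alpha$ shrinks as $\alpha\uparrow 1$, so $\alpha\mapsto\inf\cU_0^\alpha$ is non-decreasing and the limit $L:=\lim_{\alpha\uparrow 1}\inf\cU_0^\alpha$ exists with $L\leq \inf\cU_0$.

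For the reverse inequality $L\geq \inf\cU_0$, I would fix an arbitrary $\eps>0$ and select $\bP^*_\eps\in\cP$ with $\bE^*_\eps[H]\geq \sup_{\bP^*\in\cP}\bE^*[H]-\eps$; this is possible because the supremum is finite by the standing assumption on $H$. Writing $Z:=d\bP^*_\eps/d\bP$ (which exists since $\bP^*_\eps\sim\bP$), the random variable $HZ$ is nonnegative and lies in $L^1(\bP)$ since $\bE[HZ]=\bE^*_\eps[H]<\infty$. For any $A\in\cF^\alpha$ one has $\bP(A^c)\leq 1-\alpha$, and
\[
\sup_{\bP^*\in\cP}\bE^*[H\mathds{1}_A]\;\geq\;\bE^*_\eps[H\mathds{1}_A]\;=\;\bE^*_\eps[H]-\bE[HZ\mathds{1}_{A^c}].
\]
The key point is then the absolute continuity of the Lebesgue integral: for every $\delta>0$ there is $\eta>0$ such that $\bP(B)<\eta$ implies $\bE[HZ\mathds{1}_B]<\delta$. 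Taking $\alpha$ close enough to $1$ (so that $1-\alpha<\eta$) yields $\sup_{\bP^*}\bE^*[H\mathds{1}_A]\geq \inf\cU_0-\eps-\delta$ for every $A\in\cF^\alpha$, and passing to the infimum over $A$ gives $\inf\cU_0^\alpha\geq \inf\cU_0-\eps-\delta$. Since $\eps,\delta>0$ are arbitrary, $L\geq \inf\cU_0$.

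The main obstacle is really just the reverse inequality, and the crux is recognizing that one needs a \emph{uniform} control of $\bE^*[H\mathds{1}_{A^c}]$ as $\bP(A^c)\to 0$, not just pointwise control. A naive attempt would fix the martingale measure and argue $\bE^*[H\mathds{1}_{A^c}]\to 0$ by dominated convergence along a specific sequence; but because the supremum over $\cP$ may not be attained and may involve different measures for different $A$, one must instead reduce to a single $\eps$-optimal $\bP^*_\eps$ and rely on the $L^1(\bP)$-integrability of its density times $H$. Apart from this, the proof is a short bookkeeping exercise using Proposition \ref{prop:alternativepropblem} and standard measure-theoretic facts.
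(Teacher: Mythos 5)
Your proof is correct, and it takes a genuinely different route from the paper's. Both proofs start from the reformulation in Proposition \ref{prop:alternativepropblem} and the superhedging duality $\inf\cU_0 = \sup_{\bP^*\in\cP}\bE^*[H]$, and both must address the fact that the supremum over $\cP$ may not be attained and may vary with $A$. The paper handles this by selecting $\eps$-optimal sets $A_n\in\cF^{\alpha_n}$ and invoking the Komlós-type convexification result (Lemma~1.70 of F\"ollmer--Schied) to produce convex combinations $\psi_n$ of the indicators $\mathds{1}_{A_n}$ that converge $\bP$-a.s., then showing the limit is identically $1$ and pushing through the limit with dominated convergence against a fixed but arbitrary $\bP^*\in\cP$. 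You instead fix an $\eps$-optimal measure $\bP^*_\eps\in\cP$, observe that $H\,\tfrac{d\bP^*_\eps}{d\bP}\in L^1(\bP)$, and use absolute continuity of the integral to make $\bE^*_\eps[H\mathds{1}_{A^c}]$ small \emph{uniformly} over all $A\in\cF^\alpha$ once $1-\alpha$ is below the threshold $\eta(\delta)$. This replaces the convexification machinery with an elementary measure-theoretic fact and delivers the required uniformity directly; as you correctly emphasize, this is exactly the obstruction a naive dominated-convergence argument along a fixed sequence would miss. Your argument is shorter and more transparent for this particular theorem; the paper's convexification technique is somewhat heavier but is a template the authors reuse (e.g.\ in Proposition \ref{prop:testfunctionsolution}), so both approaches have their place.
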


\begin{proof}
	We first note that, using Proposition \ref{prop:alternativepropblem} it suffices to prove
	\[
	\inf_{A\in\cF^\alpha}\sup_{\bP^*\in\cP}\bE^*[H\mathds{1}_A]\xrightarrow{\alpha\uparrow 1}\sup_{\bP^*\in\cP}\bE^*[H].
	\]
	Let $(\alpha_n)_{n\in\bN}\subset (0,1)$ be an increasing sequence such that $\alpha_n$ converges to $1$ as $n$ tends to infinity. Note that 
	\begin{equation}
	\label{eq:setproblemmonotone}
	\inf_{A\in\cF^{\alpha_n}}\sup_{\bP^*\in\cP}\bE^*[H\mathds{1}_{A}]\leq \inf_{A\in\cF^{\alpha_{n+1}}}\sup_{\bP^*\in\cP}\bE^*[H\mathds{1}_{A}]\leq \sup_{\bP^*\in\cP}\bE^*[H],
	\end{equation}
	because $\cF^{\alpha_{n+1}}\subset\cF^{\alpha_n}$.
	Therefore, the limit of $(\inf_{A\in\cF^{\alpha_n}}\sup_{\bP^*\in\cP}\bE^*[H\mathds{1}_{A}])_{n\in\bN}$ exists because the sequence is monotone and bounded. Let $\varepsilon>0$ be arbitrary. For each $n\in\bN$ exists $A_n\in\cF^{\alpha_n}$ such that 
	\begin{equation}
	\label{eq:setproblemapprox}
	\inf_{A\in\cF^{\alpha_n}}\sup_{\bP^*\in\cP}\bE^*[H\mathds{1}_A]\leq \sup_{\bP^*\in\cP}\bE^*[H\mathds{1}_{A_n}]<\inf_{A\in\cF^{\alpha_n}}\sup_{\bP^*\in\cP}\bE^*[H\mathds{1}_A]+\varepsilon.	
	\end{equation}	
%
Then, by Lemma\footnote{For random variables $(\xi_n)_{n\in\bN}\subset L^0(\Omega,\cF,\bP;\bR^d)$ we denote by $\text{conv}\{\xi_1,\xi_2,\dots\}$ the convex hull of $\xi_1,\xi_2,\dots$ which is defined $\omega$-wise. \begin{lemma*}[Lemma 1.70, \cite{follmerschied}]
\label{lem:lemma170}
	Let $(\xi_n)_{n\in\bN}$ be a sequence in $L^0(\Omega,\cF,\bP;\bR^d)$ such that $\sup_{n\in\bN}|\xi_n|<\infty$ $\bP$-a.s. Then there exists a sequence of convex combinations
	\[
	\eta_n\in\text{conv}\{\xi_n,\xi_{n+1},\dots\},\quad n\in\bN,
	\]
	which converges $\bP$-almost surely to some $\eta\in L^0(\Omega,\cF,\bP;\bR^d)$.
\end{lemma*}} 1.70 of \cite{follmerschied} there exists a sequence $\psi_n\in\text{conv}\{\mathds{1}_{A_n},\mathds{1}_{A_{n+1}},\dots\}$, $n\in\bN$, which converges $\bP$-a.s. to some $\psi\in L^\infty([\Omega,\cF_T,\bP;[0,1])$. Note that it is not clear if $\psi$ is an indicator function of some $\cF_T$-measurable set. We will show that $\psi=1$ $\bP$-a.s. For $n\in\bN$, $\psi_n$ is of the form 
	\begin{equation}
	\label{eq:psirepresentation}
		\psi_n=\sum_{k=n}^\infty \lambda_k^n \mathds{1}_{A_k},
	\end{equation}
	for some $(\lambda_k^n)_{k=n}^\infty\geq 0$ such that $\sum_{k=n}^\infty\lambda_k^n=1$. By dominated convergence and \eqref{eq:psirepresentation} we obtain
	\begin{equation}
	\label{eq:setproblempsi1}
	\bE_\bP[\psi]=\lim_{n\to\infty}\bE_\bP[\psi_n]=\lim_{n\to\infty}\bE_\bP\left[\sum_{k=n}^\infty\lambda_k^n\mathds{1}_{A_k}\right]=\lim_{n\to\infty}\left(\sum_{k=n}^\infty\lambda_k^n\bE_\bP\left[\mathds{1}_{A_k}\right]\right).
	\end{equation}
	Because $\sum_{k=n}^\infty\lambda_k^n=1$ and by the definition of the limes inferior, equation \eqref{eq:setproblempsi1} yields
	\begin{align}
	\nonumber
	\bE_\bP[\psi]&\geq \lim_{n\to\infty}\left(\sum_{k=n}^\infty\lambda_k^n\inf_{l\geq n}\bE_\bP\left[\mathds{1}_{A_l}\right]\right)
	=\lim_{n\to\infty}\left(\inf_{l\geq n}\bE_\bP\left[\mathds{1}_{A_l}\right]\right)\\
	\label{eq:setproblempsi2}
	&=\liminf_{n\to\infty}\bE_\bP[\mathds{1}_{A_n}]=\liminf_{n\to\infty}\bP(A_n)\geq \liminf_{n\to\infty}\alpha_n =1.
	\end{align}
Since $0\leq \psi\leq 1$, it follows that $\psi=1$ $\bP$-a.s. By \eqref{eq:setproblemapprox} and with similar arguments as in \eqref{eq:setproblempsi1} and \eqref{eq:setproblempsi2} using the supremum instead of the infimum, we obtain by dominated convergence for any $\bP^*\in\cP$ that
	\begin{align}
	\nonumber
		\limsup_{n\to\infty}\left(\inf_{A\in\cF^{\alpha_n}}\sup_{\bP^*\in\cP}\bE^*[H\mathds{1}_A]+\varepsilon\right)&\geq \limsup_{n\to\infty}\sup_{\bP^*\in\cP}\bE^*[H\mathds{1}_{A_n}]\\ 
		\label{eq:setproblemliminf}
		&\geq \lim_{n\to\infty}\bE^*[H\psi_n]=\bE^*[H\psi]=\bE^*[H].
	\end{align}
	Since the limit on the left hand side in \eqref{eq:setproblemliminf} exists by \eqref{eq:setproblemmonotone} and \eqref{eq:setproblemliminf} holds for all $\bP^*\in\cP$, we get
	\begin{equation}
	\label{eq:setproblemliminf2}
	\lim_{n\to\infty}\left(\inf_{A\in\cF^{\alpha_n}}\sup_{\bP^*\in\cP}\bE^*[H\mathds{1}_A]+\varepsilon\right)\geq \sup_{\bP^*\in\cP}\bE^*[H].
	\end{equation}
	Thus, we observe that \eqref{eq:setproblemmonotone} and \eqref{eq:setproblemliminf2} yields
	\[
	\lim_{n\to\infty}\left(\inf_{A\in\cF^{\alpha_n}}\sup_{\bP^*\in\cP}\bE^*[H\mathds{1}_A]\right)\leq \sup_{\bP^*\in\cP}\bE^*[H]\leq \lim_{n\to\infty}\left(\inf_{A\in\cF^{\alpha_n}}\sup_{\bP^*\in\cP}\bE^*[H\mathds{1}_A]+\varepsilon\right).
	\]
	As $\varepsilon>0$ was arbitrary this implies that
	\[
	\lim_{n\to\infty}\left(\inf_{A\in\cF^{\alpha_n}}\sup_{\bP^*\in\cP}\bE^*[H\mathds{1}_A]\right)=\sup_{\bP^*\in\cP}\bE^*[H].
	\]
\end{proof}\noindent

\subsubsection{Success ratios}
\label{sec:successratio}

Let $\cR:=L^\infty(\Omega,\cF_T,\bP;[0,1])$ be the set of randomized tests. For $\alpha\in(0,1)$ we denote by $\cR^\alpha$ the set
\[
\cR^\alpha:=\lbrace\varphi\in\cR:\ \bE_\bP[\varphi]\geq \alpha\rbrace.
\]
We now consider the following minimization problem 
\begin{equation}
	\label{eq:quantilehedgingtestfunctions}
	\inf\left\{ \sup_{\bP^*\in\cP}\bE^*\left[H\varphi\right]:\varphi\in\cR^\alpha\right\}.
\end{equation}
In a first step, we prove that this problem admits an explicit solution. In a second step, we show that the solution is given by the so-called success ratio, see Definition \ref{def:successratio} below. In particular, \eqref{eq:quantilehedgingtestfunctions} can be formulated in terms of success ratios, see also \cite{foellmer99}. In Proposition \ref{prop:testfunctionsolution} and \ref{prop:successratiosolution} we provide a proof for some result of \cite{foellmer99} for the sake of completeness.

\begin{proposition}
	\label{prop:testfunctionsolution}
	There exists a randomized test $\tilde\varphi\in\cR$ such that
	\[
	\bE_\bP[\tilde\varphi]=\alpha,
	\]
	and
	\begin{equation}
	\label{eq:minimizationthm}	
	\inf_{\varphi \in\cR^\alpha}\sup_{\bP^*\in\cP}\bE^*[H\varphi]=\sup_{\bP^*\in\cP}\bE^*[H\tilde \varphi].
	\end{equation}
\end{proposition}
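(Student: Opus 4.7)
The plan is to produce $\tilde\varphi$ as a $\bP$-a.s.\ limit of convex combinations drawn from a minimizing sequence, and then to enforce the equality $\bE_\bP[\tilde\varphi]=\alpha$ by a scaling argument at the end. Set $v^{*}:=\inf_{\varphi\in\cR^\alpha}\sup_{\bP^{*}\in\cP}\bE^{*}[H\varphi]$; this infimum is nonnegative (because $H\geq 0$ and $\varphi\geq 0$) and finite, since it is bounded by $\sup_{\bP^{*}\in\cP}\bE^{*}[H]<\infty$. Choose a minimizing sequence $(\varphi_n)_{n\in\bN}\subset\cR^\alpha$ with $v_n:=\sup_{\bP^{*}\in\cP}\bE^{*}[H\varphi_n]\to v^{*}$.

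Since $0\le\varphi_n\le 1$, the sequence satisfies $\sup_{n\in\bN}|\varphi_n|\le 1$, so Lemma 1.70 of \cite{follmerschied} provides convex combinations $\psi_n=\sum_{k\ge n}\lambda_k^n\varphi_k$ converging $\bP$-a.s.\ to some $\tilde\varphi\in\cR$. By dominated convergence (with constant dominator $1$) and the bound $\bE_\bP[\varphi_k]\ge\alpha$, one obtains
\[
\bE_\bP[\tilde\varphi]=\lim_{n\to\infty}\sum_{k\ge n}\lambda_k^n\,\bE_\bP[\varphi_k]\ge\alpha,
\]
so $\tilde\varphi\in\cR^\alpha$.

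For each $\bP^{*}\in\cP$, the assumption $\bE^{*}[H]<\infty$ furnishes an integrable dominator for the sequence $H\psi_n\le H$, so another application of dominated convergence gives
\[
\bE^{*}[H\tilde\varphi]=\lim_{n\to\infty}\bE^{*}[H\psi_n]=\lim_{n\to\infty}\sum_{k\ge n}\lambda_k^n\,\bE^{*}[H\varphi_k]\le\lim_{n\to\infty}\sup_{k\ge n}v_k=v^{*},
\]
where the penultimate inequality uses $\bE^{*}[H\varphi_k]\le v_k$ and $\sum_{k\ge n}\lambda_k^n=1$. Taking the supremum over $\bP^{*}\in\cP$ yields $\sup_{\bP^{*}\in\cP}\bE^{*}[H\tilde\varphi]\le v^{*}$, and the reverse inequality is immediate from $\tilde\varphi\in\cR^\alpha$. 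Hence \eqref{eq:minimizationthm} holds with this $\tilde\varphi$.

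Finally, set $\beta:=\bE_\bP[\tilde\varphi]$. If $\beta>\alpha$, replace $\tilde\varphi$ by $(\alpha/\beta)\tilde\varphi\in\cR^\alpha$: the expectation becomes exactly $\alpha$, and the objective is scaled by $\alpha/\beta<1$. Optimality of the original $\tilde\varphi$ then forces $v^{*}=0$, so the scaled test is still a minimizer and has expectation $\alpha$, as required. The main technical obstacle is transporting the $\bP$-a.s.\ convergence $\psi_n\to\tilde\varphi$ through the outer supremum over $\cP$; this is handled uniformly because the objective $\varphi\mapsto\sup_{\bP^{*}\in\cP}\bE^{*}[H\varphi]$ is a supremum of linear functionals, each of which is continuous under dominated convergence thanks to $H\in L^1(\bP^{*})$.
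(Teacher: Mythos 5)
Your proof is correct and follows essentially the same strategy as the paper's: extract a minimizing sequence, pass to almost-surely convergent convex combinations via Lemma 1.70 of \cite{follmerschied}, and conclude by dominated convergence that the limit $\tilde\varphi$ lies in $\cR^\alpha$ and is a minimizer. Your concluding scaling argument is in fact a small improvement, since the paper's contradiction implicitly requires $\sup_{\bP^*\in\cP}\bE^*[H\tilde\varphi]>0$, whereas you explicitly observe that $\beta>\alpha$ forces $v^{*}=0$ and then replace $\tilde\varphi$ by $(\alpha/\beta)\tilde\varphi$, which is still optimal and has expectation exactly $\alpha$.
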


\begin{proof}
	Take a sequence $(\varphi_n)_{n\in\bN}\subset \cR^\alpha$ such that
	\begin{equation}
	\label{eq:successratioapproxlim}
	\lim_{n\to\infty}\sup_{\bP^*\in\cP}\bE^*[H\varphi_n]=\inf_{\varphi\in\cR^\alpha}	\sup_{\bP^*\in\cP}\bE^*[H\varphi].
	\end{equation}
	By Lemma 1.70 of \cite{follmerschied} there exists a sequence of convex combinations $\tilde\varphi_n\in\text{conv}\{\varphi_n,\varphi_{n+1},\dots\}$ converging $\bP$-a.s. to a function $\tilde\varphi\in\cR$ because $\varphi_n\in [0,1]$ for all $n\in\bN$. Clearly $\tilde\varphi_n\in\cR^\alpha$ for each $n\in\bN$. Hence, dominated convergence yields that
	\begin{equation}
	\label{eq:tildeconstraint}
	\bE_\bP[\tilde\varphi]=\lim_{n\to\infty}\bE_\bP[	\tilde\varphi_n]\geq \alpha,
	\end{equation}
	and we get that $\tilde\varphi\in\cR^\alpha$. In the following we use similar arguments as in the proof of Theorem \ref{thm:convergencesuccessset}. In particular, $\tilde{\varphi}_n$ is of the form
	\begin{equation}
	\label{eq:testfunctionrepresentation}
	\tilde\varphi_n=\sum_{k=n}^\infty\lambda_k^n \varphi_k,	
	\end{equation}
	for some $(\lambda_k)_{k=n}^\infty$ such that $\sum_{k=n}^\infty\lambda_k^n=1$. By \eqref{eq:testfunctionrepresentation} we obtain for any $\bP^*\in\cP$ that
	\begin{equation}
	\label{eq:limsupconvcomb}
	\limsup_{n\to\infty} \bE^*\left[H\varphi_n\right]=\lim_{n\to\infty}\left(\sup_{k\geq n}\bE^*\left[H\varphi_k\right]\right)\geq \lim_{n\to\infty}\left(\sum_{k=n}^\infty\lambda_k^n\bE^*\left[H\varphi_k\right]\right)=\lim_{n\to\infty}\bE^*\left[H\tilde\varphi_n\right]= \bE^*\left[H\tilde\varphi\right],
	\end{equation}
	where we used monotone convergence. Moreover, we obtain by \eqref{eq:successratioapproxlim}, \eqref{eq:limsupconvcomb} and dominated convergence that
	\begin{equation}
	\label{eq:optimizerinquality}
		\inf_{\varphi\in\cR^\alpha}\sup_{\bP^*\in\cP}\bE^*[H\varphi]=\limsup_{n\to\infty}\sup_{\bP^*\in\cP}\bE^*[H\varphi_n]\geq \limsup_{n\to\infty}\bE^*[H\varphi_n]\geq \lim_{n\to\infty}\bE^*[H\tilde\varphi_n]=\bE^*[H\tilde\varphi].	
		\end{equation}
	Since \eqref{eq:optimizerinquality} holds for all $\bP^*\in\cP$ we obtain
	\[
	\inf_{\varphi\in\cR^\alpha}\sup_{\bP^*\in\cP}\bE^*[H\varphi]\geq \sup_{\bP^*\in\cP}\bE^*[H\tilde\varphi].
	\]
	Furthermore, $\tilde\varphi\in\cR^\alpha$ by \eqref{eq:tildeconstraint} yields
	\[
	\inf_{\varphi\in\cR^\alpha}\sup_{\bP^*\in\cP}\bE^*[H\varphi]= \sup_{\bP^*\in\cP}\bE^*[H\tilde\varphi].
	\]
	So $\tilde\varphi$ is the desired minimizer.\\
	We now show that $\bE_\bP[\tilde\varphi]=\alpha$ holds. If $\bE_\bP[\tilde\varphi]>\alpha$, then we can find $\varepsilon>0$ such that $\varphi_\varepsilon:=(1-\varepsilon)\tilde\varphi\in\cR^\alpha$, and 
	\begin{equation}
	\label{eq:varphitildealpha}
	\sup_{\bP^*\in\cP}\bE^*[H\varphi_\varepsilon]=(1-\varepsilon)\sup_{\bP^*\in\cP}\bE^*[H\tilde\varphi]<\sup_{\bP^*\in\cP}\bE^*[H\tilde\varphi],
	\end{equation}
	which contradicts the minimality property of $\tilde\varphi$. Thus, 
	\[
	\label{eq:varphitildealpha1}
	\bE_\bP[\tilde\varphi]=\alpha.
	\]
	\end{proof}\noindent 

\begin{definition}
\label{def:successratio}
 	For an admissible strategy with value process $V\in\cV$ we define its success ratio by
 	\begin{equation}
 	\label{eq:successratiodef}
 	\varphi_V:=\mathds{1}_{\{V_T\geq H\}}+\frac{V_T}{H}\mathds{1}_{\{V_T<H\}}.	
 	\end{equation}
 	For $\alpha\in (0,1)$ we denote by $\cV^\alpha$ the set 
 	\[
 	\cV^\alpha:=\left\{\varphi_V\in\cR:\ V\in\cV,\ \bE_\bP\left[\varphi_V\right]\geq \alpha\right\}.
 	\]
\end{definition}\noindent

\begin{remark}
	Note that for $V\in\cV$ we have that $V_T\geq 0$ $\bP$-a.s. In particular, $\bP(\{H=0\}\cap \{V_T<H\})=0$ and hence \eqref{eq:successratiodef} is well-defined. 
\end{remark}\noindent
In the following, we formulate the optimization problem \eqref{eq:quantilehedging} in terms of success ratios and prove that it is equivalent to \eqref{eq:quantilehedgingtestfunctions}, see Proposition \ref{prop:successratiosolution} below.\\
Consider the minimization problem
\begin{equation}
	\label{eq:quantilehedgingsuccessratios}
	\inf\left\{\sup_{\bP^*\in\cP}\bE^*\left[\varphi_V\right]:\ V\in\cV^\alpha\right\}.
\end{equation} 

\begin{proposition}
\label{prop:successratiosolution}
	There exists an admissible strategy with value process $\tilde V$ such that
	\[
	\bE_\bP\left[\varphi_{\tilde V}\right]=\alpha,
	\]
	and 
	\begin{equation}
	\label{eq:successratiominimizer}
		\inf_{V\in\cV^\alpha}\sup_{\bP^*\in\cP}\bE^*\left[H\varphi_{V}\right]=\sup_{\bP^*\in\cP}\bE^*\left[H\varphi_{\tilde V}\right],
	\end{equation}
	where $\varphi_V$ denotes the success ratio associated to a portfolio $V\in\cV$ as in \eqref{eq:successratiodef}. Moreover, $\varphi_{\tilde V}$ coincides with the solution $\tilde \varphi$ from Proposition \ref{prop:testfunctionsolution}.
\end{proposition}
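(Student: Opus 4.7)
My plan is to obtain $\tilde V$ by superhedging the modified claim $H\tilde\varphi$, where $\tilde\varphi$ is the minimizer produced by Proposition \ref{prop:testfunctionsolution}, and then to show that its success ratio coincides with $\tilde\varphi$ and attains the infimum in \eqref{eq:successratiominimizer}. First I invoke Proposition \ref{prop:testfunctionsolution} to fix $\tilde\varphi\in\cR^\alpha$ with $\bE_\bP[\tilde\varphi]=\alpha$ and $\sup_{\bP^*\in\cP}\bE^*[H\tilde\varphi]=\inf_{\varphi\in\cR^\alpha}\sup_{\bP^*\in\cP}\bE^*[H\varphi]$. Since $H\tilde\varphi\geq 0$ is $\cF_T$-measurable with finite upper expectation, the superhedging duality (Theorem 7.13 of \cite{follmerschied}) yields a predictable process $\tilde\xi$ and the constant $\tilde V_0:=\sup_{\bP^*\in\cP}\bE^*[H\tilde\varphi]$ such that the associated self-financing value process $\tilde V_t:=\tilde V_0+\sum_{k=1}^t\tilde\xi_k\cdot(X_k-X_{k-1})$ satisfies $\tilde V_T\geq H\tilde\varphi\geq 0$. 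Hence the strategy is admissible, $\tilde V\in\cV$, and by the remark following Definition \ref{def:superhedging} it is a $\bP^*$-martingale for every $\bP^*\in\cP$.

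Next I establish two pointwise bounds by case analysis on $\{\tilde V_T\geq H\}$ and its complement. On $\{\tilde V_T\geq H\}$ one has $\varphi_{\tilde V}=1\geq\tilde\varphi$ and $H\varphi_{\tilde V}=H\leq\tilde V_T$; on $\{\tilde V_T<H\}$ one has $\varphi_{\tilde V}=\tilde V_T/H\geq\tilde\varphi$ (because $\tilde V_T\geq H\tilde\varphi$) and $H\varphi_{\tilde V}=\tilde V_T$. Together this gives (a) $\varphi_{\tilde V}\geq\tilde\varphi$ $\bP$-a.s., which forces $\bE_\bP[\varphi_{\tilde V}]\geq\alpha$ and therefore $\varphi_{\tilde V}\in\cV^\alpha$, and (b) $H\varphi_{\tilde V}\leq\tilde V_T$ $\bP$-a.s. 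Using (b) with the martingale identity $\bE^*[\tilde V_T]=\tilde V_0$ valid for every $\bP^*\in\cP$, I obtain $\sup_{\bP^*\in\cP}\bE^*[H\varphi_{\tilde V}]\leq\tilde V_0=\sup_{\bP^*\in\cP}\bE^*[H\tilde\varphi]$.

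Since every success ratio lies in $\cR$, the inclusion $\{\varphi_V:V\in\cV^\alpha\}\subset\cR^\alpha$ holds, and therefore
\[
\inf_{V\in\cV^\alpha}\sup_{\bP^*\in\cP}\bE^*[H\varphi_V]\geq \inf_{\varphi\in\cR^\alpha}\sup_{\bP^*\in\cP}\bE^*[H\varphi]=\sup_{\bP^*\in\cP}\bE^*[H\tilde\varphi]\geq \sup_{\bP^*\in\cP}\bE^*[H\varphi_{\tilde V}].
\]
Combined with $\varphi_{\tilde V}\in\cV^\alpha$, this forces all these quantities to be equal and identifies $\varphi_{\tilde V}$ as a minimizer realizing \eqref{eq:successratiominimizer}. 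To obtain $\bE_\bP[\varphi_{\tilde V}]=\alpha$, I repeat the rescaling argument used inside the proof of Proposition \ref{prop:testfunctionsolution}: if $\bE_\bP[\varphi_{\tilde V}]>\alpha$, then $(1-\varepsilon)\varphi_{\tilde V}\in\cR^\alpha$ for sufficiently small $\varepsilon>0$ yields a strictly smaller supremum, contradicting the minimality of $\varphi_{\tilde V}$ over $\cR^\alpha$ just established. Finally, $\varphi_{\tilde V}\geq\tilde\varphi$ pointwise together with $\bE_\bP[\varphi_{\tilde V}]=\alpha=\bE_\bP[\tilde\varphi]$ forces $\varphi_{\tilde V}=\tilde\varphi$ $\bP$-a.s.

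The main obstacle, as I see it, is keeping the two optimization problems in sync: one must translate the abstract minimizer $\tilde\varphi\in\cR^\alpha$ into a concrete admissible strategy without losing optimality. Bound (b) is the crucial pointwise estimate, because it converts the martingale identity $\bE^*[\tilde V_T]=\tilde V_0$ into the dual inequality on $H\varphi_{\tilde V}$; the ``switch'' definition of $\varphi_V$ in \eqref{eq:successratiodef} is precisely what makes (a) and (b) hold simultaneously and thus glues the success-ratio formulation to the test-function formulation.
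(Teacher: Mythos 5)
Your proof is correct and follows essentially the same route as the paper: invoke Proposition \ref{prop:testfunctionsolution} to fix the minimizer $\tilde\varphi$, superhedge the modified claim $H\tilde\varphi$ to build $\tilde V$, establish $\varphi_{\tilde V}\geq\tilde\varphi$ and $H\varphi_{\tilde V}\leq\tilde V_T$ by inspecting the two cases in the definition of the success ratio, convert the second bound into a dual inequality via the martingale property of $\tilde V$, use the inclusion $\{\varphi_V:V\in\cV^\alpha\}\subset\cR^\alpha$ to close the sandwich, and finish with the rescaling argument and the equal-expectations trick to identify $\varphi_{\tilde V}=\tilde\varphi$ a.s. The only presentational difference is that you derive the two pointwise bounds simultaneously via explicit case analysis before assembling the chain of inequalities, whereas the paper interleaves them, but the ingredients and their order of logical dependence are the same.
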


\begin{proof}
	Note that 
	\[
	\left\{\varphi_V\in\cR:\ V\in\cV^\alpha\right\}\subseteq \cR^\alpha,
	\]
	and thus 
	\begin{equation}
	\label{eq:successratiossolutionviatest}
	\inf_{\varphi\in\cR^\alpha}\sup_{\bP^*\in\cP}\bE^*[H\varphi]\leq \inf_{V\in\cV^\alpha}\sup_{\bP^*\in\cP}\bE^*\left[H\varphi_{V}\right].
	\end{equation}
	By Proposition \ref{prop:testfunctionsolution}, we know that the left hand side of \eqref{eq:successratiossolutionviatest} admits a solution $\tilde\varphi\in\cR$. We prove that there exists $\tilde V\in\cV^\alpha$ such that
	\[
	\tilde\varphi=\varphi_{\tilde V}\quad \bP\text{-a.s.}
	\]
	Define the the modified claim
	\[
	\tilde H:=H\tilde\varphi.
	\]
	By Theorem 7.13 of \cite{follmerschied} there exists a minimal superhedging strategy $\tilde\xi$ with value process $\tilde V$ for $\tilde H$ such that 
	\[
	\tilde V_0 = \sup_{\bP^*\in\cP}\bE^*\left[\tilde H\right].
	\]
	First, $\tilde\xi$ can be assumed to be admissible by Remark \ref{rem:admissiblestrategies} and hence $\tilde V\in\cV$. Now, we show that $\tilde V\in\cV^\alpha$. We have
\begin{equation}
\label{eq:varphitildeinequality1}
\varphi_{\tilde V}=\mathds{1}_{\{\tilde V_T\geq H\}}+\frac{\tilde V_T}{H}\mathds{1}_{\{\tilde V_T<H\}}\geq \tilde\varphi\mathds{1}_{\{\tilde V_T\geq H\}} + \frac{H\tilde\varphi}{H}\mathds{1}_{\{\tilde V_T< H\}}=\tilde\varphi,
\end{equation}
where we used that $\tilde V$ is the value process of the minimal superhedging strategy of $\tilde H=H\tilde\varphi$ and $0\leq \tilde\varphi\leq 1$. Therefore, we get
\[
\bE_\bP[\varphi_{\tilde{V}}]\geq \bE_\bP	[\tilde\varphi]\geq \alpha,
\]
so $\tilde V\in\cV^\alpha$ and $\varphi_{\tilde{V}}\in\cR^\alpha$. It is left to show that $\tilde\varphi=\varphi_{\tilde V}$ $\bP$-a.s. By \eqref{eq:varphitildeinequality1} we obtain $\varphi_{\tilde V}\geq \tilde\varphi$. For the reverse direction we first show that $\varphi_{\tilde V}$ is also a minimizer of the problem \eqref{eq:minimizationthm}, i.e.,
\[
\sup_{\bP^*\in\cP}\bE^*[H\varphi_{\tilde V}]\leq \sup_{\bP^*\in\cP}\bE^*[H\tilde\varphi].	
\]
Indeed, since $\tilde V$ is the value process of an admissible strategy, $V$ is a $\bP^*$-martingale for all $\bP^*\in\cP$ by Theorem 5.14 of \cite{follmerschied} and thus we get that			
\begin{equation}
\label{eq:varphivminimal}
\sup_{\bP^*\in\cP}\bE^*[H\varphi_{\tilde V}]=\sup_{\bP^*\in\cP}\bE^*\left[H\left(\mathds{1}_{\{\tilde V_T\geq H\}}+\frac{\tilde V_T}{H}\mathds{1}_{\{\tilde V_T<H\}}\right)\right]\leq \sup_{\bP^*\in\cP}\bE^*[\tilde V_T]= \tilde V_0 = \sup_{\bP^*\in\cP}\bE^*[H\tilde{\varphi}],
\end{equation}
where we used in the last equality that $\tilde V_0$ is the superhedging price of $\tilde H=H\tilde\varphi$. In particular, $\varphi_{\tilde{V}}\in\cR^\alpha$ is a minimizer. By the same arguments as in \eqref{eq:varphitildealpha} it follows that
\begin{equation}
\label{eq:varphivalpha}
\bE_\bP[\varphi_{\tilde{V}}]=\alpha.
\end{equation}
Thus, we get by \eqref{eq:varphitildealpha} and \eqref{eq:varphivalpha} that
\[
\bE_\bP[\varphi_{\tilde V}]=\alpha=\bE_\bP[\tilde\varphi],	
\]
i.e., $\bE[\varphi_{\tilde V}-\tilde\varphi]=0$. Together with \eqref{eq:varphitildeinequality1}, this implies $\varphi_{\tilde V}=\tilde\varphi$ $\bP$-a.s. We have proved that $\tilde V\in\cV^\alpha$ and
\[
\sup_{\bP^*\in\cP}\bE^*\left[H\varphi_{\tilde V}\right]=\inf_{\varphi\in\cR^\alpha}\sup_{\bP^*\in\cP}\bE^*[H\varphi]\leq \inf_{V\in\cV^\alpha}\sup_{\bP^*\in\cP}\bE^*\left[H\varphi_{V}\right].
\]
In particular, $\varphi_{\tilde V}$ solves \eqref{eq:successratiominimizer} and the quantile hedging formulations of \eqref{eq:quantilehedgingtestfunctions} and \eqref{eq:quantilehedgingsuccessratios} are equivalent.
\end{proof}\noindent

\begin{corollary}
\label{cor:successratioconvergence}
	The following convergence holds:
	\[
	\inf_{V\in\cV^\alpha}\sup_{\bP^*\in\cP}\bE^*[H\varphi_V]\xrightarrow{\alpha\uparrow 1} \sup_{\bP^*\in\cP}\bE^*[H],	
	\]
	where $\varphi_V$ denotes the success ratio associated to a portfolio $V\in\cV$ as in \eqref{eq:successratiodef}.
\end{corollary}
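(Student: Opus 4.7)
The plan is to reduce the success-ratio problem to the randomized-test problem via Proposition \ref{prop:successratiosolution}, and then essentially replay the argument of Theorem \ref{thm:convergencesuccessset} with randomized tests in place of indicator functions. Proposition \ref{prop:successratiosolution} gives
\[
\inf_{V\in\cV^\alpha}\sup_{\bP^*\in\cP}\bE^*[H\varphi_V]=\inf_{\varphi\in\cR^\alpha}\sup_{\bP^*\in\cP}\bE^*[H\varphi],
\]
so it suffices to prove that the right-hand side converges to $\sup_{\bP^*\in\cP}\bE^*[H]$ as $\alpha\uparrow 1$. Since $\cR^{\alpha'}\subseteq \cR^\alpha$ whenever $\alpha<\alpha'<1$ and $\varphi\equiv 1$ belongs to every $\cR^\alpha$, the map $\alpha\mapsto \inf_{\varphi\in\cR^\alpha}\sup_{\bP^*\in\cP}\bE^*[H\varphi]$ is monotone increasing and bounded above by $\sup_{\bP^*\in\cP}\bE^*[H]$, so the limit $L$ exists and satisfies $L\leq \sup_{\bP^*\in\cP}\bE^*[H]$.

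For the matching lower bound $L\geq \sup_{\bP^*\in\cP}\bE^*[H]$ I would fix $\varepsilon>0$ and an increasing sequence $\alpha_n\uparrow 1$, and pick $\varphi_n\in\cR^{\alpha_n}$ with $\sup_{\bP^*\in\cP}\bE^*[H\varphi_n]<\inf_{\varphi\in\cR^{\alpha_n}}\sup_{\bP^*\in\cP}\bE^*[H\varphi]+\varepsilon$. Since $(\varphi_n)\subset L^\infty(\Omega,\cF_T,\bP;[0,1])$ is uniformly bounded, Lemma~1.70 of \cite{follmerschied} produces convex combinations $\tilde\varphi_n\in\mathrm{conv}\{\varphi_n,\varphi_{n+1},\dots\}$ converging $\bP$-a.s. to some $\varphi\in L^\infty(\Omega,\cF_T,\bP;[0,1])$. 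Writing $\tilde\varphi_n=\sum_{k\geq n}\lambda_k^n\varphi_k$ with $\sum_k\lambda_k^n=1$, exactly as in \eqref{eq:setproblempsi1}--\eqref{eq:setproblempsi2}, dominated convergence and $\bE_\bP[\varphi_k]\geq\alpha_k\to 1$ give $\bE_\bP[\varphi]\geq\liminf_n\alpha_n=1$, whence $\varphi=1$ $\bP$-a.s.

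To close the argument, for any $\bP^*\in\cP$ the convex-combination estimate yields
\[
\bE^*[H\tilde\varphi_n]=\sum_{k\geq n}\lambda_k^n\bE^*[H\varphi_k]\leq \sup_{k\geq n}\sup_{\bP^*\in\cP}\bE^*[H\varphi_k],
\]
while dominated convergence gives $\bE^*[H\tilde\varphi_n]\to \bE^*[H\varphi]=\bE^*[H]$. Combining with the near-optimality of $\varphi_k$,
\[
\bE^*[H]\leq\limsup_{k\to\infty}\sup_{\bP^*\in\cP}\bE^*[H\varphi_k]\leq \lim_{n\to\infty}\inf_{\varphi\in\cR^{\alpha_n}}\sup_{\bP^*\in\cP}\bE^*[H\varphi]+\varepsilon = L+\varepsilon.
\]
Taking the supremum over $\bP^*\in\cP$ and then letting $\varepsilon\downarrow 0$ gives $\sup_{\bP^*\in\cP}\bE^*[H]\leq L$, which completes the sandwich.

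The main obstacle is exactly the step carried out via Lemma~1.70, namely extracting an $\bP$-a.s.\ limit from the near-minimisers and verifying that this limit equals $1$ $\bP$-a.s.; once that is in hand the remainder is a routine transcription of the proof of Theorem \ref{thm:convergencesuccessset}, with $\mathds{1}_{A_n}$ replaced by the randomized test $\varphi_n$.
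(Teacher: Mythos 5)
Your proof is correct and takes exactly the approach the paper has in mind when it says the argument is "similar to the one of Theorem \ref{thm:convergencesuccessset} and is omitted": reduce the success-ratio infimum to the randomized-test infimum via Proposition \ref{prop:successratiosolution}, extract a near-minimizing sequence, apply Lemma~1.70 of \cite{follmerschied} to obtain an a.s.\ limit, show the limit is $1$, and pass to the limit via dominated convergence and the convex-combination estimate. The only mild difference is that you make the reduction to $\cR^\alpha$ explicit rather than carrying the success ratios $\varphi_V$ through the argument, which is a clean simplification and loses nothing.
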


\begin{proof}
The proof is similar to the one of Theorem \ref{thm:convergencesuccessset} and is omitted.
\end{proof}\noindent

\subsection{Neural network approximation for $t=0$}
\label{sec:NNprice0}
We now study how to approximate the superhedging price at $t=0$ by using neural networks.  \\
We recall the following definition, see e.g.\ \cite{buehler2018}. Common choices for $\sigma$ below are $\sigma(x)=\frac{1}{1-e^{-x}}$ and $\sigma(x)=\tanh(x)$.
\begin{definition}\label{def:nn} Consider $L, N_0,N_1,\ldots,N_L \in \N$ with $L \geq 2$, $\activF \colon (\R,\cB(\bR)) \to (\R,\cB(\bR))$ measurable and for any $\ell=1,\ldots,L$, let
	$W_\ell \colon \R^{N_{\ell-1}} \to \R^{N_\ell}$ be an affine
	function. A function $F \colon \R^{N_0} \to \R^{N_L}$ defined as
	\[ F(x)=W_L \circ F_{L-1} \circ \cdots \circ F_1 \text{ with }
	F_\ell = \activF \circ W_\ell \, \text{ for } \ell=1,\ldots,L-1, \]
	is called a \emph{(feed forward) neural network}. Here the
	\textit{activation function} $\activF$ is applied componentwise. $L$
	denotes the number of layers, $N_1,\ldots,N_{L-1}$ denote the
	dimensions of the hidden layers and $N_0$, $N_L$ the dimension of the input and output layers, respectively. For any $\ell=1,\ldots,L$ the affine
	function $W_\ell$ is given as $ W_\ell(x) = A^\ell x + b^\ell$ for
	some $A^\ell \in \R^{N_\ell \times N_{\ell-1}}$ and
	$b^\ell \in \R^{N_\ell}$. For any
	$i=1,\ldots N_\ell, j=1,\ldots,N_{\ell-1}$ the number $A^\ell_{i j}$
	is interpreted as the weight of the edge connecting the node $i$ of
	layer $\ell-1$ to node $j$ of layer $\ell$. The number of non-zero
	weights of a network is $
	\sum_{\ell=1}^L \|A^\ell\|_0 + \|b^\ell\|_0$, i.e.\
	the sum of the number of non-zero entries of
	the matrices $A^\ell$, $\ell=1,\ldots,L$, and vectors $b^\ell$,
	$\ell=1,\ldots,L$.
\end{definition}\noindent
For $k=1,\ldots,T+1$ we denote the set of all possible neural network parameters corresponding to neural networks mapping $\R^{mk} \to \R^d$ by
\[
\Theta_k = \cup_{L \geq 2} \cup_{(N_0,\ldots,N_L) \in \{mk \} \times \N^{L-1} \times \{d\}} \left(\times_{\ell = 1}^{L} \R^{N_\ell \times N_{\ell-1}} \times \R^{N_\ell} \right).
\]
With $F^{\theta_k}$ we denote the neural network with parameters specified by $\theta_k\in\Theta_k$, see Definition \ref{def:nn}. Recall that $\cF_t=\sigma(Y_0,\dots,Y_t)=\sigma(\cY_t)$ for $t=0,\dots,T$, and for some $\bR^m$-valued stochastic process $Y$. Then, any $\cF_t$-measurable random variable $Z$ can be written as $Z=f_t(\cY_t)$ for some measurable function $f_t$. Using Theorem \ref{thm:universalapproxprob}, $f_t$ can be approximated by a deep neural network in a suitable metric.\\
The approximate superhedging price is then 
\begin{equation}\label{eq:NNprice}
\resizebox{15cm}{!}{
$\inf\cU_0^\Theta= \inf  \left\lbrace u\in \R \,:\, \exists \, \theta_{k,\xi} \in \Theta_k, k=1,\dots,T, \text{ s.t. }  u + \sum_{k=1}^T F^{\theta_{k,\xi}}(\Yc_{k-1}) \cdot (X_k - X_{k-1}) \geq H\ \bP\text{-a.s.} \right\rbrace.$
}
\end{equation}
For $\alpha\in (0,1)$ the approximate $\alpha$-quantile hedging price is then 
\begin{equation}\label{eq:NNpriceAlpha}
\resizebox{15cm}{!}{
$\inf\cU_0^{\Theta,\alpha}= \inf  \left\lbrace u \in \R \,:\, \exists \, \theta_{k,\xi} \in \Theta_k, k=1,\ldots,T \text{ s.t. } \P\left( u + \sum_{k=1}^T F^{\theta_{k,\xi}}(\Yc_{k-1}) \cdot (X_k - X_{k-1}) \geq H \right) \geq \alpha \right\rbrace.$}
\end{equation}
For $C>0$ we also define the truncated approximate superhedging price $\inf\cU_0^{\Theta,C}$ and the truncated approximate $\alpha$-quantile hedging price $\inf\cU_0^{\Theta,C,\alpha}$ with
\begin{equation}
\resizebox{15cm}{!}
{
$\cU_0^{\Theta,C}:=\left\lbrace u\in\bR:\exists\theta_{k,\xi}\in\Theta_k, k=1,\dots,T\text{ s.t. }u+\sum_{k=1}^T \left(\left(F^{\theta_{k,\xi}}\wedge C\right)\vee (-C)\right)(\cY_{t-1})\cdot(X_k-X_{k-1})\geq H\, \bP\text{-a.s.}\right\rbrace$}
\end{equation}
and
\begin{equation}
\resizebox{15cm}{!}{
$\cU_0^{\Theta,C,\alpha}:=\left\lbrace u\in\bR:\exists\theta_{k,\xi}\in\Theta_k, k=1,\dots,T\text{ s.t. }\bP\left(u+\sum_{k=1}^T \left(\left(F^{\theta_{k,\xi}}\wedge C\right)\vee (-C)\right)(\cY_{t-1})\cdot(X_k-X_{k-1})\geq H\right)\geq\alpha\right\rbrace$},
\end{equation}
where the maximum and minimum are taken componentwise.

\begin{assumption}
	\label{ass:bdd}
	Suppose that 
	\[
	\resizebox{15cm}{!}
	{
	$\inf\cU_0=\inf\cU_0^{\text{bdd}}:=\inf\left\lbrace u\in\bR:\exists\xi\text{ pred. s.t. }\xi_k\in L^\infty\ \forall k\in\{1,\dots,T\},\ u+\sum_{k=1}^T\xi_k\cdot(X_k-X_{k-1})\geq H\ \bP\text{-a.s.}\right\rbrace,$}
	\]
	where $\|\cdot\|_\infty$ denotes the $L^\infty$-norm.
\end{assumption}
The next result shows that $\inf\cU_0^{\Theta,C,\alpha}$ can be used as an approximation of the superhedging price $\inf\cU_0$.

\begin{theorem}
\label{thm:universalapprox2} Assume $\sigma$ is bounded and non-constant. Further, suppose Assumption \ref{ass:bdd} is fulfilled. Then for any $\varepsilon>0$ there exists $\alpha=\alpha(\varepsilon)\in (0,1)$ and $C=C(\varepsilon)\in (0,\infty)$ such that
	\begin{equation}
	\label{eq:NNpriceAlphaApproxB}
	\inf\cU_0+\varepsilon\geq \inf\cU_0^{\Theta,C,\alpha}\geq \inf\cU_0-\varepsilon.	
	\end{equation}
\end{theorem}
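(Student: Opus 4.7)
The plan is to prove the two inequalities in \eqref{eq:NNpriceAlphaApproxB} separately: I first pick $C=C(\varepsilon)$ large enough to accommodate a bounded superhedging strategy supplied by Assumption \ref{ass:bdd}, and then pick $\alpha=\alpha(\varepsilon)$ sufficiently close to $1$. The upper bound is a direct neural network approximation; the lower bound is more subtle because $\cU_0^{\Theta,C,\alpha}$ does not enforce admissibility (cf.\ Remark \ref{rem:admissiblestrategies}), so a non-admissible bounded strategy could a priori push the quantile hedging price arbitrarily low.

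\emph{Upper bound.} By Assumption \ref{ass:bdd}, pick $u_0\leq\inf\cU_0+\varepsilon/2$ together with a bounded predictable $(\xi_k)$ satisfying $\|\xi_k\|_\infty\leq C_1$ and $u_0+\sum_k\xi_k\cdot(X_k-X_{k-1})\geq H$ $\bP$-a.s., and set $C=C_1+1$. Writing $\xi_k=g_k(\cY_{k-1})$, Theorem \ref{thm:universalapproxprob} yields NN parameters $\theta_k$ with $\bP(|F^{\theta_k}(\cY_{k-1})-g_k(\cY_{k-1})|>\delta)<\delta'$ for any $\delta,\delta'>0$; since $|g_k|\leq C_1<C$, truncation at $\pm C$ does not worsen this bound. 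Because $\sum_k|X_k-X_{k-1}|<\infty$ $\bP$-a.s., one can choose $N$ with $\bP(\sum_k|X_k-X_{k-1}|>N)\leq(1-\alpha)/2$, and then $\delta=\varepsilon/(2N)$, $\delta'=(1-\alpha)/(2T)$. On the event $\{\max_k|\tilde F^{\theta_k}(\cY_{k-1})-\xi_k|\leq\delta\}\cap\{\sum_k|X_k-X_{k-1}|\leq N\}$, which has $\bP$-probability $\geq\alpha$ by a union bound, the approximation error is at most $\delta N\leq\varepsilon/2$; the NN strategy with capital $u':=u_0+\varepsilon/2\leq\inf\cU_0+\varepsilon$ thus dominates $H$ on this set, so $u'\in\cU_0^{\Theta,C,\alpha}$. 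This proves the upper bound for any $\alpha\in(0,1)$.

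\emph{Lower bound.} Fix the above $C$ and argue by contradiction: suppose for every $\alpha<1$ there is $\tilde u_\alpha\in\cU_0^{\Theta,C,\alpha}$ with $\tilde u_\alpha<\inf\cU_0-\varepsilon$. Take $\alpha_n\uparrow 1$ and write $\tilde u_n$, together with the truncated NN strategies $(\tilde\xi_{k,n})$ bounded by $C$ and success sets $A_n$ with $\bP(A_n)\geq\alpha_n$. Fix $\bP^*\in\cP$. Boundedness of $\tilde\xi_{k,n}$ together with $X_k-X_{k-1}\in L^1(\bP^*)$ imply that $M_n:=\sum_k\tilde\xi_{k,n}\cdot(X_k-X_{k-1})$ is a $\bP^*$-martingale with $\bE^*[M_n]=0$, hence $\tilde u_n=\bE^*[\tilde u_n+M_n]$. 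Splitting this expectation across $A_n$ and $A_n^c$ and using $\tilde u_n+M_n\geq H$ on $A_n$ yields
\[
\tilde u_n\,\bP^*(A_n)\geq \bE^*[H\mathds{1}_{A_n}]+\bE^*[M_n\mathds{1}_{A_n^c}].
\]
Since $\bP^*\ll\bP$ and $\bP(A_n^c)\leq 1-\alpha_n\to 0$, also $\bP^*(A_n^c)\to 0$; dominated convergence, with dominants $H$ and $C\sum_k|X_k-X_{k-1}|$ (both in $L^1(\bP^*)$), gives $\bE^*[H\mathds{1}_{A_n}]\to\bE^*[H]$ and $\bE^*[M_n\mathds{1}_{A_n^c}]\to 0$. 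A case analysis on $\liminf_n\tilde u_n$ rules out $\liminf=-\infty$ (otherwise $\tilde u_n\bP^*(A_n)$ would diverge to $-\infty$, contradicting the inequality) and in the finite case, via a convergent subsequence, yields $\liminf_n\tilde u_n\geq\bE^*[H]$. Since this holds for every $\bP^*\in\cP$, superhedging duality (Theorem 7.13 in \cite{follmerschied}) gives $\liminf_n\tilde u_n\geq\sup_{\bP^*\in\cP}\bE^*[H]=\inf\cU_0$, contradicting $\tilde u_n<\inf\cU_0-\varepsilon$; hence the lower bound holds once $\alpha$ is close enough to $1$.

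The main obstacle is the absence of admissibility in the definition of $\cU_0^{\Theta,C,\alpha}$, which in principle would allow ``cheating'' via non-admissible strategies. The truncation level $C$ (finite, though possibly growing as $\varepsilon\downarrow 0$) saves the argument: it forces $M_n$ to be a genuine $\bP^*$-martingale, and dominated convergence through $\bP^*\ll\bP$ then absorbs the error on $A_n^c$ as $\alpha_n\uparrow 1$.
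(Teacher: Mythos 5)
Your proof is correct and uses essentially the same ingredients as the paper's: for the upper bound, truncated neural-network approximation of a bounded near-optimal strategy on a large-probability event; for the lower bound, a martingale argument via Theorem 5.14 of \cite{follmerschied} combined with dominated convergence (with dominating function $C\sum_k|X_k-X_{k-1}|$) to show the quantile price cannot stay below $\inf\cU_0-\varepsilon$ as $\alpha\uparrow 1$. The only organizational difference is that you argue the lower bound by contradiction with arbitrary elements $\tilde u_n\in\cU_0^{\Theta,C,\alpha_n}$, whereas the paper directly proves convergence of $\inf\cU_0^{C,\alpha}$ over all $C$-bounded strategies and then uses $\cU_0^{\Theta,C,\alpha}\subset\cU_0^{C,\alpha}$; the substance is the same.
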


\begin{proof} 
By Assumption \ref{ass:bdd} we can consider $\inf\cU_0^{\text{bdd}}$ instead of $\inf\cU_0$. Set $\tilde u_0=\inf\cU_0^{\text{bdd}}$. Then for $\varepsilon>0$ there exists a predictable strategy $ \tilde{\xi}$ such that $\sup_{1\leq k\leq T}\| \tilde\xi_k\|_\infty<\infty$ and $\tilde u_0 + \frac{\varepsilon}{2} +  \sum_{k=1}^T \tilde{\xi}_k \cdot (X_k - X_{k-1}) \geq H,\ \P\text{-a.s.}$ Define $C=C(\varepsilon)$ by
\begin{equation}
\label{eq:NNdefC}
C:=\sup_{1\leq k\leq T}\| \tilde\xi_k\|_\infty+1.
\end{equation}
Further, for $\alpha\in (0,1]$ define $\cU_0^{C,\alpha}$ by
\[
\cU_0^{C,\alpha}:=\left\lbrace u\in\bR:\exists\xi\text{ pred. s.t. }\sup_{1\leq k\leq T}\|\xi_k\|_\infty\leq C,\ \bP\left(u+\sum_{k=1}^T\xi_k\cdot(X_k-X_{k-1})\geq H\right)\geq \alpha\right\rbrace.
\]
First, we prove that the limit of $\inf\cU_0^{C,\alpha}$ for $\alpha$ tending to $1$ exists and that
\begin{equation}
\label{eq:NNobjectiveconv}
\inf\cU_0^{\text{bdd}}\leq \lim_{\alpha\to 1}\inf\cU_0^{C,\alpha}\leq \inf\cU_0^{\text{bdd}}+\varepsilon.
\end{equation}
Let $(\alpha_n)_{n\in\bN}\subset (0,1)$ be a sequence such that $\alpha_n\uparrow 1$ as $n$ tends to infinity. Then, $\inf\cU_0^{C,\alpha_n}\leq \inf\cU_0^{C,\alpha_{n+1}}\leq \inf\cU_0^{C,1}=:\inf\cU_0^C$ since
\[
\cU_0^{C,\alpha_n}\supset\cU_0^{C,\alpha_{n+1}},
\]
and therefore $u_n\leq u_{n+1}$, where $u_n:=\inf\cU_0^{C,\alpha_n}$, for $n\in\bN$. Thus, the limit $u^C=\lim_{n\to\infty} u_n$ is well-defined and $u^C\leq \inf\cU_0^C$. Furthermore, for $n\in\bN$ and $\delta>0$, there exists $\xi^{(n)}$ predictable such that $\sup_{1\leq k\leq T}\|\xi_k^{(n)}\|_\infty \leq C$ and
\begin{equation}
\label{eq:increasingprobability}
\bP\left(u_n+\delta+\sum_{k=1}^T\xi_k^{(n)}\cdot(X_k-X_{k-1})\geq H\right)\geq \alpha_n.
\end{equation}
For $n\in\bN$, define $A_n\in\cF_T$ by
\[
A_n:=\left\lbrace u_n+\delta+\sum_{k=1}^T\xi_k^{(n)}\cdot(X_k-X_{k-1})\geq H\right\rbrace.
\]
Then $\bP(A_n)\geq \alpha_n$ and hence $\bP(A_n)\uparrow 1$ as $n$ tends to infinity. Since $\sup_{1\leq k\leq T}\|\xi_k^{(n)}\|_\infty\leq C$ for all $n\in\bN$ we get by Theorem 5.14 of \cite{follmerschied} for any $\bP^*\in\cP$ that
\begin{align}
\label{eq:proof0conv1}
	u_{n}+\delta &=\bE^*\left[u_n+\delta+\sum_{k=1}^T\xi_k^{(n)}\cdot(X_k-X_{k-1})\right]\\
	\nonumber
	&\geq \bE^*\left[H\mathds{1}_{A_n}\right]+\bE^*\left[\left(u_n+\delta+\sum_{k=1}^T\xi_k^{(n)}\cdot(X_k-X_{k-1})\right)\mathds{1}_{A_n^c}\right]\\
	\nonumber
&\geq \bE^*[H\mathds{1}_{A_n}]+\bE^*\left[\left(u_n+\delta-\sum_{k=1}^T \sum_{i=1}^d |\xi_k^{i,(n)}| |X_k^i-X_{k-1}^i|\right)\mathds{1}_{A_n^c}\right]\\
\label{eq:proof0conv2}
&\geq \bE^*[H\mathds{1}_{A_n}]+\bE^*\left[\left(u_n+\delta-C\sum_{k=1}^T \sum_{i=1}^d|X_k^i-X_{k-1}^i|\right)\mathds{1}_{A_n^c}\right].
\end{align}
Recall that $X=(X^1,\dots,X^d)$ is a $d$-dimensional $\bP^*$-martingale, $u_n\leq u^C\leq \tilde u_0+\frac{\varepsilon}{2}$, $n\in\bN$ and thus for all $n\in\bN$
\[
\left|u_n+\delta-C\sum_{k=1}^T\sum_{i=1}^d |X_k^i-X_{k-1}^i|\right|\leq \left(\left|u_C+\delta\right|+|C|\sum_{k=1}^T\sum_{i=1}^d \|X_k-X_{k-1}\|\right)\in L^1(\Omega,\cF_T,\bP^*).
\]
Furthermore, $\mathds{1}_{A_n}$ converges to  $1$ in probability as $n$ tends to infinity, since for any $\gamma \in (0,1)$ we have
\[
\bP\left(\left|\mathds{1}_{A_n}-1\right|>\gamma\right)=\bP\left(\mathds{1}_{A_n^c}>\gamma\right)=\bP(A_n^c)\xrightarrow{n\to\infty} 0,
\]
because of \eqref{eq:increasingprobability}. By dominated convergence we obtain that
\[
\lim_{n\to\infty}\bE^*[H\mathds{1}_{A_{n}}]=\bE^*[H],
\]
and 
\[
\lim_{n\to\infty}\bE^*\left[\left(u_{n}+\delta-C\sum_{k=1}^T\sum_{i=1}^d|X_k^i-X_{k-1}^i|\right)\mathds{1}_{A_{n}^c}\right]=0.
\]
Note that for dominated convergence, it is sufficient that $\mathds{1}_{A_n}$ converges only in probability.
Taking $n$ to infinity in \eqref{eq:proof0conv1} and \eqref{eq:proof0conv2} yields
\begin{equation}
\label{eq:limitC}
\resizebox{15cm}{!}{
$\lim_{n\to\infty}u_{n}+\delta=u^C+\delta \geq \lim_{n\to\infty}\left(\bE^*[H\mathds{1}_{A_{n}}]+\bE^*\left[\left(u_{n}+\delta-C\sum_{k=1}^T\sum_{i=1}^d|X_k^i-X_{k-1}^i|\right)\mathds{1}_{A_{n}^c}\right]\right)=\bE^*[H].$}
\end{equation}
As \eqref{eq:limitC} holds for all $\bP^*\in\cP$ we get by the superhedging duality that
\[
\lim_{n\to\infty}\inf\cU_0^{C,\alpha_n}+\delta=u^C+\delta \geq \sup_{\bP^*\in\cP}\bE^*[H]=\inf\cU_0=\inf\cU_0^{\text{bdd}}.
\]
Because $\delta>0$ was arbitrary, this implies 
\[
\lim_{\alpha\to 1}\inf\cU_0^{C,\alpha}\geq \inf\cU_0=\inf\cU_0^{\text{bdd}}.
\]
To conclude the proof of \eqref{eq:NNobjectiveconv}, we note that $(\tilde u_0+\frac{\varepsilon}{2})\in\cU_0^C$ by definition and that $\cU_0^C\subset \cU_0^{\text{bdd}}$. This implies that $\inf \cU_0^{\text{bdd}}\leq \inf\cU_0^C$, and
\[
\lim_{\alpha\to 1}\inf\cU_0^{C,\alpha}\leq \inf\cU_0^C\leq \tilde u_0+\frac{\varepsilon}{2}\leq  \inf\cU_0^{\text{bdd}}+\varepsilon,
\]
hence \eqref{eq:NNobjectiveconv} follows. We observe that $\cU_0^{\Theta,C,\alpha}\subset \cU_0^{C,\alpha}$ for all $\alpha\in (0,1)$. Furthermore, by \eqref{eq:NNobjectiveconv} for $\varepsilon>0$ there exists $\alpha=\alpha(\varepsilon)\in (0,1)$ such that 
\begin{equation}
\label{eq:NNalpha}
\inf\cU_0-\varepsilon=\inf\cU_0^{\text{bdd}}-\varepsilon\leq \inf\cU_0^{C,\alpha}\leq \inf\cU_0^{\Theta,C,\alpha},
\end{equation}
which proves the second inequality in \eqref{eq:NNpriceAlphaApproxB}.\\
To prove the first inequality in \eqref{eq:NNpriceAlphaApproxB}, let $\alpha$ be given. Consider
\[
M_n = \left\lbrace \tilde u_0 + \frac{\varepsilon}{2} +  \sum_{k=1}^T \tilde{\xi}_k \cdot (X_k - X_{k-1}) \geq H  \right\rbrace \cap \{ \|X_k-X_{k-1}\| \leq n \text{ for } k=1,\ldots,T\},
\]
for $n\in\bN$. Then $M_n \subset M_{n+1}$ and therefore by continuity from below
\[
1= \P\left(\tilde u_0 + \frac{\varepsilon}{2} +  \sum_{k=1}^T \tilde{\xi}_k \cdot (X_k - X_{k-1}) \geq H \right) = \P(\cup_{n \in \N} M_n ) = \lim_{n \to \infty} \P(M_n).
\]
Thus, we may choose $n \in \N$ such that $\P(M_n) \geq \frac{\alpha+1}{2}$.
As $ \tilde{\xi}$ is predictable, for each $k=1,\ldots,T$ there exists a measurable function $f_k \colon (\R^{mk},\cB(\bR^{mk})) \to (\R^d,\cB(\bR^d))$ such that $\tilde{\xi}_k = f_k(\Yc_{k-1})$. By the universal approximation theorem \cite[Theorem~1 and Section~3]{hornik1991}, see also Theorem \ref{thm:universalapproxprob} in the appendix, with measure $\mu$ given by the law of $\Yc_{k-1}$ under $\P$, for each $k=1,\ldots,T$ there exists $\theta_{k,\tilde\xi} \in \Theta$ such that 
\begin{equation}
\label{eq:NNapproxD}
\P(D_k) < \frac{1-\alpha}{2T}, \quad \text{ where } D_k =\left\lbrace \omega \in \Omega \colon \|f_k(\Yc_{k-1}(\omega))-F^{\theta_{k,\tilde\xi}}(\Yc_{k-1}(\omega))\|>\left(\frac{\varepsilon}{2nT}\wedge\frac{1}{2}\right)\right\rbrace.
\end{equation}
Define 
\[
\tilde F^{\theta_{k,\tilde\xi}}:=\left(F^{\theta_{k,\tilde\xi}}\wedge C\right)\vee (-C),\quad k=1,\dots,T.
\]
By the definition of $C$ in \eqref{eq:NNdefC}, we get that
\[
\|\tilde\xi_k\|_\infty+\left(\frac{\varepsilon}{2nT}\wedge\frac{1}{2}\right)< C \quad \text{ for all }k=1,\dots,T.
\]
On $D_k^c$ we have for $i\in \{1,\dots,d\}$ that
\[
\left|F_i^{\theta_{k,\tilde\xi}}(\cY_{k-1})\right|\leq \left\|F^{\theta_{k,\tilde\xi}}(\cY_{k-1})\right\|\leq \left\|\tilde\xi_k\right\|_\infty+\left(\frac{\varepsilon}{2nT}\wedge \frac{1}{2}\right)<C, 
\] 
and hence $\tilde F_i^{\theta_{k,\tilde\xi}}(\cY_{k-1})=F_i^{\theta_{k,\tilde\xi}}(\cY_{k-1})$ on $D_k^c$. Conversely, for $\omega\in\Omega$ such that
\[
\|f_k(\Yc_{k-1}(\omega))-\tilde F^{\theta_{k,\tilde\xi}}(\Yc_{k-1}(\omega))\|\leq\left(\frac{\varepsilon}{2nT}\wedge\frac{1}{2}\right),
\]
we get for $i\in\{1,\dots,d\}$ that
\[
\left|\tilde F_i^{\theta_{k,\tilde\xi}}(\cY_{k-1}(\omega))\right|\leq \left\|\tilde F^{\theta_{k,\tilde\xi}}(\cY_{k-1}(\omega))\right\|\leq \left\|\tilde\xi_k\right\|_\infty+\left(\frac{\varepsilon}{2nT}\wedge \frac{1}{2}\right)<C,
\]
and hence $\tilde F_i^{\theta_{k,\tilde\xi}}(\cY_{k-1}(\omega))=F_i^{\theta_{k,\tilde\xi}}(\cY_{k-1}(\omega))$. In particular,
\[
\resizebox{15cm}{!}
{
$\left\lbrace \omega\in\Omega\colon \|f_k(\Yc_{k-1}(\omega))-\tilde F^{\theta_{k,\tilde\xi}}(\Yc_{k-1}(\omega))\|\leq\left(\frac{\varepsilon}{2nT}\wedge\frac{1}{2}\right)\right\rbrace=\underbrace{\left\lbrace \omega\in\Omega\colon \|f_k(\Yc_{k-1}(\omega))-F^{\theta_{k,\tilde\xi}}(\Yc_{k-1}(\omega))\|\leq\left(\frac{\varepsilon}{2nT}\wedge\frac{1}{2}\right)\right\rbrace}_{=D_k^c},$}
\]
for all $k=1,\dots,T$. Therefore, we get that $D_k=\tilde D_k$ with
\[
\tilde D_k =\left\lbrace \omega \in \Omega \colon \|f_k(\Yc_{k-1}(\omega))-\tilde F^{\theta_{k,\tilde\xi}}(\Yc_{k-1}(\omega))\|>\left(\frac{\varepsilon}{2nT}\wedge\frac{1}{2}\right)\right\rbrace,
\]
and
\[
\P(\tilde D_k) < \frac{1-\alpha}{2T}.
\]
On $M_n \cap \tilde D_1^c \cap \ldots \cap \tilde D_T^c$ we have 
\[\begin{aligned}
\sum_{k=1}^T \tilde{\xi}_k \cdot (X_k - X_{k-1}) & = \sum_{k=1}^T (\tilde{\xi}_k-\tilde F^{\theta_{k,\tilde\xi}}(\Yc_{k-1})) \cdot (X_k - X_{k-1}) + \sum_{k=1}^T \tilde F^{\theta_{k,\tilde\xi}}(\Yc_{k-1}) \cdot (X_k - X_{k-1}) 
\\ & \leq \sum_{k=1}^T \|f_k(\Yc_{k-1})-\tilde F^{\theta_{k,\tilde\xi}}(\Yc_{k-1})\| \|X_k - X_{k-1}\| \\
&+ \sum_{k=1}^T \tilde F^{\theta_{k,\tilde\xi}}(\Yc_{k-1}) \cdot (X_k - X_{k-1})
\\ & \leq \frac{\varepsilon}{2} + \sum_{k=1}^T \tilde F^{\theta_{k,\tilde\xi}}(\Yc_{k-1}) \cdot (X_k - X_{k-1})
\end{aligned} \]
and therefore
\[
M_n \cap \tilde D_1^c \cap \ldots \cap \tilde D_T^c \subset \left\lbrace \tilde u_0 + \varepsilon +  \sum_{k=1}^T \tilde F^{\theta_{k,\tilde\xi}}(\Yc_{k-1}) \cdot (X_k - X_{k-1}) \geq H  \right\rbrace.
\]
This inclusion and the Fr\'echet inequalities\footnote{For $C_1,\dots,C_l\in\cF$ it holds that $P(C_1\cap\dots\cap C_l)\geq \max\{0,\bP(C_1)+\dots+\bP(C_l)-(l-1)\}$.} yield
\[\begin{aligned}
\P & \left( \tilde u_0 + \varepsilon +  \sum_{k=1}^T \tilde F^{\theta_{k,\tilde\xi}}(\Yc_{k-1}) \cdot (X_k - X_{k-1}) \geq H\right) 
\\ \geq & \P(M_n \cap \tilde D_1^c \cap \ldots \cap \tilde D_T^c)\\
 \geq &\P(M_n) + \P(\tilde D_1^c)+ \cdots + \P(\tilde D_T^c) - T \geq \frac{\alpha+1}{2} + T\left(1-\frac{1-\alpha}{2T}\right) - T = \alpha.
\end{aligned}\]
This proves the left inequality of \eqref{eq:NNpriceAlphaApproxB}.
\end{proof}

\begin{remark}
	Note that in the proof of Theorem \ref{thm:universalapprox2} we compute both the price at $t=0$ and the superhedging strategy for the complete interval. 
\end{remark}

\begin{remark}
	Thanks to the universal approximation theorem in \cite{hornik1991}, we could in fact restrict our attention to neural networks with one hidden layer and the result in Theorem~\ref{thm:universalapprox2} remains valid. Thus, for each $k=1,\ldots,T$ we could fix $L=2$, $N_0 = mk$, $N_2=d$ and consider instead the simpler parameter sets 
	\[\begin{aligned}
	\Theta_k & = \cup_{N_1 \in \N} (\R^{N_1 \times mk} \times \R^{N_1}) \times (\R^{d \times N_1} \times \R^d)
	\\ \Theta^C_k & =  ([-C,C]^{C \times mk} \times [-C,C]^{C}) \times ([-C,C]^{d \times N_1} \times [-C,C]^d).
	\end{aligned}
	\]
	Note the simpler form of $\Theta^C_k$, which is due to the fact that all one-hidden layer networks with $N_1 \leq C$ hidden nodes can be written as one-hidden layer networks with $C$ hidden nodes and appropriate weights set to $0$.
\end{remark}

\section{Superhedging price for $t>0$}
\label{sec:superhedgingt}
In this section we establish a method to approximate superhedging prices for $t>0$. Using a version of the uniform Doob decomposition, see Theorem 7.5 of \cite{follmerschied}, the problem reduces to the approximation of the so-called process of consumption. In the first part, we build the theoretical basis for this approach. In the second part we prove that this method can be used to approximate the superhedging price for $t>0$ by neural networks.

\subsection{Uniform Doob Decomposition}
\label{sec:uniformDoob}
We briefly summarize some results on superhedging in discrete time in Corollary \ref{cor:resultsfollmerschied} below. For a more detailed overview we refer to Chapter 7 of \cite{follmerschied}.\\
Recall that $H$ denotes a discounted European claim satisfying
\[
\sup_{\bP^*\in\cP}\bE^*[H]<\infty.
\]
The superhedging price at $t=0$, $\sup_{\bP^*\in\cP}\bE^*[H]$ and the associated strategy $\xi$ can be calculated as in Section \ref{sec:price0} and so we consider them as known. The remaining unknown component is the process of consumption $B$ given by \eqref{eq:uniformdoob}. By Corollary \ref{cor:resultsfollmerschied}, 
\[
\left(\esssup_{\bP^*\in\cP}\bE^*[H\mid\cF_t]\right)_{t=0,1,\dots,T}
\]
is the smallest $\cP$-supermartingale whose terminal value dominates $H$. Consider the stochastic process $\tilde B=(\tilde B_t)_{t=0,\dots,T}$ defined as $\tilde B_0:=0$ and for $t=1,\dots,T$,
\begin{equation}
	\label{eq:optimizationB}
	\tilde B_t:=\esssup\cB_t,
\end{equation}
where
\begin{equation}
\label{eq:optimizationBset}
\cB_t:=\left\{D_t\in L^0(\Omega,\cF_t,\bP):\tilde B_{t-1}\leq D_t\leq \sup_{\bP^*\in\cP}\bE^*[H]+\sum_{k=1}^T\xi_k\cdot(X_k-X_{k-1})-H\ \bP\text{-a.s.}\right\}.
\end{equation}

\begin{proposition}
	\label{prop:optimizationB}
	We have that
	\[
	B_t=\tilde B_t\quad\bP\text{-a.s., for all }t=0,\dots,T,
	\]
	where $B$ is given in \eqref{eq:uniformdoob} and $\tilde B$ in \eqref{eq:optimizationBset}, respectively.
\end{proposition}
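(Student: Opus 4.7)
The plan is a short induction on $t$ that rests on representing $B_t$ as an essential infimum. Two preliminary observations are needed. First, evaluating \eqref{eq:uniformdoob} at $t=T$ and using that $U_T^\uparrow = H$ (since $H$ is $\cF_T$-measurable), one obtains
$$B_T = \sup_{\bP^*\in\cP}\bE^*[H] + \sum_{k=1}^T \xi_k \cdot (X_k - X_{k-1}) - H,$$
so the upper bound appearing in the definition \eqref{eq:optimizationBset} of $\cB_t$ is exactly $B_T$. Second, for any $\bP^* \in \cP$, the value process $V_s := \sup_{\bP^*\in\cP}\bE^*[H] + \sum_{k=1}^s \xi_k \cdot (X_k - X_{k-1})$ satisfies $V_T = H + B_T \geq 0$, so Theorem~5.14 of \cite{follmerschied} implies that the gain process $G_s := \sum_{k=1}^s \xi_k \cdot (X_k - X_{k-1})$ is a $\bP^*$-martingale.

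The crucial step is then to derive an essinf representation of $B_t$. Taking $\bP^*$-conditional expectations in the formula for $B_T$ and using the martingale property of $G$ gives
$$\bE^*[B_T \mid \cF_t] = \sup_{\bP^*\in\cP}\bE^*[H] + G_t - \bE^*[H \mid \cF_t].$$
Taking the essinf over $\bP^* \in \cP$ on both sides (so that $-\bE^*[H \mid \cF_t]$ turns into $-U_t^\uparrow$) and then invoking \eqref{eq:uniformdoob} at time $t$ yields
$$\essinf_{\bP^*\in\cP} \bE^*[B_T \mid \cF_t] = \sup_{\bP^*\in\cP}\bE^*[H] + G_t - U_t^\uparrow = B_t.$$

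I then complete the argument by induction on $t$. The base case $t=0$ is immediate from $B_0 = 0 = \tilde B_0$. Assuming $B_{t-1} = \tilde B_{t-1}$ $\bP$-a.s., the inclusion $B_t \in \cB_t$ follows since $B_t$ is $\cF_t$-measurable, $B_t \geq B_{t-1} = \tilde B_{t-1}$ by monotonicity of $B$, and $B_t \leq B_T$ again by monotonicity; hence $B_t \leq \tilde B_t$. Conversely, any $D_t \in \cB_t$ satisfies $D_t \leq B_T$ and is $\cF_t$-measurable, so $D_t = \bE^*[D_t \mid \cF_t] \leq \bE^*[B_T \mid \cF_t]$ for every $\bP^* \in \cP$, and the essinf representation of $B_t$ gives $D_t \leq B_t$. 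Taking the essential supremum over $D_t \in \cB_t$ yields $\tilde B_t \leq B_t$, closing the induction. The main obstacle is the essinf representation of $B_t$; everything else is routine, with the only technical subtlety being the exchange of essinf with linear operations, justified by the standard fact that $\cP$ admits a countable subfamily realizing the essinf.
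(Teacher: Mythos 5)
Your proof is correct, and it takes a genuinely different route from the paper's. The key step in your argument is the explicit representation
\[
B_t=\essinf_{\bP^*\in\cP}\bE^*[B_T\mid\cF_t],
\]
which you derive from the Doob decomposition \eqref{eq:uniformdoob} evaluated at $T$ and at $t$, together with the $\bP^*$-martingale property of the gain process $G_s=\sum_{k=1}^s\xi_k\cdot(X_k-X_{k-1})$. With this formula in hand, the inequality $\tilde B_t\leq B_t$ becomes a one-liner: any $D_t\in\cB_t$ is $\cF_t$-measurable, $\bP^*$-integrable (it is squeezed between $\tilde B_{t-1}\geq 0$ and $B_T\in L^1(\bP^*)$), and bounded by $B_T$, so $D_t=\bE^*[D_t\mid\cF_t]\leq\bE^*[B_T\mid\cF_t]$ for every $\bP^*$, and passing to the essential infimum gives $D_t\leq B_t$. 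The paper instead argues by contradiction: after the easy inclusion $B_t\in\cB_t$, it assumes $\bP(B_t<\tilde B_t)>0$, constructs the process $\tilde V_s=\sup_{\bP^*\in\cP}\bE^*[H]+G_s-\tilde B_s$, verifies that $\tilde V$ is a non-negative $\cP$-supermartingale with $\tilde V_T\geq H$, and then contradicts the minimality of $U^\uparrow$ as the smallest such $\cP$-supermartingale (Corollary~\ref{cor:resultsfollmerschied}). Your approach trades the supermartingale verification and the minimality argument for a direct, explicit identity; it is shorter and arguably more transparent, and the formula $B_t=\essinf_{\bP^*\in\cP}\bE^*[B_T\mid\cF_t]$ is a clean byproduct of independent interest. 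One small remark: the exchange of essential infimum with the addition of a $\bP^*$-independent random variable does not require a countable realizing subfamily — it holds by definition of $\essinf$ — so the final technical caveat you flagged is actually harmless even without that appeal.
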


\begin{proof}
	The proof follows by induction. For $t=0$ we have $B_0=0=\tilde B_0$ by definition. For the induction step assume that $B_{t-1}=\tilde B_{t-1}$ $\bP$-a.s. for some $1\leq t\leq T$. First we observe that $B_t\geq \tilde B_{t-1}$ because $B$ is increasing and by the assumption of the induction step. In addition, by \eqref{eq:optimization} we obtain 
	\begin{equation}
	\label{eq:proofBdom}
	\sup_{\bP^*\in\cP}\bE^*[H]+\sum_{k=1}^T\xi_k\cdot(X_k-X_{k-1})-H\geq B_t.
	\end{equation}
	In particular, $B_t\in \cB_t$ and thus $B_t\leq \tilde B_t$ $\bP$-a.s. Assume that $\bP(B_{t}<\tilde B_{t})>0$. Then define $\tilde V=(\tilde V_s)_{s=0,\dots,T}$ by
	\[
	\tilde V_s:=\sup_{\bP^*\in\cP}\bE^*[H]+\sum_{k=1}^s\xi_k\cdot (X_k-X_{k-1})-\tilde B_s
	\]
	First, we note that
	\[
	\sup_{\bP^*\in\cP}\bE^*[H]+\sum_{k=1}^T\xi_k\cdot (X_k-X_{k-1})\geq H\geq 0\quad\bP\text{-a.s.}
	\]
	and thus by Theorem 5.14 of \cite{follmerschied} we have for any $\bP^*\in\cP$ that 
	\[\left(\sup_{\bP^*\in\cP}\bE^*[H]+\sum_{k=1}^s\xi_k\cdot (X_k-X_{k-1})\right)_{s=0,\dots,T}
	\]
	is $\bP^*$-martingale for all $\bP^*\in\cP$. Further, by \eqref{eq:proofBdom} and \eqref{eq:uniformdoob} we obtain
	\[
	0\leq \tilde B_s\leq \sup_{\bP^*\in\cP}\bE^*[H]+\sum_{k=1}^T\xi_k\cdot (X_k-X_{k-1})-H\quad \text{for all }T=0,\dots,T,\ \bP^*\in\cP,
	\]
	and 
	\[
	\sup_{\bP^*\in\cP}\bE^*[H]+\sum_{k=1}^T\xi_k\cdot (X_k-X_{k-1})-H\in L^1(\Omega,\cF,\bP^*)\quad \text{for all }\bP^*\in\cP,
	\]
	implies that $\tilde V_s\in L^1(\Omega,\cF_s,\bP^*)$ for all $\bP^*\in\cP$ and all $s=0,\dots,T$. In particular, since $\tilde B$ is increasing and non-negative, we can conclude that $\tilde V$ is a $\P^*$-supermartingale for all $\bP^*\in\cP$. Furthermore, we show that $\tilde V_s\geq 0$ $\bP$-a.s. for all $s=0,\dots,T$. To this end, let $\bP^*\in\cP$ be arbitrary, then we have by the $\bP^*$-supermartingale property that
	\begin{align*}
	\tilde V_s&\geq \bE^*[\tilde V_T\mid\cF_s]\\
	&=\bE^*\left[\sup_{\bP^*\in\cP}\bE^*[H]+\sum_{k=1}^T\xi_k\cdot (X_k-X_{k-1})-\tilde B_T\mid\cF_s\right]\\
	&=\bE^*[H\mid\cF_s]\geq 0.
	\end{align*}
	The terminal value of $\tilde V$ dominates $H$ by construction and since $B_s\leq \tilde B_s$ for all $s=0,\dots,T$, we have
	\[
	\tilde V_s\leq \esssup_{\bP^*\in\cP}\bE^*[H\mid\cF_s]\quad\bP\text{-a.s. for all }s=0,1\dots,T.
	\]
	Then we obtain
	\[
	\bP(\tilde V_{t}<\esssup_{\bP^*\in\cP}\bE^*[H\mid\cF_t])=\bP(B_t<\tilde B_t)>0,
	\]
	which contradicts the fact that $(\esssup_{\bP^*\in\cP}\bE^*[H\mid\cF_s])_{s=0,\dots,T}$ is the smallest $\cP$-supermartingale whose terminal value dominates $H$. Thus $B_t= \tilde B_t$ $\bP$-a.s. This concludes the proof.
\end{proof}

\begin{remark}
	\label{rem:processofConsumptionID}
	In the definition of \eqref{eq:optimizationB} we can equivalently consider $\esssup \widehat\cB_t$, where
	\[
	\widehat \cB_t:=\left\{D_t\in L^0(\Omega,\cF_t,\bP): 0\leq D_t\leq \sup_{\bP^*\in\cP}\bE^*[H]+\sum_{k=1}^T\xi_k\cdot(X_k-X_{k-1})-H\ \bP\text{-a.s.}\right\},
	\]
	for $t=1,\dots,T$. This is due to the fact that, on the one hand $\cB_t\subset \widehat \cB_t$ for all $t=1,\dots,T$. On the other hand, for $D_t\in \widehat \cB_t$ we have that $\tilde D_t:=D_t\vee B_{t-1}\in\cB_t$ and $D_t\leq \tilde D_t$ $\bP$-a.s. Therefore, $\esssup \widehat \cB_t= \esssup\cB_t=B_t$ for all $t=1,\dots,T$.
\end{remark}

\subsection{Neural network approximation for $t>0$}
\label{sec:NNapproxt}
We now study a neural network approximation for the superhedging price process for $t>0$. Throughout this section we use the notation of Section \ref{sec:price0}. For $\varepsilon,\tilde\varepsilon\in (0,1)$ we define the set 
\begin{align*}
\cB_t^{\theta_t^*,\varepsilon,\tilde\varepsilon}:=\Bigg\lbrace &F^{\theta_t}(\cY_t):\, \theta_{t} \in \Theta_{t+1}\text{ and }\\
& \bP\left(B_{t-1}-\tilde\varepsilon\leq F^{\theta_t}(\cY_t)\leq \sup_{\bP^*\in\cP}\bE^*[H]+\sum_{k=1}^T\xi_k\cdot(X_k-X_{k-1})-H+\tilde\varepsilon\right)>1-\varepsilon\Bigg\rbrace,
\end{align*}
where $B$ is the consumption process for $H$ introduced in \eqref{eq:uniformdoob}. We now construct an approximation of $B$ by neural networks.

\begin{proposition}
\label{prop:NNconsumption}
	Assume $\sigma$ is bounded and non-constant. Then for any $\varepsilon,\tilde\varepsilon>0$ there exist neural networks $(F^{\theta_0,\varepsilon,\tilde\varepsilon},\dots,F^{\theta_T,\varepsilon,\tilde\varepsilon})$ such that $F^{\theta_t,\varepsilon,\tilde\varepsilon}(\cY_t)\in\cB_t^{\theta_t^*,\varepsilon,\tilde\varepsilon}$ for all $t=0,\dots,T$ and 
	\[
	\bP\left(\left|F^{\theta_t,\varepsilon,\tilde\varepsilon}(\cY_t)-B_t\right|>\tilde\varepsilon\right)<\varepsilon,\quad \text{for all }t=0,\dots,T.
	\]
	In particular, there exists a sequence of neural networks $\left(F^{\theta_0^n},\dots,F^{\theta_T^n}\right)_{n\in\bN}$ with $F^{\theta_t^n}(\cY_t)\in\cB_t^{\theta_t^*,\frac{1}{n},\frac{1}{n}}$ for all $n\in\bN$ and for all $t=0,\dots,T$ such that
\[
\left(F^{\theta_0^n}(\cY_0),\dots,F^{\theta_T^n}(\cY_T)\right)\xrightarrow{\bP\text{-a.s.}}(B_0,\dots,B_T)\quad \text{for }n\to\infty.
\]

\end{proposition}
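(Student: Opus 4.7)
The plan is to use the Doob--Dynkin lemma to realize $B_t$ as a measurable function of $\cY_t$, invoke the universal approximation theorem (Theorem~\ref{thm:universalapproxprob}) to approximate this function by a neural network in probability, and verify the two-sided bound defining $\cB_t^{\theta_t^*,\varepsilon,\tilde\varepsilon}$ by combining monotonicity of $B$ with the upper bound \eqref{eq:optimization}.

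Since $B_t$ is $\cF_t$-measurable and $\cF_t=\sigma(\cY_t)$, for each $t\in\{1,\dots,T\}$ there exists a Borel-measurable function $g_t\colon \bR^{m(t+1)}\to\bR$ with $B_t=g_t(\cY_t)$ $\bP$-a.s. Applying Theorem~\ref{thm:universalapproxprob} to $g_t$ with $\mu$ equal to the law of $\cY_t$ under $\bP$ yields parameters $\theta_t^{\varepsilon,\tilde\varepsilon}\in\Theta_{t+1}$ such that
\[
\bP\bigl(|F^{\theta_t^{\varepsilon,\tilde\varepsilon}}(\cY_t)-B_t|>\tilde\varepsilon\bigr)<\varepsilon,
\]
which is already the second displayed conclusion. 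For the containment, observe that on the event $A_t:=\{|F^{\theta_t^{\varepsilon,\tilde\varepsilon}}(\cY_t)-B_t|\le\tilde\varepsilon\}$, which has probability strictly greater than $1-\varepsilon$, the increasing property of $B$ gives $F^{\theta_t^{\varepsilon,\tilde\varepsilon}}(\cY_t)\ge B_t-\tilde\varepsilon\ge B_{t-1}-\tilde\varepsilon$, while \eqref{eq:optimization} gives
\[
F^{\theta_t^{\varepsilon,\tilde\varepsilon}}(\cY_t)\le B_t+\tilde\varepsilon\le \sup_{\bP^*\in\cP}\bE^*[H]+\sum_{k=1}^T\xi_k\cdot(X_k-X_{k-1})-H+\tilde\varepsilon.
\]
Hence $F^{\theta_t^{\varepsilon,\tilde\varepsilon}}(\cY_t)\in\cB_t^{\theta_t^*,\varepsilon,\tilde\varepsilon}$. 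For $t=0$, $B_0=0$ is deterministic, so one simply chooses a constant neural network.

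For the sequence statement, I would apply the first part with $\varepsilon_n=2^{-n}$ and $\tilde\varepsilon_n=1/n$ to obtain networks $F^{\theta_t^n}$ satisfying $\bP(|F^{\theta_t^n}(\cY_t)-B_t|>1/n)<2^{-n}$. Since $\sum_n 2^{-n}<\infty$, Borel--Cantelli yields $F^{\theta_t^n}(\cY_t)\to B_t$ $\bP$-a.s.\ as $n\to\infty$. Moreover $\cB_t^{\theta_t^*,\cdot,\cdot}$ is monotone non-decreasing in its first argument (a smaller $\varepsilon$ imposes a stricter probability threshold), and $2^{-n}\le 1/n$ for $n\ge 1$, so $\cB_t^{\theta_t^*,2^{-n},1/n}\subseteq \cB_t^{\theta_t^*,1/n,1/n}$, and the constructed sequence belongs to the required sets.

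The only potential obstacle is confirming that the universal approximation theorem being invoked can handle a possibly discontinuous measurable $g_t$ and deliver approximation in probability with respect to the law of $\cY_t$ under $\bP$. This is precisely the form already used inside the proof of Theorem~\ref{thm:universalapprox2} (with $\mu$ chosen as the law of a conditioning variable), and it follows from~\cite{hornik1991} by the standard two-step argument of approximating $g_t$ first by a continuous function in $L^1(\mu)$ on a large compact set and then uniformly by a neural network; everything else is a straightforward sandwich estimate together with Borel--Cantelli.
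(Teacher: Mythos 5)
Your proof is correct but takes a genuinely more direct route than the paper's. The paper first passes through the essential-supremum representation from Proposition~\ref{prop:optimizationB}: since $\cB_t$ is directed upwards, Theorem~A.33 of \cite{follmerschied} produces an increasing sequence $(B_t^k)_{k\in\bN}\subset\cB_t$ converging $\bP$-a.s.\ to $B_t$; the paper then picks $k$ large enough so that $B_t^k$ is within $\tilde\varepsilon/2$ of $B_t$ off an event of probability $<\varepsilon/2$, writes $B_t^k=f_t^k(\cY_t)$ by Doob--Dynkin, applies Theorem~\ref{thm:universalapproxprob} to $f_t^k$ with tolerance $\tilde\varepsilon/2$, and concludes by the triangle inequality. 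You observe that this intermediate layer is unnecessary: Doob--Dynkin applies directly to $B_t=g_t(\cY_t)$, and Theorem~\ref{thm:universalapproxprob} already handles an arbitrary measurable target, so one approximates $g_t$ in a single step. Both routes verify membership in $\cB_t^{\theta_t^*,\varepsilon,\tilde\varepsilon}$ the same way, via $B_{t-1}\le B_t$ and \eqref{eq:optimization}. For the a.s.\ convergence, the paper sets $\varepsilon_n=\tilde\varepsilon_n=1/n$, gets convergence in probability, and passes to a subsequence; your choice $\varepsilon_n=2^{-n}$, $\tilde\varepsilon_n=1/n$ gives Borel--Cantelli along the full sequence of indices, which is cleaner and sidesteps the need to re-justify the containment after re-indexing (you cover it with the inclusion $\cB_t^{\theta_t^*,2^{-n},1/n}\subseteq\cB_t^{\theta_t^*,1/n,1/n}$, which is correct since the probability threshold $1-\varepsilon$ is monotone in $\varepsilon$). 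There is no gap in your argument.
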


\begin{proof}
	Fix $\varepsilon,\tilde\varepsilon>0$ and $t\in\{1,\dots,T\}$. Note that $B_0=0$ by definition. Let $B$ be given by the representation \eqref{eq:optimizationB}. Observe that the set $\cB_t$ from \eqref{eq:optimizationBset} is directed upwards. By Theorem A.33 of \cite{follmerschied} there exists an increasing sequence 
	\[
	(B_t^k)_{k\in\bN}\subset \cB_t,
	\]
	such that $B_t^k$ converges $\bP$-almost surely to $\tilde B_t=B_t$ as $k$ tends to infinity. Since almost sure convergence implies convergence in probability, there exists $K=K(\varepsilon,\tilde\varepsilon)\in\bN$ such that 
	\begin{equation}
	\label{eq:approxsequenceB}
	\bP\left(\left|B_t^k-B_t\right|> \frac{\tilde\varepsilon}{2}\right)<\frac{\varepsilon}{2},\quad\text{for all }k\geq K.
	\end{equation}
	For all $k\geq K$ there exist measurable functions $f_t^k:\bR^{mt}\to\bR$ such that $B_t^k=f_t^k(\cY_t)$. Fix $k\geq K$. By the universal approximation theorem \cite[Theorem~1 and Section~3]{hornik1991}, see also Theorem \ref{thm:universalapproxprob} in the appendix, (with measure $\mu$ given by the law of $\cY_t$ under $\P$) there exists $\theta_t=\theta_t^k\in\Theta_{t+1}$ and $F^{\theta_t}=F^{\theta_t^k,\varepsilon,\tilde\varepsilon}$ such that 
\[
\bP\left(\left|f_t^k(\cY_t)-F^{\theta_t}(\cY_t)\right|>\frac{\tilde\varepsilon}{2}\right)<\frac{\varepsilon}{2}.
\]
By the triangle inequality and by De Morgan's law we obtain that
\begin{align*}
&\left\lbrace \omega\in\Omega:\left|B_t(\omega)-F^{\theta_t}(\cY_t(\omega))\right|>\tilde\varepsilon\right\rbrace\subseteq \left\lbrace \omega\in\Omega:\left|B_t(\omega)-B_t^k(\omega)\right|+\left|B_t^k-F^{\theta_t}(\cY_t(\omega))\right|>\tilde\varepsilon\right\rbrace\\
\subseteq&\left\lbrace\omega\in\Omega:\left|B_t(\omega)-B_t^k(\omega)\right|>\frac{\tilde\varepsilon}{2}\right\rbrace
\cup\left\lbrace\omega\in\Omega:\left|B_t^k(\omega)-F^{\theta_t}(\cY_t(\omega))\right|>\frac{\tilde\varepsilon}{2}\right\rbrace.
\end{align*}
In particular, we obtain by sub-addidivity that
\[
\bP\left(\left|B_t-F^{\theta_t}(\cY_t)\right|>\tilde\varepsilon\right)\leq \bP\left(\left|B_t-B_t^k\right|>\frac{\tilde\varepsilon}{2}\right)+\bP\left(\left|B_t^k-F^{\theta_t}(\cY_t)\right|>\frac{\tilde\varepsilon}{2}\right)<\frac{\varepsilon}{2}+\frac{\varepsilon}{2}=\varepsilon.
\]
Next, we show that $F^{\theta_t}\in\cB_t^{\theta_t^*,\varepsilon,\tilde\varepsilon}$. For this purpose, we note that 
\[
B_{t-1}\leq B_t\leq \sup_{\bP^*\in\cP}\bE^*[H]+\sum_{k=1}^T\xi_k\cdot\left(X_k-X_{k-1}\right)-H\quad\bP\text{-a.s.}
\]
Therefore, we have that
\[
\resizebox{15cm}{!}
{
$\bP\left(B_{t-1}-\tilde\varepsilon\leq F^{\theta_t}(\cY_t)\leq \sup_{\bP^*\in\cP}\bE^*[H]+\sum_{k=1}^T\xi_k\cdot\left(X_k-X_{k-1}\right)-H+\tilde\varepsilon\right)\geq \bP\left(\left|B_t-F^{\theta_t}(\cY_t)\right|\leq\tilde\varepsilon\right)>1-\varepsilon,$}
\]
which implies that $F^{\theta_t}(\cY_t)=F^{\theta_t^k,\varepsilon,\tilde\varepsilon}(\cY_t)\in\cB_t^{\theta_t^*,\varepsilon,\tilde\varepsilon}$. We set $\varepsilon=\frac{1}{n}=\tilde\varepsilon$ for $n\in\bN$ and consider the neural network
\[
F^{\theta_t^n}:=F^{\theta_t^{K(n)},\frac{1}{n},\frac{1}{n}},\quad t\in\{1,\dots,T\},\ n\in\bN,
\]
where $K(n)=K(\frac{1}{n},\frac{1}{n})$ is given by \eqref{eq:approxsequenceB}. Then, $F^{\theta_t^n}\in\cB_t^{\theta_t^*,\frac{1}{n},\frac{1}{n}}$ for all $n\in\bN$ and for all $t=1,\dots,T$. Further, we have
\[
\bP\left(\left|F^{\theta_t^n}(\cY_t)-B_t\right|>\frac{1}{n}\right)<\frac{1}{n}\quad \text{for all }t=1,\dots,T,
\]
which implies convergence in probability, i.e.,
\[
F^{\theta_t^n}(\cY_t)\xrightarrow{\bP}B_t\quad\text{for }n\to\infty,\text{ for all }t=0,\dots,T.
\]
By passing to a suitable subsequence, convergence also holds $\bP$-a.s. simultaneously for all $t=0,\dots,T$.
\end{proof}\noindent
Let $\tilde\varepsilon>0$. Recursively, we define the set
\begin{equation}\begin{aligned}
\label{eq:NNapproxTset}
\tilde\cB_t^{\theta_t^*,\tilde\varepsilon}:= \large\lbrace  & F^{\theta_t}(\cY_t)\mathds{1}_A + B_{t-1}^{\theta_{t-1}^*,\tilde\varepsilon}\mathds{1}_{A^c} :\, \theta_{t} \in \Theta_{t+1}, A \in \Fc_t,  \\ & B_{t-1}^{\theta_{t-1,\tilde\varepsilon}^*}\leq F^{\theta_t}(\cY_t)\mathds{1}_A + B_{t-1}^{\theta_{t-1}^*,\tilde\varepsilon}\mathds{1}_{A^c} 
\leq \sup_{\bP^*\in\cP}\bE^*[H]+\sum_{k=1}^T\xi_k\cdot(X_k-X_{k-1})-H+\tilde\varepsilon\large\rbrace 
\end{aligned}
\end{equation}
for $t=1,\dots,T$, and the approximated process of consumption by $B_0^{\theta_0^*,\tilde\varepsilon}=0$ and 
\begin{equation}\begin{aligned}
\label{eq:NNapproxTprocess}
    B_t^{\theta_t^*,\tilde\varepsilon}:=
    \esssup \tilde\cB_t^{\theta_t^*,\tilde\varepsilon}\quad \text{ for }t=1,\dots,T.  
\end{aligned}
\end{equation}

\begin{theorem}
\label{thm:NNgeneralt}
	Assume $\sigma$ is bounded and non-constant. Then 
	\[
	\left|B_t^{\theta_t^*,\tilde\varepsilon}-B_t\right|\leq\tilde\varepsilon \quad\bP\text{-a.s. for all }t=0,\dots,T.
	\]
\end{theorem}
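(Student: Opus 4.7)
I would argue by induction on $t$. The base case $t=0$ is immediate since $B_0 = 0 = B_0^{\theta_0^*,\tilde\varepsilon}$ by definition. For the inductive step, assume $|B_{t-1}^{\theta_{t-1}^*,\tilde\varepsilon} - B_{t-1}| \leq \tilde\varepsilon$ $\bP$-a.s., and abbreviate
$$M := \sup_{\bP^*\in\cP}\bE^*[H] + \sum_{k=1}^T \xi_k\cdot(X_k - X_{k-1}) - H,$$
which is $\bP$-a.s.\ nonnegative by \eqref{eq:optimization}. My target is actually the slightly stronger two-sided sandwich $B_t \leq B_t^{\theta_t^*,\tilde\varepsilon} \leq B_t + \tilde\varepsilon$ $\bP$-a.s., which obviously implies the claim; I would prove the two inequalities separately.

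For the upper bound, pick any $Z \in \tilde\cB_t^{\theta_t^*,\tilde\varepsilon}$. Then $Z$ is $\cF_t$-measurable with $Z \leq M + \tilde\varepsilon$ a.s. Setting $\bar Z := (Z - \tilde\varepsilon)^+$ yields an $\cF_t$-measurable random variable satisfying $0 \leq \bar Z \leq M$, hence $\bar Z$ lies in the set $\widehat\cB_t$ of Remark \ref{rem:processofConsumptionID}; consequently $\bar Z \leq B_t$ a.s. Since $Z \leq \bar Z + \tilde\varepsilon$ pointwise, I obtain $Z \leq B_t + \tilde\varepsilon$ a.s., and taking essential supremum over $\tilde\cB_t^{\theta_t^*,\tilde\varepsilon}$ gives $B_t^{\theta_t^*,\tilde\varepsilon} \leq B_t + \tilde\varepsilon$.

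For the lower bound, write $B_t = h(\cY_t)$ with $h$ measurable (possible since $\cF_t = \sigma(\cY_t)$). Theorem \ref{thm:universalapproxprob}, applied with $\mu$ equal to the law of $\cY_t$, produces neural networks $F^{\theta_t^n}$ with $F^{\theta_t^n}(\cY_t) \to B_t$ in $\bP$-probability; passing to a subsequence, the convergence is $\bP$-a.s. Define
$$A_n := \{B_{t-1}^{\theta_{t-1}^*,\tilde\varepsilon} \leq F^{\theta_t^n}(\cY_t) \leq M + \tilde\varepsilon\} \in \cF_t, \qquad Z_n := F^{\theta_t^n}(\cY_t)\mathds{1}_{A_n} + B_{t-1}^{\theta_{t-1}^*,\tilde\varepsilon}\mathds{1}_{A_n^c}.$$
The inductive hypothesis together with $B_{t-1} \leq B_t \leq M$ yields $B_{t-1}^{\theta_{t-1}^*,\tilde\varepsilon} \leq B_{t-1} + \tilde\varepsilon \leq B_t + \tilde\varepsilon \leq M + \tilde\varepsilon$, so the sandwich defining $\tilde\cB_t^{\theta_t^*,\tilde\varepsilon}$ is satisfied by $Z_n$, i.e.\ $Z_n \in \tilde\cB_t^{\theta_t^*,\tilde\varepsilon}$. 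A case split then controls the limit: on $\{B_{t-1}^{\theta_{t-1}^*,\tilde\varepsilon} < B_t\}$, the $\bP$-a.s.\ convergence $F^{\theta_t^n}(\cY_t) \to B_t$ combined with the strict inequality $B_t < M + \tilde\varepsilon$ forces $\omega \in A_n$ eventually, so $Z_n(\omega) \to B_t(\omega)$; on $\{B_{t-1}^{\theta_{t-1}^*,\tilde\varepsilon} \geq B_t\}$ one has $Z_n \geq B_{t-1}^{\theta_{t-1}^*,\tilde\varepsilon} \geq B_t$ trivially. Hence $\liminf_n Z_n \geq B_t$ $\bP$-a.s., and since $B_t^{\theta_t^*,\tilde\varepsilon} \geq Z_n$ a.s.\ for every $n$, I conclude $B_t^{\theta_t^*,\tilde\varepsilon} \geq B_t$ $\bP$-a.s.

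The main obstacle is the lower bound. One must simultaneously produce feasible approximations $Z_n \in \tilde\cB_t^{\theta_t^*,\tilde\varepsilon}$ (where the truncation via $\mathds{1}_{A_n}$ is forced upon us by the lower-bound constraint $Z \geq B_{t-1}^{\theta_{t-1}^*,\tilde\varepsilon}$ appearing in the definition of $\tilde\cB_t^{\theta_t^*,\tilde\varepsilon}$) and show that this truncation does not spoil the limit. The inductive estimate on $B_{t-1}^{\theta_{t-1}^*,\tilde\varepsilon}$ plays a double role here: it certifies admissibility on $A_n^c$ via $B_{t-1}^{\theta_{t-1}^*,\tilde\varepsilon} \leq M + \tilde\varepsilon$, and it drives the case distinction at the limit stage. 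Finally, upgrading convergence in probability to $\bP$-a.s.\ convergence by extraction of a subsequence is what legitimises the pointwise $\liminf$ computation.
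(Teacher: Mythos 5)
Your proposal reproduces the paper's inductive skeleton, and your upper-bound step is a nicely compressed version of the paper's: rather than manipulating essential suprema of explicitly shifted sets, you shift a generic element $Z$ down by $\tilde\varepsilon$, truncate below at $0$, land in $\widehat\cB_t$ of Remark \ref{rem:processofConsumptionID}, and read off $Z \leq B_t + \tilde\varepsilon$. For the lower bound you take a genuinely different route. The paper fixes $\varepsilon \in (0,1)$, builds a single feasible element of $\tilde\cB_t^{\theta_t^*,\tilde\varepsilon}$ lying within $\tilde\varepsilon$ of $B_t$ on a set of probability $>1-\varepsilon$, and lets $\varepsilon\downarrow 0$ to conclude $B_t^{\theta_t^*,\tilde\varepsilon}\geq B_t-\tilde\varepsilon$. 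You instead extract a sequence of networks converging to $B_t$ $\bP$-a.s., build feasible elements $Z_n$ via the indicator trick, and compute $\liminf_n Z_n\geq B_t$ pointwise; since $B_t^{\theta_t^*,\tilde\varepsilon}\geq Z_n$ for every $n$, this yields the strictly sharper one-sided sandwich $B_t\leq B_t^{\theta_t^*,\tilde\varepsilon}\leq B_t+\tilde\varepsilon$. That extra precision is a real (if modest) gain.

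There is, however, one concrete gap in the lower-bound construction. You set
\[
A_n := \Bigl\{B_{t-1}^{\theta_{t-1}^*,\tilde\varepsilon} \leq F^{\theta_t^n}(\cY_t) \leq M + \tilde\varepsilon\Bigr\}
\]
and assert $A_n\in\cF_t$. But
\[
M = \sup_{\bP^*\in\cP}\bE^*[H] + \sum_{k=1}^T \xi_k\cdot(X_k-X_{k-1}) - H
\]
is $\cF_T$-measurable and, for $t<T$, is not $\cF_t$-measurable; hence $A_n\notin\cF_t$ in general, and $Z_n = F^{\theta_t^n}(\cY_t)\mathds{1}_{A_n}+B_{t-1}^{\theta_{t-1}^*,\tilde\varepsilon}\mathds{1}_{A_n^c}$ is not a legal element of $\tilde\cB_t^{\theta_t^*,\tilde\varepsilon}$, whose definition explicitly requires $A\in\cF_t$. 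The fix is to replace the upper constraint in $A_n$ by an $\cF_t$-measurable majorant: take
\[
A_n := \Bigl\{B_{t-1}^{\theta_{t-1}^*,\tilde\varepsilon} \leq F^{\theta_t^n}(\cY_t) \leq B_t + \tilde\varepsilon\Bigr\},
\]
which is $\cF_t$-measurable since $B_{t-1}^{\theta_{t-1}^*,\tilde\varepsilon}$, $F^{\theta_t^n}(\cY_t)$ and $B_t$ all are. Since $B_t\leq M$, membership in this $A_n$ still forces $Z_n\leq M+\tilde\varepsilon$, so $Z_n\in\tilde\cB_t^{\theta_t^*,\tilde\varepsilon}$, and the $\liminf$ case analysis is unchanged because on $\{B_{t-1}^{\theta_{t-1}^*,\tilde\varepsilon}<B_t\}$ the target $B_t$ still lies strictly inside the open window $(B_{t-1}^{\theta_{t-1}^*,\tilde\varepsilon},B_t+\tilde\varepsilon)$. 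This mirrors the paper's own device, whose sets $A_1$ and $A_2$ are built exclusively from $\cF_t$-measurable quantities.
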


\begin{proof}
	We prove the statement by induction. For $t=0$ we have by definition $B_0^{\theta_0^*,\tilde\varepsilon}=B_0=0$. Assume now that 
	\[
	\left|B_{t-1}^{\theta_{t-1}^*,\tilde\varepsilon}-B_{t-1}\right|\leq\tilde\varepsilon \quad\bP\text{-a.s.}
	\] 
	for some $t\in \{1,\dots,T\}$. First we note that $B_s^{\theta_s^*,\tilde\varepsilon}\leq B_{s+1}^{\theta_{s+1}^*,\tilde\varepsilon}$ by \eqref{eq:NNapproxTset} and \eqref{eq:NNapproxTprocess}, and because $B_0^{\theta_0^*,\tilde\varepsilon}=0$ it follows that $B_s^{\theta_s^*,\tilde\varepsilon}\geq 0$ for all $s=1,\dots,T$. Let $\theta_t\in \Theta_{t+1}$ and $A\in\cF_t$ such that 
	\[
	0\leq B_{t-1}^{\theta_{t-1}^*,\tilde\varepsilon}\leq F^{\theta_t}(\cY_t)\mathds{1}_A + B_{t-1}^{\theta_{t-1}^*,\tilde\varepsilon}\mathds{1}_{A^c} \leq \sup_{\bP^*\in\cP}\bE^*[H]+\sum_{k=1}^T\xi_k\cdot(X_k-X_{k-1})-H+\tilde\varepsilon.
	\]
	Then, we can easily see that
	\[
	\resizebox{15cm}{!}
	{
	$F^{\theta_t}(\cY_t)\mathds{1}_A + B_{t-1}^{\theta_{t-1}^*,\tilde\varepsilon}\mathds{1}_{A^c}\in \left\lbrace D_t\in L^0(\Omega,\cF_t,\bP):0 \leq D_t\leq \sup_{\bP^*\in\cP}\bE^*[H]+\sum_{k=1}^T\xi_k\cdot(X_k-X_{k-1})-H+\tilde\varepsilon \ \bP\text{-a.s.}\right\rbrace$}.
	\]
	We now prove that
	\begin{equation}
	\label{eq:NNdir1B}
	\resizebox{15cm}{!}
	{$ B_t+\tilde\varepsilon=\esssup\left\lbrace D_t\in L^0(\Omega,\cF_t,\bP):-\tilde\varepsilon \leq D_t-\tilde\varepsilon\leq \sup_{\bP^*\in\cP}\bE^*[H]+\sum_{k=1}^T\xi_k\cdot(X_k-X_{k-1})-H\ \bP\text{-a.s.}\right\rbrace$}.
	\end{equation}
	On the one hand we have
	\begin{align*}
	&\left\lbrace \tilde D_t\in L^0(\Omega,\cF_t,\bP): 0\leq \tilde D_t\leq \sup_{\bP^*\in\cP}\bE^*[H]+\sum_{k=1}^T\xi_k\cdot(X_k-X_{k-1})-H\ \bP\text{-a.s.}\right\rbrace+\tilde\varepsilon\\
		=&\left\lbrace D_t\in L^0(\Omega,\cF_t,\bP): 0\leq D_t-\tilde\varepsilon\leq \sup_{\bP^*\in\cP}\bE^*[H]+\sum_{k=1}^T\xi_k\cdot(X_k-X_{k-1})-H\ \bP\text{-a.s.}\right\rbrace\\
		\subseteq &\left\lbrace D_t\in L^0(\Omega,\cF_t,\bP):-\tilde\varepsilon \leq D_t-\tilde\varepsilon\leq \sup_{\bP^*\in\cP}\bE^*[H]+\sum_{k=1}^T\xi_k\cdot(X_k-X_{k-1})-H\ \bP\text{-a.s.}\right\rbrace,
	\end{align*}
	which by Remark \ref{rem:processofConsumptionID} implies that 
	\[
	\resizebox{15cm}{!}
	{$ B_t+\tilde\varepsilon\leq \esssup\left\lbrace D_t\in L^0(\Omega,\cF_t,\bP):-\tilde\varepsilon \leq D_t-\tilde\varepsilon\leq \sup_{\bP^*\in\cP}\bE^*[H]+\sum_{k=1}^T\xi_k\cdot(X_k-X_{k-1})-H\ \bP\text{-a.s.}\right\rbrace$}.
	\]
	On the other hand, let 
	\[
	D_t\in \left\lbrace D_t\in L^0(\Omega,\cF_t,\bP):-\tilde\varepsilon \leq D_t-\tilde\varepsilon\leq \sup_{\bP^*\in\cP}\bE^*[H]+\sum_{k=1}^T\xi_k\cdot(X_k-X_{k-1})-H\ \bP\text{-a.s.}\right\rbrace,
	\]
	and define $\tilde D_t:=D_t\vee \tilde\varepsilon$. Then $D_t\leq \tilde D_t$ $\bP$-a.s. and
	\[
	\tilde D_t\in \left\lbrace \bar D_t\in L^0(\Omega,\cF_t,\bP):0 \leq \bar D_t-\tilde\varepsilon\leq \sup_{\bP^*\in\cP}\bE^*[H]+\sum_{k=1}^T\xi_k\cdot(X_k-X_{k-1})-H\ \bP\text{-a.s.}\right\rbrace,
	\]
	which implies that
	\[
	\resizebox{15cm}{!}
	{$ B_t+\tilde\varepsilon\geq \esssup\left\lbrace D_t\in L^0(\Omega,\cF_t,\bP):-\tilde\varepsilon \leq D_t-\tilde\varepsilon\leq \sup_{\bP^*\in\cP}\bE^*[H]+\sum_{k=1}^T\xi_k\cdot(X_k-X_{k-1})-H\ \bP\text{-a.s.}\right\rbrace$},
	\]
	and hence \eqref{eq:NNdir1B} follows. Further, we also have that
	\begin{align*}
		&\left\lbrace D_t\in L^0(\Omega,\cF_t,\bP):0 \leq D_t\leq \sup_{\bP^*\in\cP}\bE^*[H]+\sum_{k=1}^T\xi_k\cdot(X_k-X_{k-1})-H+\tilde\varepsilon \ \bP\text{-a.s.}\right\rbrace\\
		=&\left\lbrace D_t\in L^0(\Omega,\cF_t,\bP):-\tilde\varepsilon \leq D_t-\tilde\varepsilon\leq \sup_{\bP^*\in\cP}\bE^*[H]+\sum_{k=1}^T\xi_k\cdot(X_k-X_{k-1})-H\ \bP\text{-a.s.}\right\rbrace.
	\end{align*}
	Therefore, we obtain by \eqref{eq:NNdir1B} that
	\[
	F^{\theta_t}(\cY_t)\mathds{1}_A + B_{t-1}^{\theta_{t-1}^*,\tilde\varepsilon}\mathds{1}_{A^c}\leq B_t+\tilde\varepsilon\quad \bP\text{-a.s.,}
	\]
	and hence
	\begin{equation}
	\label{eq:NNconsumptiondir1}
	B_t^{\theta_t^*,\tilde\varepsilon}\leq B_t+\tilde\varepsilon\quad\bP\text{-a.s.}
	\end{equation}
	For the converse direction let $\varepsilon\in (0,1)$. By the proof of Proposition \ref{prop:NNconsumption} there exists a neural network $F^{\tilde\theta_t}=F^{\tilde\theta_t,\varepsilon,\tilde\varepsilon}$ such that 
	\[
	\bP\left(\left|F^{\tilde\theta_t}(\cY_t)-B_t\right|>\tilde\varepsilon\right)<\varepsilon.
	\]
	Define the sets $A_1,A_2\in\cF_t$ by
	\[
	A_1:=\left\lbrace \omega\in\Omega: B_t(\omega)-\tilde\varepsilon \leq F^{\tilde\theta_t}(\cY_t(\omega))\leq B_t(\omega)+\tilde\varepsilon\right\rbrace,
	\]
	and
	\[
	A_2:=\left\lbrace \omega\in\Omega: B_{t-1}^{\theta_{t-1}^*,\tilde\varepsilon}(\omega)\leq F^{\tilde\theta_t}(\cY_t(\omega))\right\rbrace.
	\]
	Then, $\bP(A_1)> 1-\varepsilon$. Note that by the assumption of the induction
	\[
	B_{t-1}^{\theta_{t-1}^*,\tilde\varepsilon}\leq B_{t-1}+\tilde\varepsilon\leq B_t+\tilde\varepsilon\quad\bP\text{-a.s.}
	\]
	For $A:=A_1\cap A_2$ we have by construction,
	\[
	F^{\tilde\theta_t}(\cY_t)\mathds{1}_A+B_{t-1}^{\theta_{t-1}^*,\tilde\varepsilon}\mathds{1}_{A^c}=F^{\tilde\theta_t}(\cY_t)\mathds{1}_{A_1\cap A_2}+B_{t-1}^{\theta_{t-1}^*,\tilde\varepsilon}\mathds{1}_{A_1^c\cup A_2^c}\in \tilde\cB_t^{\theta_t^*,\tilde\varepsilon}.
	\]
	For $\omega\in A_1\cap A_2^c$ we get  that
	\[
	F^{\tilde\theta_t}(\cY_t(\omega))\mathds{1}_{A_1\cap A_2}(\omega)+B_{t-1}^{\theta_{t-1}^*,\tilde\varepsilon}(\omega)\mathds{1}_{A_1^c\cup A_2^c}(\omega)=B_{t-1}^{\theta_{t-1}^*,\tilde\varepsilon}(\omega)
	\] 
	and 	\[
	B_t(\omega)-\tilde\varepsilon\leq F^{\tilde\theta_t}(\cY_t(\omega))< B_{t-1}^{\theta_{t-1}^*,\tilde\varepsilon}(\omega)\leq B_{t}(\omega)+\tilde\varepsilon.
	\]
	For $\omega\in A_1\cap A_2$ we have
	\[
	F^{\tilde\theta_t}(\cY_t(\omega))\mathds{1}_{A_1\cap A_2}(\omega)+B_{t-1}^{\theta_{t-1}^*,\tilde\varepsilon}(\omega)\mathds{1}_{A_1^c\cup A_2^c}(\omega)=F^{\tilde\theta_t}(\cY_t(\omega))
	\]
	and 	
	\[
	\left| F^{\tilde\theta_t}(\cY_t(\omega))- B_t(\omega)\right|\leq \tilde\varepsilon.
	\] 
	Thus, using that that $A_1=(A_1\cap A_2)\cup (A_1\cap A_2^c)$ and $\bP(A_1)>1-\varepsilon$ we get
	\begin{equation}
	\label{eq:NNconsumptionProbsets}
	\bP\left(\left|\left(F^{\tilde\theta_t}(\cY_t)\mathds{1}_A+B_{t-1}^{\theta_{t-1}^*,\tilde\varepsilon}\mathds{1}_{A^c}\right)-B_t\right|>\tilde\varepsilon\right)\leq \bP(A_1^c)<\varepsilon.
	\end{equation}
	Then, \eqref{eq:NNconsumptionProbsets} implies
	\begin{equation}
	\label{eq:NNconsumptionconclusion}
	\bP\left(B_t^{\theta_t^*,\tilde\varepsilon}<B_t-\tilde\varepsilon\right)\leq \bP\left(F^{\tilde\theta_t}(\cY_t)\mathds{1}_A+B_{t-1}^{\theta_{t-1}^*,\tilde\varepsilon}\mathds{1}_{A^c}<B_t-\tilde\varepsilon\right)<\varepsilon.
	\end{equation}
	Because $\varepsilon\in (0,1)$ was arbitrary, it follows that $B_t\leq B_t^{\theta_t^*,\tilde\varepsilon}+\tilde\varepsilon$ $\bP$-a.s. by \eqref{eq:NNconsumptionconclusion}. By \eqref{eq:NNconsumptiondir1} and \eqref{eq:NNconsumptionconclusion} we conclude that $|B_t^{\theta_t^*,\tilde\varepsilon}-B_t|\leq\tilde\varepsilon$ $\bP$-a.s. for all $t=0,\dots,T$.
\end{proof}

\section{Numerical results}
\label{sec:numericalresults}

In this section, we present some numerical applications for the results in Section \ref{sec:price0} and \ref{sec:superhedgingt}. Combining Theorem \ref{thm:convergencesuccessset} and \ref{thm:universalapprox2}, we obtain a two-step approximation for the superhedging price at $t=0$. Then, we use Theorem \ref{thm:NNgeneralt} to simulate the superhedging process for $t>0$.
\subsection{Case $t=0$}
\label{sec:case0}
\subsubsection{Algorithm and implementation}
\label{sec:algorithm}

Let $N\in\bN$ denote a fixed batch size. For fixed $\lambda>0$ we implement the following iterative procedure: for each iteration step $i$ we generate i.i.d.\ samples $Y(\omega_{0}^{(i)}), \ldots, Y(\omega_{N}^{(i)})$  of $Y$ and consider the empirical loss function
\begin{align*}
L_\lambda^{(i)}(\theta) = &\left|F^{\theta_u}\left(\Yc_0\left(\omega_{0}^{(i)}\right)\right)\right|^2 + \frac{\lambda}{N} \sum_{j=1}^N l\Bigg(H\left(\omega_{j}^{(i)}\right)\\
&-\left[F^{\theta_u}\left(\Yc_0\left(\omega_{j}^{(i)}\right)\right) 
+ \sum_{k=1}^T F^{\theta_{k,\xi}}\left(\Yc_{k-1}\left(\omega_{j}^{(i)}\right)\right) \cdot \left(X_k\left(\omega_{j}^{(i)}\right) - X_{k-1}\left(\omega_{j}^{(i)}\right)\right) \right]\Bigg),
\end{align*} 
with $\theta=(\theta_u,\theta_{1,\xi},\dots,\theta_{T,\xi})$ and $l:\bR\to [0,\infty)$ denoting the squared \emph{rectifier} function, i.e.,
\[
l(x)=\left(\max\left\{x,0\right\}\right)^2.
\]
We then calculate the gradient of $L_\lambda^{(i)}(\theta)$ at $\theta^{(i)}$ and use it to update the parameters from $\theta^{(i)}$ to $\theta^{(i+1)}$ according to the \emph{Adam} optimizer, see \cite{kingma2014adam}. After sufficiently many iterations $i$, the parameter $\theta^{(i)}$ should be sufficiently close to a local minimum of the loss function
\begin{equation}\label{eq:loss}
L_\lambda(\theta) = \left|F^{\theta_u}\left(\Yc_0\right)\right|^2 + \lambda \E\left[ l\left(H-\left(F^{\theta_u}\left(\Yc_0\right) + \sum_{k=1}^T F^{\theta_{k,\xi}}\left(\Yc_{k-1}\right) \cdot \left(X_k - X_{k-1}\right) \right)\right) \right].
\end{equation}
Note that $\cY_0$ is constant and hence $F^{\theta_u}(\cY_0)$ is a constant. We obtain a small value for the first term of $L_\lambda$ if $F^{\theta_u}(\cY_0)$ representing the superhedging price is small. On the other side, the second summand in \eqref{eq:loss} is equal $0$ when the portfolio dominates the claim $H$. Thus, minimizing the second summand in \eqref{eq:loss} corresponds to maximizing the superhedging probability. The weight $\lambda$ offers the opportunity to balance between a small initial price of the portfolio and a high probability of superhedging. In particular, if $\theta$ is the minimum for the loss function $L_\lambda(\theta)$, then $F^{\theta_u}(\cY_0)$ is close to the minimal price required to superhedge the claim $H$ with a certain probability, i.e., to the quantile hedging price for a certain $\alpha=\alpha(\lambda)$. In view of Theorem \ref{thm:universalapprox2} we thus expect $F^{\theta_u}(\cY_0)\approx \inf\cU_0$ for $\lambda$ large enough.\\
Also other choices for $l$ in \eqref{eq:loss} are possible. We considered the scaled \textit{sigmoid} function for $l$ in \eqref{eq:loss}. In this case, however, we did not obtain stable results.\\
The algorithm is implemented in Python, using Keras with backend TensorFlow to build and train the neural networks. More precisely, we create a \emph{Sequential} object to build the models and compile with a customized loss function.\\
We use a Long-Short-Term-Memory network (LSTM), see \cite{hochreiter1997lstm}, with the following architecture: the network has two LSTM layers of size $30$, which return sequences and one dense layer of size $1$. Between the layers the \emph{swish} activation function is used. The activation functions within the LSTM layers are set to default, i.e., activation between cells is \emph{tanh} and the recurrent activation is the \emph{sigmoid} function. The kernel and bias initializer of the first LSTM layer are set to \emph{truncated normal}, i.e., the initial weights are drawn from a standard normal distribution but we discard and re-draw values, which are more than two standard deviations from the mean. This gives $11191$ trainable parameters. The training is performed using the \emph{Adam} optimizer with a learning rate of $0.001$ or $0.0001$. We generate $1024000$ samples, which we split in $70\%$ for the training set and $30\%$ for the test set. The batch size is set to $1024$. We apply the procedure described above in two examples, which we present in the following.

\subsubsection{Trinomial model}
\label{sec:trinomial}

We consider a discrete time financial market model given by an arbitrage-free trinomial model with $X_0=100$ and
\[
X_t=X_0\prod_{k=1}^t (1+R_t), \quad t\in \{0,\dots,T\},
\]
where $R_t$ is $\cF_t$-measurable for $t\in \{1,\dots,T\}$, and takes values in $\{d,m,u\}$ with equal probability, where $-1<d<m<u$. Here, we set $d=-0.01$, $m=0$, and $u=0.01$ and $T=29$ yielding $3^{29}$ possible paths. In this model, we want to superhedge a European Call option $H=(X_T-K)^+$ with strike price $K=100$. For this choice of parameters the theoretical superhedging price is $2.17$, as it can be easily obtained by the results of \cite{carassus2010super}. \\
The network is trained and evaluated for different $\lambda$ to illustrate the impact of $\lambda$ in \eqref{eq:loss} and the relation between $\alpha(\lambda)\in (0,1)$ and the corresponding $\alpha(\lambda)$-quantile hedging price. More precisely, we consider $\lambda\in \{10, 50, 100, 500, 1000, 2000, 4000, 10000\}$. For each $\lambda$ the network is trained over $40$ epochs.\\
In Figure \ref{fig:impactlambda_fin}(a)-(c), we see that $\alpha(\lambda)$ as well as the $\alpha(\lambda)$-quantile hedging price increase in $\lambda$, and that the $\alpha(\lambda)$-quantile hedging price increases in $\alpha(\lambda)$. Figure~\ref{fig:impactlambda_fin}(d) shows the superhedging performance on the test set for all $\lambda$'s, i.e., samples of
\begin{equation}
\label{eq:superhedgingperformance}
F^{\theta_u(\lambda)}\left(\Yc_0\right) + \sum_{k=0}^T F^{\theta_{k,\xi}(\lambda)}\left(\Yc_{k-1}\right) \cdot \left(X_k - X_{k-1}\right) -H,
\end{equation}
for each $\lambda$. Table \ref{tab:lambda_fin} summarizes the values for $\lambda$, $\alpha(\lambda)$ and the $\alpha(\lambda)$-quantile hedging price. In particular, for $\lambda=10000$ we obtain a numerical price of $2.15$ and $\alpha(\lambda)=99.24\%$. 

\begin{figure}[htbp]
\begin{subfigure}[$\alpha(\lambda)$-quantile hedging price depending on $\lambda$]{\includegraphics[width=0.49\textwidth]{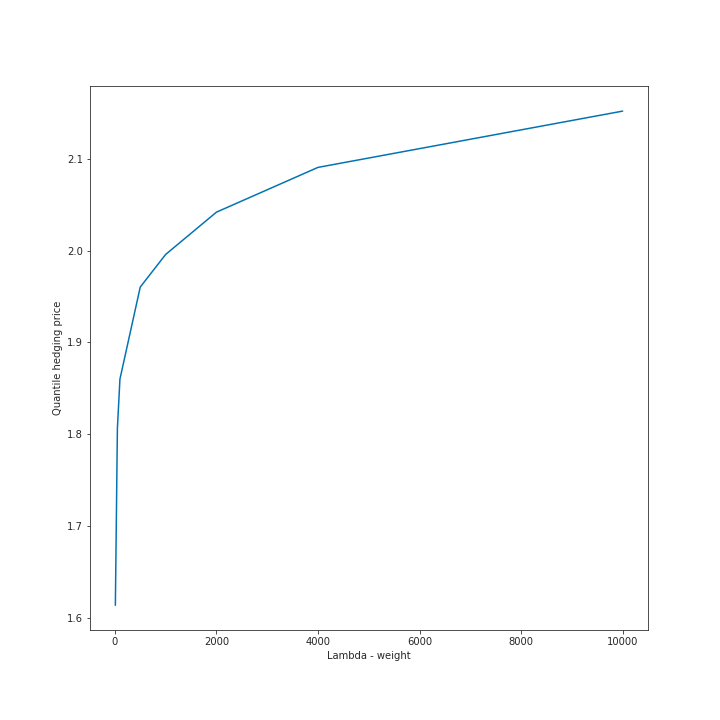}}
\label{subfig:price_weight}
\end{subfigure}
\begin{subfigure}[$\alpha(\lambda)$ depending on $\lambda$]{\includegraphics[width=0.49\textwidth]{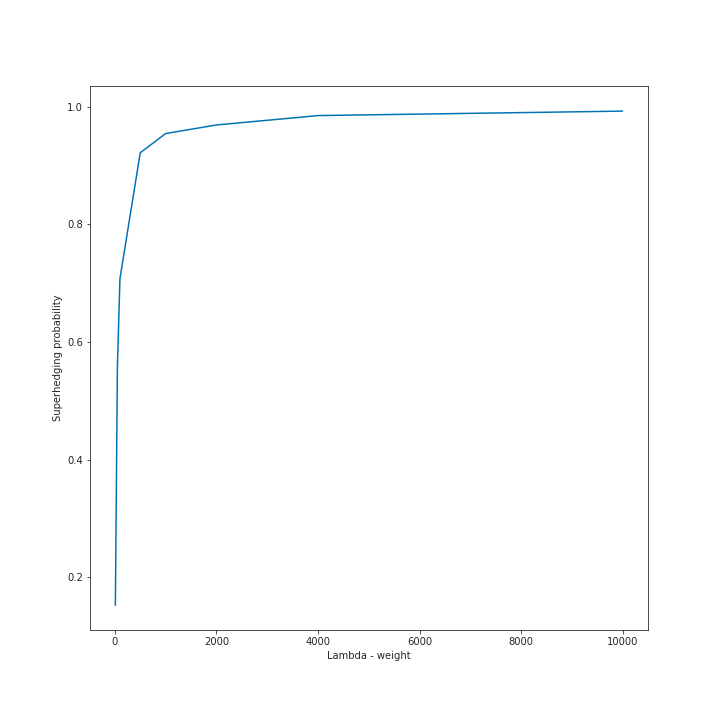}}
\end{subfigure}
\begin{subfigure}[$\alpha(\lambda)$-quantile hedging price depending on $\alpha(\lambda)$]{\includegraphics[width=.49\textwidth]{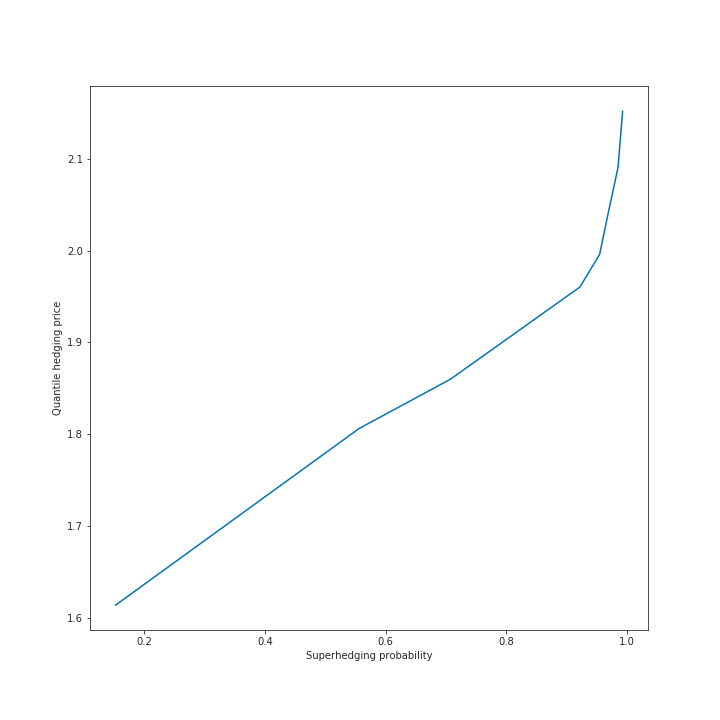}}
\label{subfig:price_prob}	
\end{subfigure}
\begin{subfigure}[Superhedging performance]{\includegraphics[width=.49\textwidth]{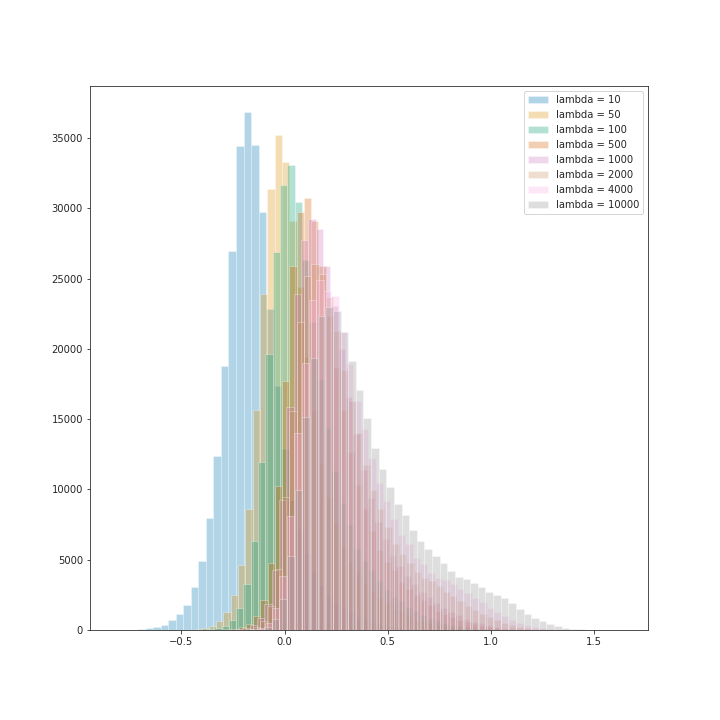}}
\label{subfig:price_prob}	
\end{subfigure}

\caption{Impact of $\lambda$ on the quantile hedging price and on the superhedging probability.}
\label{fig:impactlambda_fin}
\end{figure}

\begin{table}[hbt]
\begin{center}
\begin{tabular}{ |l|c|r| }
\hline
$\lambda$ & $\alpha(\lambda)$ & $\alpha(\lambda)$-quantile hedging price \\
\hline
$10$ & $15.23\%$ & 	$1.61$ \\
\hline
$50$ & $55.61\%$ & 	$1.81$ \\
\hline
$100$ & $70.75\%$ & $1.86$ \\
\hline
$500$ & $92.16\%$ & $1.96$ \\
\hline
$1000$ & $95.42\%$ & $2.00$ \\
\hline
$2000$ & $96.88\%$ & $2.04$ \\
\hline
$4000$ & $98.48\%$ & $2.09$ \\
\hline
$10000$ & $99.24\%$ & $2.15$ \\
\hline
\end{tabular}
\end{center}

\caption{Impact of $\lambda$ on $\alpha(\lambda)$ and on the $\alpha(\lambda)$-quantile hedging price.}
\label{tab:lambda_fin}
\end{table}

\subsubsection{Discretized Black Scholes model}
\label{sec:blackscholes}

Here we consider a discrete time financial market given by a discretized Black-Scholes model for the asset price $X$. We consider a Barrier Up and Out Call option $H=\prod_{t=0}^T \mathds{1}_{\{X_t<U\}} (X_T-K)^+$ with strike $K=100$ and upper bound $U=105$ such that $K<U$ and $X_0<U$. We set $X_0=100$, $\sigma=0.3$ and $\mu=0$.  We assume to have $250$ trading days per year and a time horizon $T$ of $30$ trading days with daily rebalancing. In particular, for a European contingent claim the time until expiration for the option is $\tau=30/250$. \\
The weight $\lambda$ of the loss function is set to $10000000$ in order to obtain a high superhedging probability. Indeed, we obtain a superhedging probability of $100\%$ on the training set as well as on the test set with an approximate price of $3.73$. By \cite{carassus2007class}, the theoretical superhedging price $\pi^H$ is given by
\[
\pi^H=X_0\left(1-\frac{K}{U}\right)\approx 4.76.
\]
In the Black-Scholes model the asset price process at time $t>0$ has unbounded support and thus the additional error, which arises from the discretization of the probability space, is non-negligible. Although the Barrier option artificially bounds the support of the model, the numerical price still significantly deviates from the theoretical price. \\
Finally, we consider a European call option $H=(X_T-K)^+$ with strike $K=100$ and parameters $X_0=100$, $\sigma=0.1$ and $\mu=0$. By \cite{carassus2007class} the theoretical price of $H$ for the discrete time version of the Black-Scholes model is equal to $X_0=100$. The theoretical price of $H$ in a standard Black-Scholes model in the continuous time is $1.38$, and by following the $\delta$-hedging strategy we superhedge $H$ with a probability of $53.69\%$. Here we consider $\lambda=50$ in \eqref{eq:loss} in order to compare the result to the discretized $\delta$-hedging strategy of the Black-Scholes model, and $\lambda = 10000$ in order to obtain a high superhedging probability. For $\lambda=50$, we obtain an approximate price of $1.41$ and a superhedging probability of $54.43\%$. In Figure \ref{fig:bscall}(a) we compare the $\delta$-hedging strategy with the approximated superhedging strategy obtained for $\lambda=50$. Further, in Figure \ref{fig:bscall}(b) we compare the results for $\lambda=50$ and $\lambda=10000$, respectively. For $\lambda=10000$, the superhedging probability on the test set is $99.79\%$ with an approximated price of $2.18$. 

\begin{figure}[htbp]
\begin{subfigure}[$\delta$-hedging strategy compared to approximate strategy for $\lambda=50$]{\includegraphics[width=0.49\textwidth]{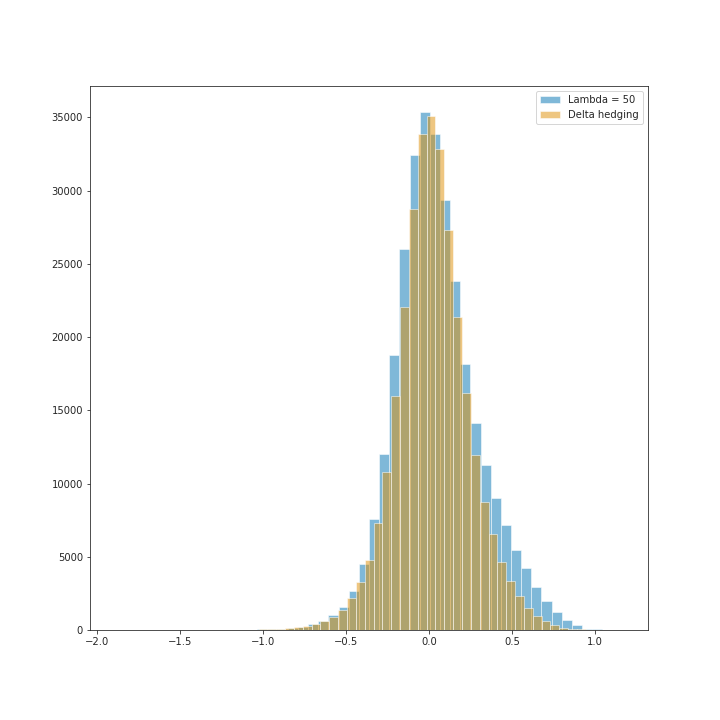}}
\label{subfig:price_weight}
\end{subfigure}
\begin{subfigure}[Approximate strategy for $\lambda=50$ and $\lambda=10000$]{\includegraphics[width=0.49\textwidth]{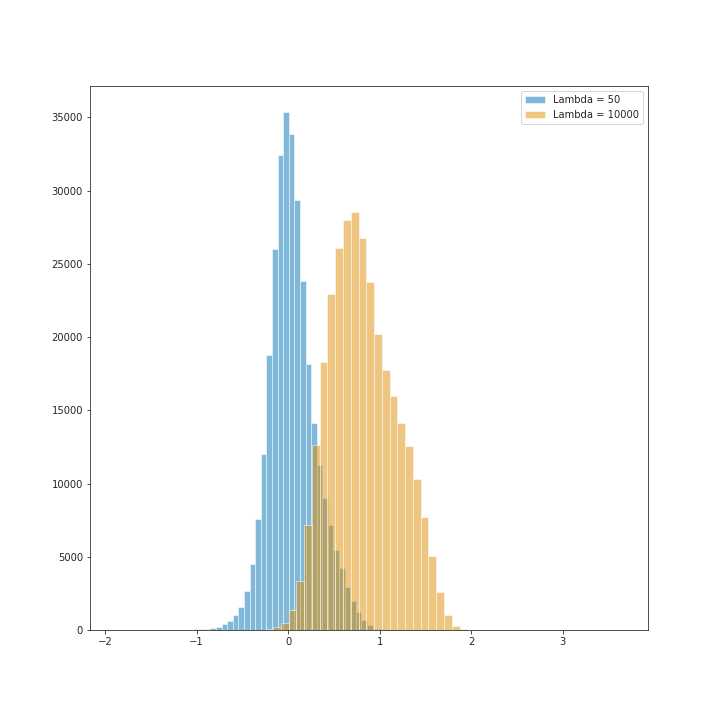}}
\label{subfig:prob_weight}
\end{subfigure}
\caption{Hedging losses for $\lambda=50$, $\lambda=10000$ and for the $\delta$-hedging strategy.}
\label{fig:bscall}
\end{figure}

\subsection{Case $t>0$}
\label{sec:caset}

In this section we approximate the process of consumption by neural networks as proposed in Section \ref{sec:NNapproxt}. We implement the same iterative procedure as introduced in Section \ref{sec:algorithm}. We define $G^{(i)}$ as the difference of the approximated superhedging strategy obtained from Section \ref{sec:case0} and the claim $H$, i.e.,
\[
\resizebox{15cm}{!}{
$G_j^{(i)}(\theta^*):=\left[F^{\theta_u^*}\left(\Yc_0\left(\omega_{j}^{(i)}\right)\right) 
+ \sum_{k=1}^T F^{\theta_{k,\xi}^*}\left(\Yc_{k-1}\left(\omega_{j}^{(i)}\right)\right) \cdot \left(X_k\left(\omega_{j}^{(i)}\right) - X_{k-1}\left(\omega_{j}^{(i)}\right)\right) -H\left(\omega_{j}^{(i)}\right)\right].$}
\]
Then, the empirical loss function is given by
\[
\tilde L_{t,\beta}^{(i)}(\theta_t)=\frac{1}{N}\sum_{j=1}^N-\left|B_t^{\theta_t}\left(\omega_j^{(i)}\right)\right|^2+\beta \max\left\{\left(B_t^{\theta_t}\left(\omega_j^{(i)}\right)-G_j^{(i)}(\theta^*)\right),0\right\},
\]
where $B_t^{\theta_t}$ is given by 
\[
B_t^{\theta_t}\left(\omega_j^{(i)}\right):=\max\left\{F^{\theta_t}\left(\cY_t\left(\omega_j^{(i)}\right)\right),B_{t-1}^{\theta_{t-1}}\left(\omega_j^{(i)}\right)\right\}.
\]
At a local minimum the two terms of $\tilde L$ guarantee that $F^{\theta_t}$ is as big as possible but less or equal than $G(\theta^*)$.\\
Here, we also consider a discretized Black-Scholes model as in Section \ref{sec:blackscholes} but only a time horizon of $10$ trading days and set $X_0=100$, $\sigma=0.1$ and $\mu=0$. For each $t>0$ the neural network consists of two LSTM layers of size $30$ and $20$ respectively, which return sequences, one LSTM layer of size $20$ providing one single value and one dense layer of size $1$. The remaining parameters are chosen as in Section \ref{sec:algorithm}.\\
As in Section \ref{sec:blackscholes}, we compute an approximated superhedging price and strategy for the complete interval. Setting $\lambda=1024$ yields an approximated price of $1.35$ and a superhedging probability of $98.87\%$ for $t=0$. For $t\geq 1$, we choose $\beta=500$ and then obtain a superhedging probability of $98.78\%$. In Figure \ref{fig:price_process}(a), we show trajectories of the approximated superhedging price process generated by this method. Figure \ref{fig:price_process}(b) illustrates paths given by the $\delta$-hedging strategy of the discretized Black-Scholes model. Finally, we plot the difference of the approximated superhedging price processes and the corresponding price process obtained by the $\delta$-hedging strategy in Figure \ref{fig:price_process}(c).

\begin{figure}[htbp]
\begin{subfigure}[Superhedging price process]{\includegraphics[width=0.49\textwidth]{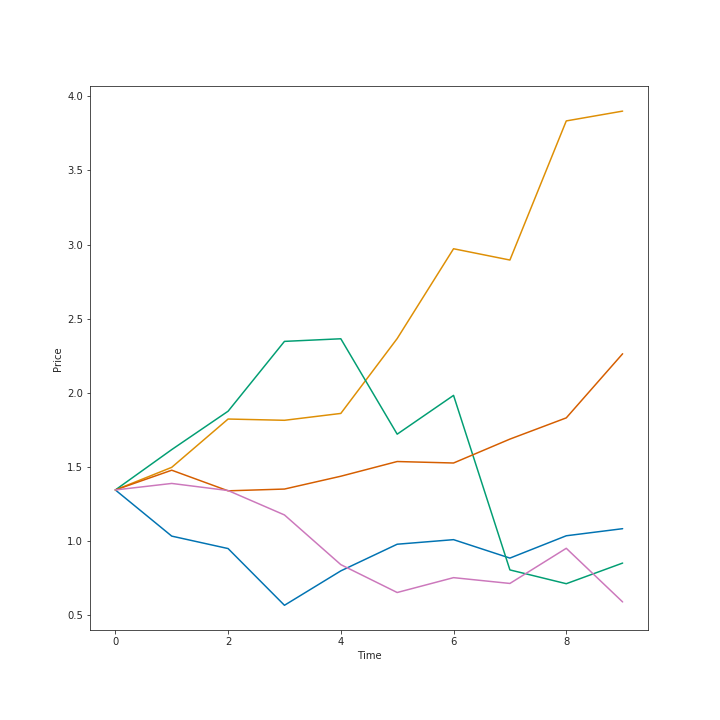}}
\label{subfig:pricepath}
\end{subfigure}
\begin{subfigure}[$\delta$-hedging price process]{\includegraphics[width=0.49\textwidth]{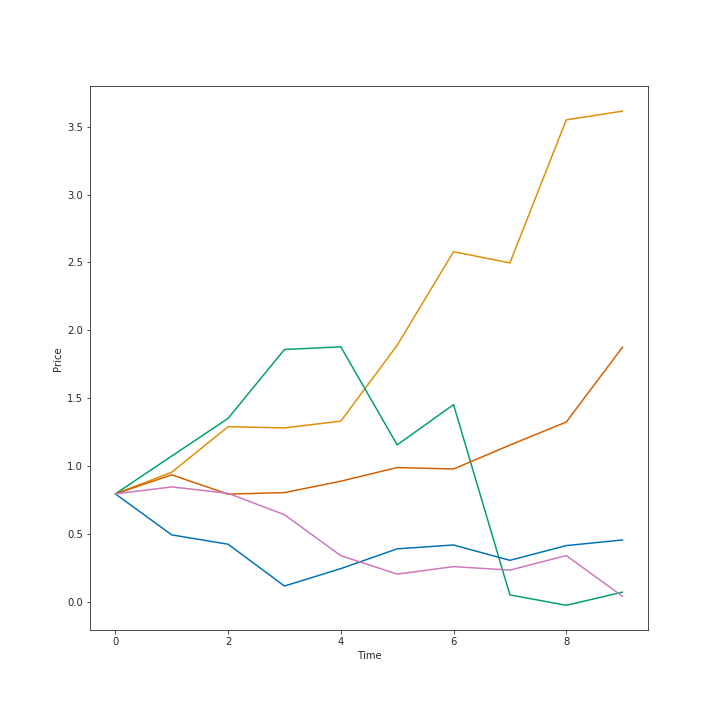}}
\label{subfig:delta_process}
\end{subfigure}
\begin{center}
\begin{subfigure}[Difference of the price processes]{\includegraphics[width=.6\textwidth]{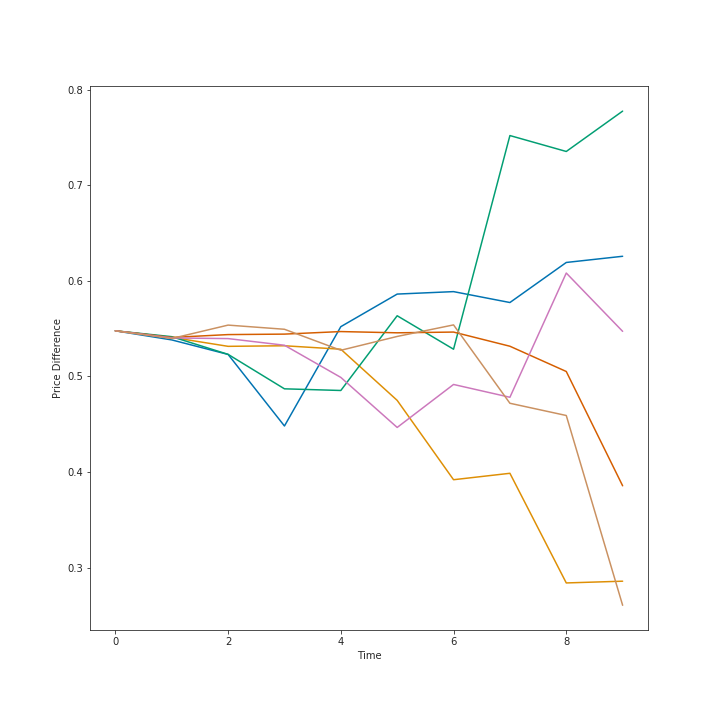}}
\label{subfig:price_diff}	
\end{subfigure}
\end{center}
\caption{Superhedging price process compared to the $\delta$-hedging price process.}

\label{fig:price_process}
\end{figure}

\subsection{Discussion}

In finite market models as in Section \ref{sec:trinomial}, our methodology delivers an approximation of $\alpha$-quantile hedging and approximated superhedging prices with small approximation error. It is also worth noting, that the predicted superhedging price and the corresponding superhedging probability of the training set are consistent with the values on the test set.\\
In contrast, in models in which the price process has unbounded support, our numerical results indicate that the additional error caused by the discretization of the probability space cannot be ignored. However, we obtain consistent results of the $\alpha$-quantile hedging price for the training set and test set. Note also that, in Section \ref{sec:blackscholes}, the Barrier option can be superhedged with $100\%$ on the training and on the test set. \\
A further possible application of our methodology is given by superhedging in a model-free setting on prediction sets, see \cite{bartl2020pathwise}, \cite{bartl2019duality}, \cite{hou2018robust}, where prediction sets offer the opportunity to include beliefs in price developments or to select relevant price paths.

\appendix
\section{Neural Networks}
\label{sec:appendixNN}
For the reader's convenience we recall some results on neural networks. The following result essentially follows from \cite[Theorem~1]{hornik1991}. For completeness we include its proof here. 

\begin{theorem}\label{thm:universalapproxprob} Assume $\sigma$ is bounded and non-constant. Let $f \colon (\R^d,\cB(\bR^d)) \to (\R^m,\cB(\bR^m))$ be a measurable function and $\mu$ be a probability measure on $(\R^d,\mathcal{B}(\R^d))$. Then for any $\varepsilon, \tilde{\varepsilon} > 0$ there exists a neural network $g$ such that 
	\[ 
	\mu(\{x \in \R^d \colon \|f(x)-g(x)\|>\tilde{\varepsilon}\}) < \varepsilon.
	\]
\end{theorem}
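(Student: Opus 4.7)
The plan is to reduce the probability-approximation statement to an $L^2(\mu)$-approximation that follows directly from the density results in \cite{hornik1991}, via truncation and Markov's inequality. The argument proceeds in four natural steps.

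First, I would reduce to the scalar case $m=1$. Writing $f=(f_1,\dots,f_m)$, if for each component $f_i$ I can produce a scalar neural network $g_i$ with $\mu(\{x \colon |f_i(x)-g_i(x)|>\tilde\varepsilon/\sqrt{m}\})<\varepsilon/m$, then stacking the $g_i$ into a single single-hidden-layer vector-valued network $g$ (by concatenating their hidden units and using a block output affine map) yields $\mu(\{\|f-g\|>\tilde\varepsilon\})<\varepsilon$ by subadditivity of $\mu$ and the inequality $\|\cdot\|\le\sqrt{m}\,\|\cdot\|_\infty$ on $\R^m$.

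Second, since $f$ need not be bounded, I truncate: set $f_M:=(f\wedge M)\vee(-M)$. Since $f_M\to f$ pointwise as $M\to\infty$ and $\mu$ is a probability measure, this gives convergence in $\mu$-probability, so one can choose $M=M(\varepsilon,\tilde\varepsilon)$ large enough that $\mu(\{|f-f_M|>\tilde\varepsilon/2\})<\varepsilon/2$. The truncated function $f_M$ is bounded and hence lies in $L^2(\mu)$.

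Third, I invoke the universal approximation theorem of \cite{hornik1991}: for $\sigma$ bounded and non-constant, single-hidden-layer neural networks are dense in $L^p(\mu)$ for any finite measure $\mu$ on $\R^d$ and $1\le p<\infty$ (Theorem~1 together with the remarks of Section~3 in that reference). Consequently, for any $\delta>0$ there exists a neural network $g$ with $\|f_M-g\|_{L^2(\mu)}<\delta$. Markov's inequality then gives
\[
\mu(\{|f_M-g|>\tilde\varepsilon/2\})\le\frac{4\delta^2}{\tilde\varepsilon^2},
\]
which is $<\varepsilon/2$ once $\delta$ is chosen sufficiently small. Combining via the triangle inequality and subadditivity yields $\mu(\{|f-g|>\tilde\varepsilon\})<\varepsilon$.

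The only real obstacle is confirming that Hornik's conclusion applies at the required level of generality -- namely $L^p(\mu)$-density for an arbitrary finite $\mu$ with only bounded and non-constant $\sigma$ (no continuity hypothesis needed) -- and this is exactly what the cited theorem delivers. The vector-output packaging in the first step is clean once one notes that any finite collection of single-hidden-layer scalar networks can be assembled into a single network with a wider hidden layer and a block output map, so no extra expressive power beyond the scalar case is required.
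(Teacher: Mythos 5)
Your proof is correct and follows essentially the same route as the paper's: truncate $f$ to make it $L^p(\mu)$-integrable, invoke Hornik's Theorem~1 on $L^p(\mu)$-density for bounded non-constant $\sigma$, apply Markov's inequality, and combine the two errors via the triangle inequality and subadditivity. The only differences are cosmetic — you work in $L^2(\mu)$ rather than $L^1(\mu)$, use coordinate-wise hard truncation $(f\wedge M)\vee(-M)$ rather than the indicator truncation $\mathds{1}_{\{\|f\|\le C\}}f$, and make explicit the (harmless) reduction to scalar output that the paper leaves implicit.
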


\begin{proof}
	Let $\varepsilon, \tilde{\varepsilon} > 0$  be given and let $C>0$ satisfy that
	\begin{equation}\label{eq:eps1}
	\mu(\{x \in \R^d \colon \|f(x)\|>C\}) < \frac{\varepsilon}{2}.
	\end{equation}
	Define $\tilde{f} = \mathds{1}_{\{x \in \R^d \colon \|f(x)\|\leq C\}} f$. Then $\tilde{f} \in L^1(\R^d,\mu)$ and hence \cite[Theorem~1]{hornik1991} shows that there exists a neural network $g$ with 
	\[
	\int_{\R^d} \|\tilde{f}(x)-g(x)\| \mu(d x) < \frac{\varepsilon \tilde{\varepsilon}}{4}.
	\]
	Markov's inequality thus proves that 
	\begin{equation}\label{eq:eps2}
	\mu(\{x \in \R^d \colon \|\tilde{f}(x)-g(x)\|>\frac{\tilde{\varepsilon}}{2}\}) \leq \frac{2}{\tilde{\varepsilon}}\int_{\R^d} \|\tilde{f}(x)-g(x)\| \mu(d x) < \frac{\varepsilon}{2}.
	\end{equation}
	Combining \eqref{eq:eps1} and \eqref{eq:eps2} and recalling $f-\tilde f=f\mathds{1}_{\{x\in\bR^d:\|f(x)\|> C\}}$ yields
	\[ \begin{aligned}
	\mu\left(\{x \in \R^d \colon \|f(x)-g(x)\|>\tilde{\varepsilon}\}\right) &  \leq 	\mu\left(\{x \in \R^d \colon \|f(x)-\tilde{f}(x)\|>\frac{\tilde{\varepsilon}}{2}\} \cup \{x \in \R^d \colon \|\tilde{f}(x)-g(x)\|>\frac{\tilde{\varepsilon}}{2}\}\right)
	\\ & < \mu\left(\{x \in \R^d \colon \|f(x)\|> C\} \right) + \frac{\varepsilon}{2} < \varepsilon.
	\end{aligned}\]
	
\end{proof}

{\small 
\bibliographystyle{acm}
\bibliography{dl_optimalstopping}

\begin{thebibliography}{10}

\bibitem{bartl2020pathwise}
{\sc Bartl, D., Kupper, M., and Neufeld, A.}
\newblock Pathwise superhedging on prediction sets.
\newblock {\em Finance and Stochastics 24}, 1 (2020), 215--248.

\bibitem{bartl2019duality}
{\sc Bartl, D., Kupper, M., Pr{\"o}mel, D.~J., and Tangpi, L.}
\newblock Duality for pathwise superhedging in continuous time.
\newblock {\em Finance and Stochastics 23}, 3 (2019), 697--728.

\bibitem{becker2019deep}
{\sc Becker, S., Cheridito, P., and Jentzen, A.}
\newblock Deep optimal stopping.
\newblock {\em Journal of Machine Learning Research 20\/} (2019), 74.

\bibitem{buehler2018}
{\sc Buehler, H., Gonon, L., Teichmann, J., and Wood, B.}
\newblock Deep hedging.
\newblock {\em Quantitative Finance 19}, 8 (2019), 1271--1291.

\bibitem{burzoni2017model}
{\sc Burzoni, M., Frittelli, M., Maggis, M., et~al.}
\newblock Model-free superhedging duality.
\newblock {\em Annals of Applied Probability 27}, 3 (2017), 1452--1477.

\bibitem{campischachermayer}
{\sc Campi, L., and Schachermayer, W.}
\newblock A super-replication theorem in kabanov's model of transaction costs.
\newblock {\em Finance and Stochastics 10}, 4 (2006), 579--596.

\bibitem{carassus2007class}
{\sc Carassus, L., Gobet, E., and Temam, E.}
\newblock A class of financial products and models where super-replication prices are explicit.
\newblock In {\em Stochastic Processes and Applications to Mathematical Finance}. World Scientific, 2007, pp.~67--84.

\bibitem{carassus2019robust}
{\sc Carassus, L., Ob{\l}{\'o}j, J., and Wiesel, J.}
\newblock The robust superreplication problem: a dynamic approach.
\newblock {\em SIAM Journal on Financial Mathematics 10}, 4 (2019), 907--941.

\bibitem{carassus2010super}
{\sc Carassus, L., and Vargiolu, T.}
\newblock Super-replication price for asset prices having bounded increments in discrete time.

\bibitem{cuchiero2020generative}
{\sc Cuchiero, C., Khosrawi, W., and Teichmann, J.}
\newblock A generative adversarial network approach to calibration of local stochastic volatility models.
\newblock {\em arXiv preprint arXiv:2005.02505\/} (2020).

\bibitem{cvitanic}
{\sc Cvitani{\'c}, J., and Karatzas, I.}
\newblock Hedging and portfolio optimization under transaction costs: A martingale approach 1 2.
\newblock {\em Mathematical Finance 6}, 2 (1996), 133--165.

\bibitem{quenez}
{\sc El~Karoui, N., and Quenez, M.-C.}
\newblock Dynamic programming and pricing of contingent claims in an incomplete market.
\newblock {\em SIAM Journal on Control and Optimization 33}, 1 (1995), 29--66.

\bibitem{foellmer99}
{\sc F\"ollmer, H., and Leukert, P.}
\newblock Quantile hedging.
\newblock {\em Finance and Stochastics 3\/} (1999), 251--273.

\bibitem{follmerschied}
{\sc F{\"o}llmer, H., and Schied, A.}
\newblock {\em Stochastic finance: an introduction in discrete time}, 4th rev. ed.~ed.
\newblock Walter de Gruyter, 2016.

\bibitem{hochreiter1997lstm}
{\sc Hochreiter, S., and Schmidhuber, J.}
\newblock Lstm can solve hard long time lag problems.
\newblock {\em Advances in neural information processing systems\/} (1997), 473--479.

\bibitem{hornik1991}
{\sc Hornik, K.}
\newblock {Approximation capabilities of muitilayer feedforward networks}.
\newblock {\em Neural Networks 4}, 1989 (1991), 251--257.

\bibitem{hou2018robust}
{\sc Hou, Z., and Ob{\l}{\'o}j, J.}
\newblock Robust pricing--hedging dualities in continuous time.
\newblock {\em Finance and Stochastics 22}, 3 (2018), 511--567.

\bibitem{kabanovlast}
{\sc Kabanov, Y.~M., and Last, G.}
\newblock Hedging under transaction costs in currency markets: a continuous-time model.
\newblock {\em Mathematical Finance 12}, 1 (2002), 63--70.

\bibitem{kingma2014adam}
{\sc Kingma, D.~P., and Ba, J.}
\newblock Adam: A method for stochastic optimization.
\newblock {\em arXiv preprint arXiv:1412.6980\/} (2014).

\bibitem{kramkovduality}
{\sc Kramkov, D.~O.}
\newblock Optional decomposition of supermartingales and hedging contingent claims in incomplete security markets.
\newblock {\em Probability Theory and Related Fields 105}, 4 (1996), 459--479.

\bibitem{nutz2015robust}
{\sc Nutz, M.}
\newblock Robust superhedging with jumps and diffusion.
\newblock {\em Stochastic Processes and their Applications 125}, 12 (2015), 4543--4555.

\bibitem{nutz2012superhedging}
{\sc Nutz, M., and Soner, H.~M.}
\newblock Superhedging and dynamic risk measures under volatility uncertainty.
\newblock {\em SIAM Journal on Control and Optimization 50}, 4 (2012), 2065--2089.

\bibitem{obloj2021robust}
{\sc Ob{\l}{\'o}j, J., and Wiesel, J.}
\newblock Robust estimation of superhedging prices.
\newblock {\em The Annals of Statistics 49}, 1 (2021), 508--530.

\bibitem{ruf2020neural}
{\sc Ruf, J., and Wang, W.}
\newblock Neural networks for option pricing and hedging: a literature review.
\newblock {\em Journal of Computational Finance, Forthcoming\/} (2020).

\bibitem{schachermayersuperreplication}
{\sc Schachermayer, W.}
\newblock The super-replication theorem under proportional transaction costs revisited.
\newblock {\em Mathematics and Financial Economics 8}, 4 (2014), 383--398.

\bibitem{soner1995there}
{\sc Soner, H.~M., Shreve, S.~E., Cvitanic, J., et~al.}
\newblock There is no nontrivial hedging portfolio for option pricing with transaction costs.
\newblock {\em The Annals of Applied Probability 5}, 2 (1995), 327--355.

\bibitem{touzi2014martingale}
{\sc Touzi, N.}
\newblock Martingale inequalities, optimal martingale transport, and robust superhedging.
\newblock {\em ESAIM: Proceedings and Surveys 45\/} (2014), 32--47.

\end{thebibliography}
}
\end{document}